\documentclass[11pt]{article}
\usepackage{fullpage}
\usepackage{url}
\usepackage[colorlinks=true,linkcolor=Emerald,citecolor=RoyalBlue]{hyperref}
\usepackage{amsmath,nccmath, amsfonts,amsthm,multirow,mdframed,amssymb}
\usepackage{thmtools,thm-restate}

\usepackage{times}
\usepackage{paralist}
\usepackage{graphicx}
\usepackage{floatpag}
\usepackage{float}
\DeclareMathOperator{\supp}{supp}
\usepackage{bbold}
\usepackage{enumitem}
\usepackage{subcaption} 
\usepackage{soul}
\usepackage{comment}
\usepackage{calc}
\usepackage{upgreek}
\usepackage{mathtools}
\usepackage{color}
\usepackage[dvipsnames]{xcolor}
\usepackage{xspace}
\usepackage{tcolorbox}
\usepackage{cleveref}
\usepackage{stackengine}
\usepackage{tocloft}
\usepackage{physics}
\usepackage{quantikz}

\stackMath
\allowdisplaybreaks

\newtheorem{theorem}{Theorem}
\newtheorem*{theorem*}{Theorem}
\newtheorem{lemma}[theorem]{Lemma}
\newtheorem{corollary}[theorem]{Corollary}
\newtheorem{fact}[theorem]{Fact}

\newtheorem{remark}{Remark}[section]

\newcommand{\poly}[1]{\mathrm{poly}\left(#1\right)}
\newcommand{\pr}[1]{\mathrm{Pr}\left[#1\right]}

\newcommand{\lr}[1]{\left(#1\right)}
\newcommand{\EE}[1]{\mathbb{E}\left[#1\right]}
\newcommand{\E}{\mathbb{E}}
\newcommand{\Var}{\mathrm{Var}}
\newcommand{\C}{\mathbb{C}}

\newcommand{\R}{\mathbb{R}}

\newcommand{\cG}{\mathcal{G}}

\newcommand{\cJ}{\mathcal{J}}

\newcommand{\cM}{\mathcal{M}}
\newcommand{\cN}{\mathcal{N}}

\newcommand{\cP}{\mathcal{P}}

\newcommand{\cS}{\mathcal{S}}

\newcommand{\wh}{\widehat}

\newcommand{\sep}{\mathrm{sep}}
\newcommand{\mix}{\mathrm{mix}}

\newcommand{\dist}{\mathrm{distinct}}
\newcommand{\rep}{\mathrm{rep}}
\newcommand{\conn}{\mathrm{conn}}

\makeatletter
\let\c@fconjecture\c@conjecture
\makeatother

\makeatletter
\let\c@fconj\c@conj
\makeatother

\title{\fontsize{15.5}{19}\selectfont A rigorous quasipolynomial-time classical algorithm \\for SYK thermal expectations}
\author{Alexander Zlokapa\\
\textit{Center for Theoretical Physics --- a Leinweber Institute, MIT}}
\date{}

\begin{document}
\maketitle
\thispagestyle{empty}
\begin{abstract}
    One of the most natural tasks in quantum simulation is to estimate a local observable of a system in thermal equilibrium, i.e., in its Gibbs state. This task is known to be BQP-complete at asymptotically small temperatures due to Rouz\'e, Fran\c ca and Alhambra (STOC'25), but at constant temperatures it remains unknown if quantum advantage is possible. The Sachdev-Ye-Kitaev (SYK) model, a local random Hamiltonian family with strongly interacting fermions, has been considered a promising candidate for such quantum advantage. A recent line of work initiated by Hastings and O'Donnell (STOC'22) showed that constant-temperature Gibbs states are poorly described by Gaussian states and require polynomial quantum circuit complexity. Beyond these rigorous results, common approaches such as tensor networks, quantum Monte Carlo path integration, and message-passing-like algorithms are obstructed by the model's large entanglement, sign problem, and mean-field interactions. Despite these barriers, we show the following results.
    \begin{enumerate}[label=(\roman*)]
        \item We give a deterministic classical algorithm to estimate SYK local thermal expectations to arbitrarily high accuracy for sufficiently large constant temperature. Our algorithm is quasipolynomial-time at any inverse polynomial error, which suffices to resolve instance-to-instance fluctuations.
        \item We rule out any phase transition above this temperature by controlling the locations of complex zeros of the partition function. This makes progress towards proving the SYK phase diagram conjectured from the replica trick in physics.
    \end{enumerate}
    Our main technical contribution is a new cluster expansion for disordered quantum many-body systems. The use of cluster expansions to obtain classical algorithms for quantum many-body systems began with Harrow, Mehraban and Soleimanifar (STOC'20), but such expansions fail for all-to-all interactions. Classically, sufficient control over a mean-field cluster expansion to show a result analogous to ours was only recently achieved by Bencs et al. (STOC'26) for the Sherrington-Kirkpatrick model; such expansions fail for noncommuting models. Similarly, standard tools from random matrix theory have so far failed to rigorously compute basic quantities such as the annealed SYK free energy. We thus expect our new cluster expansion, based on Wick pairs, to be broadly useful for disordered quantum models.
\end{abstract}

% arXiv abstract
% Estimating local observables in Gibbs states is a central problem in quantum simulation. While this task is BQP-complete at asymptotically low temperatures, the possibility of quantum advantage at constant temperature remains open. The Sachdev-Ye-Kitaev (SYK) model is a natural candidate: at any constant temperature, its Gibbs states have polynomial quantum circuit complexity and are not described by Gaussian states. Rigorous analyses of the SYK model are difficult due to the failure of known techniques using random matrix theory, cluster expansions, and rigorous formulations of the quantum path integral and replica trick. Despite this, we give a rigorous proof of a quasipolynomial-time classical algorithm that estimates SYK local thermal expectations at sufficiently high constant temperature. Our result introduces a new Wick-pair cluster expansion that we expect to be broadly useful for disordered quantum many-body systems.

\maketitle

\section{Introduction}

One of the most widely anticipated applications of quantum computers is quantum simulation, measuring either static or dynamic properties of systems found in chemistry and physics. The theoretical study of individual models, such as the Fermi-Hubbard model, is often beyond the reach of analytical tools. Nonetheless, one would like evidence that quantum computers can provide advantage on ``typical'' problem instances. Understanding the average-case complexity of a distribution over problem instances is a subtle task, as it can differ greatly from the worst-case complexity; classically, the hardness of random $K$-SAT instances has a rich history~\cite{kirkpatrick1994critical,monasson1999determining,achlioptas2006solution,mezard2005clustering,coja2010better,hetterich2016analysing}. In the quantum setting, most known average-case complexity results are for tasks (e.g., sampling measurement outcomes) over problem instances (e.g., random circuits) that do not necessarily resemble the natural tasks and problem instances that would arise in quantum simulation~\cite{hangleiter2023computational}.

Consequently, most evidence of classical hardness for physical systems lies in heuristic qualities that obstruct common classical algorithms. If a system is highly entangled, it may be difficult to evaluate via matrix product states or tensor networks~\cite{vidal2003efficient}; if a path integral has a sign problem, it may be difficult to apply Monte Carlo methods~\cite{troyer2005computational}. In this work, we study a random Hamiltonian ensemble that has both of these properties but unusually has \emph{far stronger} evidence of being classically nontrivial as well, due to a line of work beginning with~\cite{hastings2022optimizing}.

We study the Sachdev-Ye-Kitaev (SYK) model, a $q$-local random Hamiltonian ensemble defined in terms of Majorana fermions, operators $\{\psi_i\}_{i=1}^n$ satisfying $\{\psi_i, \psi_j\} = \delta_{ij}$:
\begin{align}\label{eq:hsyk}
    H = \sum_{I \in \binom{[n]}{q}} J_I \psi_I, \quad J_I \sim \cN\lr{0, \frac{(q-1)!}{n^{q-1}}}, \quad \psi_I = i^{q/2} \psi_{i_1}\cdots\psi_{i_q}, \quad q \geq 4 \text{ even}.
\end{align}
The SYK model has been extensively studied in condensed matter and high-energy physics~\cite{sachdev1993gapless,parcollet1999non,georges2000mean,georges2001quantum,kitaev2015SYK,maldacena2016remarks,maldacena2016bound,kitaev2018soft}. Moreover, its Gibbs state is expected to be quantumly easy to prepare due to non-rigorous arguments from physics~\cite{almheiri2024universal,zhang2019evaporation,maldacena2021syk,schuster2025cooling}. The SYK model has been considered a candidate for quantum advantage~\cite{anschuetz2025strongly} due to properties such as a sign problem~\cite{pan2021yukawa}, large entanglement~\cite{liu2018quantum} and magic~\cite{bettaque2026magic}. More formally, \cite{hastings2022optimizing} showed the failure of Gaussian states to obtain good energies (i.e., typical energies of constant-temperature Gibbs states). The SYK thermal state thus lives in a regime where Hartree-Fock methods fail to give good approximations~\cite{szabo2012modern}, which has been considered a promising avenue for quantum advantage~\cite{mcardle2020quantum}. Subsequent work showed the following properties that suggest classical hardness at constant temperature. 
(Precise statements are proven in Appendix~\ref{sec:obstruct}; they all follow straightforwardly from known results, although we report more robust versions of these results.)

\begin{remark}[Obstructions to classical algorithms for SYK]\label{rem:obstruct}
    For any constant inverse temperature $\beta > 0$, there exists a constant $\epsilon > 0$ such that the SYK Gibbs state $\rho_\beta$ has the following properties:
    \begin{enumerate}
        \item any state within $\epsilon$ trace distance of a purification of $\rho_\beta$ requires $\Omega(n^{1+q/2}/\log n)$ two-qubit gates to prepare~\cite{anschuetz2025strongly},
        \item no state within $\epsilon$ trace distance of $\rho_\beta$ can be expressed as a convex combination of superpositions of fewer than $\exp[o(n^{1/4})]$ orthogonal Gaussian states~\cite{hastings2023field}, and
        \item local thermal expectations have instance-to-instance fluctuations of magnitude $\Omega(1/\poly n)$.
    \end{enumerate}
\end{remark}

In this work, we study the task of estimating local observables of the SYK model in thermal equilibrium. We normalize the $n$-fermion Hamiltonian such that $\norm{H} = \Theta(n)$, ensuring each site has a constant energy density. Weak interactions between this system and an environment will eventually equilibrate to a constant-temperature Gibbs state, $\rho_\beta = e^{-\beta H} / \Tr e^{-\beta H}$ for $\beta = \Theta(1)$. The independence of $\beta$ from $n$ ensures that temperature does not change with system size, yielding a thermodynamically reasonable Gibbs state. The physically relevant static equilibrium properties of the system are then described by local expectations: for Hermitian $O$ acting on at most $O(1)$ sites, one wishes to estimate $\Tr(O \rho_\beta)$ up to inverse polynomial error. Understanding if this task admits quantum advantage is a central open question in quantum simulation, in part due to the recent development of quantum algorithms to prepare Gibbs states~\cite{temme2011quantum,zhang2023dissipative,rouze2024optimal,jiang2024quantum,gilyen2024quantum,ding2025end,ding2025efficient,chen2025efficient,chen2025quantum}. Adjacent results, corresponding to less physically motivated tasks, suggest the answer is in the affirmative: it is known that sampling global measurement outcomes of local Hamiltonians at constant temperature can simulate IQP circuit sampling~\cite{bergamaschi2024quantum,rajakumar2024gibbs}, and that estimating local expectations of local Hamiltonians at $\beta=\poly{n}$ is BQP-complete~\cite{rouze2025efficient}. However, quantum advantage via constant-temperature local expectations has remained elusive.

The SYK model---traditionally analyzed non-rigorously in the physics literature with quantum path integrals and the replica trick~\cite{sachdev2023quantum}---is difficult to study rigorously for a variety of reasons outlined in Sec.~\ref{sec:prior}. Nonetheless, beyond the rigorous obstructions described in Remark~\ref{rem:obstruct}, physics arguments have also fallen short of suggesting efficient classical algorithms for SYK thermal expectations. The sign problem, all-to-all connectivity, and inability to control errors in expansions have obstructed quantum message-passing-like algorithms~\cite{bapst2013quantum}. Only very recently~\cite{zlokapa2026syk} has a classical algorithm similar to this work been studied, albeit using non-rigorous replica computations and path integrals. In contrast, \emph{classical} mean-field models were characterized by a long history of algorithms that developed initially heuristically from physics arguments and ultimately rigorously by formalizing the physics intuition. Message-passing algorithms to estimate thermal expectations were non-rigorously studied as early as~\cite{thouless1977solution,plefka2002modified} and rigorously in~\cite{talagrand2010mean,bolthausen2014iterative,chen2021convergence,el2022sampling}.

Independent of motivations within quantum computing, the rigorous study of disordered mean-field models has been an active topic in spin glass theory and average-case complexity. In recent decades, a deep connection has emerged between phase transitions in spin glasses and the computational complexity of tasks such as sampling, counting, and estimating observables. Perhaps the most studied model is the Sherrington-Kirkpatrick model~\cite{sherrington1975solvable}, $H_\mathrm{SK} = \sum_{i < j} A_{ij} Z_i Z_j$ for $A_{ij} \sim \cN(0, 1/n)$. The model was understood at a non-rigorous level via the replica trick in statistical physics~\cite{parisi1979infinite,mezard1987spin}; rigorous upper and lower bounds for tasks including optimization and Gibbs sampling were shown only much later~\cite{montanari2025optimization,el2022sampling}. Only very recently, in the technically impressive work of~\cite{bencs2025zeros}, has an efficient algorithm been provided to estimate the Sherrington-Kirkpatrick partition function in the replica symmetric phase. As this task is related to estimating local thermal expectations, our work is closely related to~\cite{bencs2025zeros}, although---due to the variety of proof obstructions discussed in Sec.~\ref{sec:prior} that arise in the quantum case---our techniques are very different. For example, the key technical contribution of~\cite{bencs2025zeros} was to control the locations of the complex zeros of the partition function via a lengthy cluster expansion based on the famous result of~\cite{aizenman1987some}. These cluster expansions break down for noncommuting Hamiltonians, and thus we will need to develop new cluster expansions for mean-field quantum models.

\subsection{Summary of results}

Our first contribution shows a region at nonzero constant temperature where the SYK partition function does not contain complex zeros, i.e., complex $\beta$ such that $Z(\beta) = 0$. In physics language, this rigorously rules out the presence of a phase transition above a particular constant temperature~\cite{lee1952statistical,fisher1965nature}.

\begin{theorem}[Zero-free disk of SYK partition function, informal]\label{thm:main1}
With probability $1-o(1)$ over the disorder of $q$-local SYK Hamiltonian $H$, the partition function $Z(\beta) = \Tr e^{-\beta H}$ satisfies $Z(\beta) \neq 0$ for all
\begin{align}
    |\beta| \leq \frac{2^{q/2}}{q}\cdot \frac{19}{40e^{1/4}}\lr{\sqrt{1+\frac{4}{e}}-1}.
\end{align}
\end{theorem}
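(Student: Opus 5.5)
We will show that with high probability $\log Z(\beta)$, or rather $\log\bigl(Z(\beta)/2^{n/2}\bigr)$, admits a convergent expansion that is analytic in the disk. We write $Z(\beta)/\Tr \mathbb{1} = \E_\mathrm{tr}[e^{-\beta H}] = \sum_{m\ge 0} \frac{(-\beta)^m}{m!}\,\mathrm{tr}(H^m)$, with $\mathrm{tr}$ the normalized trace. We then expand $H^m = \sum_{I_1,\dots,I_m} J_{I_1}\cdots J_{I_m}\psi_{I_1}\cdots\psi_{I_m}$. Each word has normalized trace $\pm 1$ or $0$, and it is nonzero only if every Majorana index appears an even number of times.

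The idea is to organize these words as a polymer gas. We group the multiset $\{I_1,\dots,I_m\}$ into \emph{Wick-pair clusters}: connected components of the hypergraph whose edges are the $I_j$, where coinciding hyperedges are merged. Terms whose clusters touch disjoint fermion sets factorize. This uses the fact that products of $\psi_I$ on disjoint supports commute up to a sign, since $q$ is even, so the normalized trace is multiplicative. Hence
\begin{align}
\frac{Z(\beta)}{2^{n/2}} = \sum_{\{\gamma_1,\dots,\gamma_k\}\ \text{compatible}} \prod_j w_{\gamma_j}(\beta),
\end{align}
where each polymer $\gamma$ is a connected set of hyperedges with multiplicities, and $w_\gamma$ collects the interleavings of those operators together with the factor $\beta^{|\gamma|}/|\gamma|!$ appropriately split. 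The ordering combinatorics are handled by the exponential-generating-function structure, since interleavings of disjoint clusters factor multinomially.

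Next we apply the Koteck\'y--Preiss criterion: if $\sum_{\gamma' \not\sim \gamma} |w_{\gamma'}| e^{a(\gamma')} \le a(\gamma)$ with $a(\gamma)=c|\mathrm{supp}\,\gamma|$, then the cluster expansion converges and $Z\neq 0$. Bounding $|w_\gamma|$ is the crux. Deterministically we would only have $|J_I|\lesssim \sqrt{\log n}\,n^{-(q-1)/2}$, which is too weak. Instead we control $\E|w_\gamma|^2$, or directly the sum over polymers containing a fixed site, in expectation. Nonvanishing trace forces pairing structure. The dominant contributions come from hyperedges appearing exactly twice (Wick pairs), each contributing $J_I^2\approx (q-1)!/n^{q-1}$. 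A tree of $k$ pairs through a fixed site has about $\binom{n}{q-1}^{k}$ choices, giving $(|\beta|^2/q\cdot\text{const})^k$. The factor $2^{-q/2}$ enters via the normalization $\psi_i^2 = 1/2$, so each $\psi_I$ has norm $2^{-q/2}$; this explains the $2^{q/2}/q$ scaling of the radius. The constants $e^{1/4}$ and $\sqrt{1+4/e}-1$ should come out of optimizing $c$ in the KP inequality, which means solving a quadratic in $e^{c}$ after bounding tree counts by Cayley-type $e^k$ factors.

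Higher multiplicities and cycles (non-tree polymers) carry extra factors of $1/n$ and must be summed. We then pass from bounds in expectation to a probability $1-o(1)$ statement. The plan is to apply Markov's inequality to the random quantity $\sup_{x}\sum_{\gamma\ni x}|w_\gamma|e^{a(\gamma)}$ over a slightly larger disk, or to truncate at polymer size $\log n$ and use a union bound over the $n$ sites. The main obstacle is precisely this step. $|w_\gamma|$ is not a polynomial with controlled sign, and the absolute values destroy Wick cancellations, so unpaired $J$'s must be treated differently. What I would try first is to take disorder expectations only after the polymer decomposition: I would define polymers so that every hyperedge within a polymer appears at least twice, which the trace condition nearly enforces. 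I would then bound $\E\prod |J_I|^{m_I}$ by Gaussian moments $\prod (m_I-1)!!\,((q-1)!/n^{q-1})^{m_I/2}$, absorbing the double factorials into $1/m!$.
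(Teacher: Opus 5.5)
\textbf{There is a genuine gap.} It sits at the step you flag as the main obstacle, and it breaks the quenched strategy outright rather than being a technical nuisance. Koteck\'y--Preiss needs $\sum_{\gamma\ni x}|w_\gamma|e^{a(\gamma)}$ to be $O(1)$ for the given instance, but this $\ell^1$ sum is polynomially large.

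The trace condition only forces every Majorana \emph{index} to be covered an even number of times; it does not force every hyperedge to repeat. Take disjoint $A,B,C$ with $|A|=|B|=|C|=q/2$ and set $I_1=A\cup B$, $I_2=A\cup C$, $I_3=B\cup C$. Then $\psi_{I_1}\psi_{I_2}\psi_{I_3}$ has nonzero trace, although no hyperedge repeats.
\begin{itemize}
\item Through a fixed site there are order $n^{3q/2-1}$ such triples.
\item Each has $|J_{I_1}J_{I_2}J_{I_3}|$ typically of order $\sigma^3\asymp n^{-3(q-1)/2}$.
\item So these triples alone contribute order $|\beta|^3 n^{1/2}$ to the KP sum. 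More generally, $m$ distinct hyperedges covering each vertex exactly twice contribute order $|\beta|^m n^{m/2-1}$.
\end{itemize}
Hence your heuristic that cycles carry extra factors of $1/n$ is false for $|w_\gamma|$; it holds only after the signs of the $J_I$ are averaged. None of your proposed fixes repairs this:
\begin{itemize}
\item Gaussian moment bounds on $\E\prod|J_I|^{m_I}$ do not help, because a singly used hyperedge costs $\E|J_I|\asymp\sigma$, not $\sigma^2$.
\item Restricting polymers to repeated hyperedges discards nonvanishing terms of $Z$.
\item Bounding $\E|w_\gamma|^2$ polymer by polymer still leaves an $\ell^1$ sum with no cancellation.
\item Truncation at size $\log n$, union bounds, and Markov all fail, since the divergence already appears at $m=3$ and the quantity you would apply Markov to has mean of order $n^{1/2}$.
\end{itemize}

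\textbf{The missing idea is to never expand a single instance.} The paper expands the deterministic quantities $\E\wh Z(\beta)$ and $\E[\wh Z(\beta)\wh Z(\bar\beta)]$. There Wick's theorem forces Gaussian hyperedges to occur in pairs, so the unpaired even covers above vanish and all activities are deterministic. This is also where the cancellations you need are used: second moments are taken of $Z$ itself, not of individual $|w_\gamma|$. The argument then runs as follows.
\begin{enumerate}
\item Koteck\'y--Preiss on the annealed Wick-pair gas gives $\E\wh Z\neq0$ on a disk.
\item A two-replica gas, whose polymers may also be joined by mixed Wick pairs, gives via the pinned Koteck\'y--Preiss corollary $\E|\wh Z-\E\wh Z|^2/|\E\wh Z|^2=O(n^{1-q/2})$ for $|\beta|\le C/q$. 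Only polymers containing a mixed pair change the ratio, and these are suppressed. For a distinct-set pair, $\tr(\psi_{K_1}O_b)\neq0$ forces the rest of each replica to split $K_1$ into at most $q/2$ even blocks, so one pays $n^{q/2}\sigma^2$ instead of $n^{q}\sigma^2$. A repeated-set pair loses the $n^{q-1}$ freedom of one pair, giving $O(n^{2-q})$.
\item Only then is the single instance handled, by analyticity. Apply Cauchy's formula to $F=(\wh Z-\E\wh Z)/\E\wh Z$ on a circle of radius between $\frac{19}{20}C/q$ and $C/q$, then Jensen and Markov. This gives $\sup_{|\beta|\le r}|F|<1/2$, hence $\wh Z\neq0$, with probability $1-O(n^{1-q/2})$.
\end{enumerate}
The factor $19/20$ in the stated radius is exactly this slack. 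The constant $\sqrt{1+4/e}-1$ comes from solving $e^{1/2}C^2/2^q+e^{1/4}C/2^{q/2}=1/e$, which keeps the argument of the tree function $T$ below $1/e$; it does not come from optimizing $c$.
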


To the best of our knowledge, this is the first rigorous control of the complex zeros of a mean-field quantum spin glass. Our proof uses a cluster expansion that directly exploits the noncommutative properties of the SYK model. This departs from prior works in the quantum setting, where the difficulty of analyzing noncommuting Hamiltonian terms limited the use of cluster expansions to define polymers that grow with the support of combinations of Hamiltonian terms, yielding results that only encompass bounded-degree models~\cite{harrow2020classical,mann2021efficient,yao2022polynomial,mann2024algorithmic,zlokapa2026syk}. In the mean-field setting, these polymers do not suffice. Moreover, one cannot immediately borrow cluster expansions used in the classical mean-field setting: the cycle expansion of~\cite{aizenman1987some,bencs2025zeros} breaks down due to noncommutation. Indeed, far simpler properties than zero-freeness also become subtle to analyze in the quantum setting. For the SYK model, it is currently not known how to rigorously compute the annealed partition function, which in comparison is trivial (and zero-free) for classical spin glasses (see \eqref{eq:gmgf}). Our proof technique circumvents this issue by establishing zero-freeness of the annealed free energy via a cluster expansion whose polymers are constructed from Wick pairs.

To exploit the noncommutativity of the SYK Hamiltonian, one can use the tools developed in \cite{anschuetz2025strongly} to show the self-averaging property of the SYK model. In particular, \cite{anschuetz2025strongly} showed that the SYK partition function satisfies
\begin{align}\label{eq:selfavg}
	\left|\frac{1}{n}\E \log Z(\beta) - \frac{1}{n} \log \E \, Z(\beta)\right| = O(n^{-q/2}),
\end{align}
suggesting that the model is replica symmetric for all $\beta$.
Importantly, this self-averaging result does not rule out the presence of a phase transition, unlike Theorem~\ref{thm:main1}. To understand this, we briefly recall the Lee-Yang theory of phase transitions for external fields~\cite{lee1952statistical}, as extended by Fisher for complex-valued temperatures~\cite{fisher1965nature}. At real temperature $1/\beta$, the partition function $Z(\beta)$ is a sum of positive numbers and thus satisfies $Z(\beta) \neq 0$. A phase transition occurs at real $\beta_c$ when the free energy density $-\frac{1}{\beta n} \log Z(\beta)$ becomes non-analytic at $\beta_c$ in the $n\to\infty$ limit. This corresponds to an extensive density of complex zeros ($\beta$ such that $Z(\beta) = 0$) asymptotically approaching the $\Re \beta$ axis, pinching it in the limit at $\beta_c$. Theorem~\ref{thm:main1} is thus a sufficient condition to show the absence of a phase transition. One might hope that the proof of \eqref{eq:selfavg} can be extended to similarly show the absence of a phase transition, but as described in Sec.~\ref{sec:prior}, a naive extension of the proof technique does not suffice to show Theorem~\ref{thm:main1}. Indeed, the non-rigorous computation of~\cite{zlokapa2026syk} clarifies that \eqref{eq:selfavg} should not hold for all complex $\beta$.

Our second contribution is a quasi-polynomial classical algorithm that estimates local, bounded-norm observables to arbitrary inverse polynomial error on the SYK thermal state above some constant temperature. This follows by extending Theorem~\ref{thm:main1} to apply to the perturbed partition function $Z(\beta,\lambda) = \Tr e^{-\beta(H + \lambda O)}$ given a local observable $O$. One can use the absence of a phase transition in the perturbed model to identify a zero-free region in which the partition function can be estimated due to analyticity. We use Barvinok's interpolation method~\cite{barvinok2014computing,barvinok2016computing,barvinok2016approximating,barvinok2016combinatorics,barvinok2018approximating} to estimate the thermal expectation of $O$. Explicitly, we compute a finite difference approximation of $\Tr(O\rho_\beta) = -\frac{1}{\beta} \partial_\lambda\big|_{\lambda=0} \log Z(\beta, \lambda)$ obtained by truncating a Taylor series of $\log Z(\beta, \lambda)$ to logarithmically many terms. This gives the following theorem.

\begin{theorem}[Quasi-polynomial classical algorithms for local thermal SYK expectations, informal]\label{thm:main2}
Let $O = \sum_{a=1}^M c_a \psi_{A_a}$ be an $L$-local observable, i.e., $A_a \in \binom{[n]}{\ell}$ for $\ell \leq L$. If the observable is local and bounded-norm on each site, i.e.,
\begin{align}
	L=O(1), \quad \max_{x \in [n]} \sum_{a:x\in A_a} |c_a| = O(1),
\end{align}
then $\Tr(O \rho_\beta)$ can be classically estimated to additive error $\epsilon$ with cost $n^{O(\log(n/\epsilon))}$ for any
\begin{align}
    0 \leq \beta \leq \sqrt{\frac{2^q}{q \max\{q,L\}}}\frac{9}{20e^{1/4}}\lr{\sqrt{1+\frac{4}{e}}-1}.
\end{align}
\end{theorem}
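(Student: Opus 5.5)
The plan is Barvinok interpolation in the observable coupling $\lambda$. For fixed real $\beta$ in the stated range, set $f(\lambda) = \log Z(\beta,\lambda)$ with $Z(\beta,\lambda)=\Tr e^{-\beta(H+\lambda O)}$. The target quantity is $\Tr(O\rho_\beta) = -\frac1\beta f'(0)$. The first step is to extend the Wick-pair cluster expansion behind Theorem~\ref{thm:main1} to the perturbed Hamiltonian $H+\lambda O$. The goal is the following statement: with probability $1-o(1)$, $Z(\beta,\lambda)\neq 0$ for all complex $\lambda$ in a disk $|\lambda|\le R$, where $R$ is a constant slightly larger than $0$. More usefully, I would aim for zero-freeness jointly in $(\beta',\lambda)$ on a polydisk.

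To set up the expansion, expand $e^{-\beta(H+\lambda O)}$ in a Dyson/Taylor series in the terms $J_I\psi_I$ and $\lambda c_a\psi_{A_a}$. The Gaussian couplings are paired via Wick pairs, and the $O$-insertions are treated as additional deterministic vertices. Each $O$ vertex has weight $|\beta\lambda c_a|$ and touches at most $L$ sites. Since $\sum_{a\ni x}|c_a|=O(1)$, the polymer activity bound picks up a factor that scales like $\max\{q,L\}$ in place of $q$. This accounts for the constant $\sqrt{2^q/(q\max\{q,L\})}$ in the theorem: the Kotecký--Preiss condition is verified for the combined polymers exactly as in Theorem~\ref{thm:main1}, with $q^2$ replaced by $q\max\{q,L\}$. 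The zero-free statement for $\log Z$ at the quenched level also needs concentration of the random polymer weights. I would reuse the probability-$1-o(1)$ event from Theorem~\ref{thm:main1}, adding the $O$-insertions, which are non-random.

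Given zero-freeness on $|\lambda|\le R$, I would not use the naive Taylor expansion in $\lambda$ of $f$. Instead I would use Barvinok's scheme: choose a polynomial map $\phi$ sending a disk of radius $1+\delta$ into a zero-free neighborhood, with $\phi(0)=0$ and the point of interest covered. Alternatively, work directly on the disk, where $f$ is analytic on $|\lambda|<R$ and the Cauchy estimates give Taylor coefficients of size $O(n)\cdot R^{-k}$. Computing $f'(0)$ itself needs only the first coefficient. But to obtain $\epsilon$ accuracy I would use the finite-difference/Taylor approach described in the text. I would evaluate $f$ via its coefficients, obtained from moments $\Tr(\cdot)$ of products of $H$ and $O$. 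Truncating at order $m=O(\log(n/\epsilon))$ gives error $n(r/R)^m\le\epsilon$. The $k$-th coefficient of $Z$ in $(\beta,\lambda)$ is a sum over words of length $k$ in the $O(n^q)$ Hamiltonian terms and the $M$ observable terms. Each word's normalized trace is computed in $\poly(n)$ time by Majorana anticommutation, giving total cost $n^{O(k)}=n^{O(\log(n/\epsilon))}$. The coefficients of $\log Z$ then follow from Newton's identities, which relate them to those of $Z$. To get a disk around $\beta$ rather than around $0$, I would work in two variables. I would interpolate $g(t)=\log Z(t\beta,t\beta\lambda)$-type slices, with the theorem's constant chosen below the zero-free radius so that a $(1+\delta)$-enlarged polydisk remains zero-free.

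The main obstacle is the first step: showing zero-freeness of the perturbed partition function with high probability. The theorem's error estimates need this region uniformly, not merely on average. The annealed expansion controls $\E Z$, whereas the algorithm needs the specific instance. I would handle this as Theorem~\ref{thm:main1} presumably does, by bounding $\E|Z/\E Z-1|^2$ via Wick pairing of two replicas, with the $O$ vertices contributing only bounded local weight. Then Markov's inequality and a union/net argument over the compact polydisk would control fluctuations uniformly. Hurwitz-type continuity would pass from a finite net to the whole region.
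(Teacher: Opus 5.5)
Your overall skeleton matches the paper's: zero-freeness of the perturbed model via an annealed Wick-pair expansion plus a two-replica second moment, then a truncated expansion with moment-computable coefficients, then a finite difference in $\lambda$. But the step you flag as the main obstacle is handled by an argument that does not work. The second-moment expansion only gives $\E|F(\beta)|^2=O(n^{1-q/2})$ pointwise, where $F=(\wh Z-\E\wh Z)/\E\wh Z$. So Markov controls each point only up to failure probability $O(n^{1-q/2})$, which is just $O(1/n)$ for $q=4$.

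A union bound over a net would then need a modulus of continuity for $F$ between net points. The only a priori bound available is $e^{O(n)}$: $|\wh Z|\le e^{|\beta|\,\|H\|}$, while $|\E\wh Z|$ is genuinely $e^{-\Theta(n)}$ at complex $\beta$ (e.g.\ near the imaginary axis). This forces exponentially many net points. Hurwitz's theorem concerns limits of sequences of holomorphic functions, so it cannot turn nonvanishing on a finite set into nonvanishing on a region.

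The missing idea is to bound the supremum in expectation directly. $F$ is analytic on the annealed zero-free disk, so Cauchy's formula on a slightly larger circle $|w|=R$ gives $\sup_{|\beta|\le r}|F(\beta)|\le\frac{R}{R-r}\cdot\frac{1}{2\pi}\int_0^{2\pi}|F(Re^{i\theta})|\,d\theta$. Jensen then gives $\E\sup_{|\beta|\le r}|F|^2\le(\frac{R}{R-r})^2\sup_{|w|=R}\E|F(w)|^2$. A single Markov inequality applied to the supremum yields $|F|<1/2$ on the whole disk with probability $1-O(n^{1-q/2})$ (Corollary~\ref{cor:zf}), with no union bound. You also assert $\sup|\log\wh Z|=O(n)$ on the disk without proof. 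The paper gets it from Borel--Carath\'eodory, using $\Re\log\wh Z(w)\le|w|\,\|H\|$ and $\|H\|=O(n)$ with high probability.

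Separately, your headline plan of interpolating in $\lambda$ at fixed $\beta$ is circular. The Taylor coefficients of $\lambda\mapsto\log Z(\beta,\lambda)$ at $0$ are the thermal cumulants of $O$ in $\rho_\beta$ (the first is $-\beta\Tr(O\rho_\beta)$ itself), not finite combinations of moments. Only an expansion centered at $\beta=0$ is computable. That is what your $g(t)$ slice eventually does, and what the paper does: for the two real values $\lambda\in\{0,h\}$, expand $\log\wh Z_\lambda(z)$ in $z$, truncate at $K=O(\log(n/\epsilon))$, and output $-(\log\wh Z_h(\beta)-\log\wh Z_0(\beta))/(\beta h)$.

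This needs no zero-freeness on a complex $\lambda$-disk or polydisk, which would require redoing the second moment with $\bar\lambda$ in the second replica. The finite-difference error is controlled at real $\lambda$ by Duhamel's formula, $0\le\partial_\lambda^2\log\wh Z_\lambda(\beta)\le\beta^2\|O\|^2\le\beta^2n^2\Gamma^2$, so $h=\epsilon/(8\beta n^2\Gamma^2)$ suffices. The truncation error must then be $O(\beta h\epsilon)$ rather than $\epsilon$, which still gives $K=O(\log(n/\epsilon))$. Your two-variable alternative (Cauchy estimates in $\lambda$ to read off $\partial_\lambda$ of each $\beta$-coefficient) could also work. It would, however, first require joint zero-freeness on a polydisk, again proved via the Cauchy--Jensen--Markov argument rather than a net.
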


We note that this classical algorithm bypasses the obstructions mentioned in Remark~\ref{rem:obstruct}: it maintains quasi-polynomial cost for choices of $\beta = \Theta(1), \epsilon = \Theta(1/\poly n)$ where the Gibbs state has a polynomial circuit lower bound, Gaussian state ansatzes fail to describe typical eigenstates, and instance-to-instance fluctuations can be resolved.

We briefly mention related results in the literature. The SYK model was originally proposed as a model with no phase transition at constant temperature due to replica symmetry~\cite{sachdev1993gapless,kitaev2015SYK}. Proving such properties rigorously is more challenging. In comparison, for the classical Sherrington-Kirkpatrick model, the replica symmetric phase has been rigorously known since \cite{aizenman1987some}. To obtain an algorithm that achieves arbitrarily small error $\epsilon$ in estimating the partition function, \cite{bencs2025zeros} extended the cluster expansion of~\cite{aizenman1987some}. In some sense, our work represents a quantum version of~\cite{aizenman1987some,bencs2025zeros} achieved by developing a cluster expansion amenable to disordered mean-field Hamiltonians with highly noncommuting terms. Our work is also closely related to the non-rigorous result for the SYK model in the recent work of~\cite{zlokapa2026syk}, which gives a replica computation that suggests that Barvinok's algorithm succeeds for all constant $\beta$. Our results can thus be viewed as a rigorous proof of this physics conjecture for a range of constant temperatures.

\section{Techniques}
The distinctly quantum properties of the SYK model pose additional challenges compared to the classical setting. We comment on obstructions to obtaining these results via prior methods, and then we sketch our technical contributions.

\subsection{Prior work}\label{sec:prior}
As discussed below, very few rigorous results exist for the SYK model due to the difficulty of applying standard techniques in random matrix theory to obtain sufficiently tight results.\footnote{Indeed, to the best of our knowledge, the most efficient rigorously known classical algorithm for estimating SYK thermal expectations is exact diagonalization.} Consequently, we also review relevant non-rigorous approaches used in the physics literature. In particular, we highlight that the very recent work~\cite{zlokapa2026syk} provides non-rigorous evidence based on the replica trick, path integration, and saddle point analyses in small-$\beta$ and large-$\beta$ limits to suggest that Theorem~\ref{thm:main2} of this work extends to all constant $\beta$ (in the large-$q$ limit). It is thus especially important to understand the obstacles encountered in making the results of~\cite{zlokapa2026syk} fully rigorous.

\textbf{Replica trick.} The original solution of classical spin mean-field glasses was obtained via the \emph{replica trick}, a powerful but non-rigorous method that correctly computes the quenched free energy of many disordered systems. Making such computations rigorous is difficult even for classical spin glasses; a series of breakthroughs achieved this for many classical models, including the Sherrington-Kirkpatrick model~\cite{guerra2002thermodynamic,guerra2003broken,talagrand2006free,talagrand2006parisi,chen2013aizenman,panchenko2013sherrington,panchenko2014parisi}. For classical spin glasses, the replica method also gives a fairly straightforward indication of a zero-free region throughout the replica symmetric phase~\cite{takahashi2011replica,takahashi2013zeros}. Obtaining this result rigorously, however, required the technically impressive work of~\cite{bencs2025zeros}. As the standard physics solutions of the SYK model and other quantum spin glasses also employ the replica trick, they also suffer from the classical obstructions to rigor in this regard. We note, however, that due to the points discussed below, even the non-rigorous replica computation establishing a zero-free region for SYK is significantly more complicated than for classical models~\cite{zlokapa2026syk}. Additionally, computing the quenched free energy rigorously is insufficient to obtain an algorithm that estimates observables to arbitrary additive error and succeeds with high probability.

\textbf{Path integration / quantum Monte Carlo.} Even before applying the replica trick, more basic complications arise in the non-rigorous physics computation of the SYK model. To describe these issues, we first recall the computation of a classical partition function by summing over all eigenstates / spin configurations. As a minimal example, consider computing the annealed partition function of the $q$-spin model $H_\mathrm{cl} = n^{-(q-1)/2}\sum_{i_1, \dots, i_q=1}^n A_{i_1\cdots i_q} Z_{i_1} \cdots Z_{i_q}$ with $A_{i_1\cdots i_q} \sim \cN(0, 1)$. The Gaussian moment generating function (MGF) immediately gives
\begin{align}\label{eq:gmgf}
	\E \, Z_\mathrm{cl}(\beta) = \E \sum_{\sigma \in \{\pm1\}^n} e^{-\beta H_\mathrm{cl}(\sigma)} = 2^n \exp[\frac{n\beta^2}{2}] \neq 0.
\end{align}
This is clearly nonzero for any (possibly complex) $\beta$, showing zero-freeness of the annealed partition function. In comparison, a physics computation shows that this quantity can vanish for the SYK model~\cite{khramtsov2021spectral,bunin2024fisher,zlokapa2026syk}. The analysis similarly applies the Gaussian MGF but to paths in a path integral. Since the eigenbasis is unknown, one integrates over an overcomplete basis of spin coherent states~\cite{sachdev2023quantum}. Each path is parameterized by $\tau \in [0,\beta]$, and the Gaussian MGF is applied to exponentiated scalar path values. However, due to the sign problem, this path integral is not over a probabilistic measure; in contrast, the partition function of a stoquastic Hamiltonian can be rigorously treated as a continuous-time Markov process via quantum Monte Carlo / Poisson Feynman-Kac~\cite{bravyi2006complexity,crawford2007thermodynamics,leschke2021free}. Although one may hope that the saddle point method can somehow be applied despite these issues, an additional difficulty arises: one must first take a vanishingly small imaginary timestep size $\Delta \tau \to 0$ before taking $n \to \infty$. Unless these limits are commuted, this introduces an extensive number of integrals, preventing the rigorous application of the saddle point method.

\textbf{Plefka expansion and quantum TAP equations.} A non-rigorous physics argument truncates a high-temperature expansion of the free energy at second order to derive the Thouless-Anderson-Palmer (TAP) equations, which recover local thermal expectations~\cite{thouless1977solution,plefka1982convergence}. Although this was proven to obtain the correct results for classical models via alternative rigorous methods~\cite{talagrand2003spin,chatterjee2010spin,bolthausen2014iterative,auffinger2019thouless}, this second-order truncation remains non-rigorous. For the Sherrington-Kirkpatrick model, truncating the small-$\beta$ expansion at order $O(\log n)$ was formally justified by the zero-free region established in \cite{bencs2025zeros}. In the quantum setting, an analogous Plefka expansion has been proposed to derive quantum TAP equations~\cite{biroli2000quantum}, but besides the lack of formal justification via a zero-free region, the quantum setting also uses the non-rigorous path integral discussed above.

\textbf{Random matrix theory.} One may hope that standard tools from random matrix theory may provide an alternative to the quantum path integral. However, the resulting bounds are prohibitively loose due to a fundamental failure of the moment method. Since the SYK is normalized to have extensive energy $\norm{H} = \Theta(n)$, the Taylor expansion of $e^{-\beta H}$ is dominated by terms at order $n$. As extensively discussed in~\cite{baldwin2020quenched}, this prevents naive $1/n$ expansions from converging. The best known bounds on quantities such as $\E \, \Tr H^k$ for $k = \Theta(n)$ lead to loose bounds of basic quantities such as the spectral norm~\cite{feng2019spectrum}. This issue also manifests itself when attempting to compute the annealed free energy as in \eqref{eq:gmgf}, since noncommutative corrections to the Gaussian MGF cannot be perturbatively treated in a $1/n$ expansion. Indeed, it is non-rigorously known for the SYK model that these corrections cause the annealed partition function to vanish at constant $\beta$: a physics computation shows that the spectral form factor $|Z(it)|^2$ for $t > 0$ has zeros at constant times $t$ separated by $O(1/n)$~\cite{khramtsov2021spectral}.

\textbf{Commutation index.} Outside of the more standard techniques available in physics and random matrix theory, recent work~\cite{anschuetz2025strongly} directly exploited the noncommutativity of the SYK model to show \eqref{eq:selfavg} and Remark~\ref{rem:obstruct}. The main technique is to analyze the \emph{commutation index} defined as~\cite{king2025triply}
\begin{align}\label{eq:commind}
	\Delta = \sup_{\ket\phi} \binom{n}{q}^{-1} \sum_{I \in \binom{[n]}{q}} \bra{\phi}\psi_I\ket{\phi}^2,
\end{align}
where the supremum is evaluated over all states $\ket{\phi}$. The insight of~\cite{anschuetz2025strongly} is to bound $\Delta$ in terms of the Lov\'asz theta function of a certain Johnson association scheme~\cite{delsarte1973algebraic}, giving $\Delta = \Theta(n^{-q/2})$. The concentration of the quenched and annealed free energies is then shown via a Lipschitz argument. Specifically, for real $\beta$, the commutation index appears by evaluating the gradient of the free energy with respect to the disorder, i.e.,
\begin{align}
    \norm{\nabla_J \log Z(\beta)}_2^2 = \beta^2 \sum_{I \in \binom{[n]}{q}} \Tr(\rho_\beta \psi_I)^2 \leq \binom{n}{q} \beta^2 \Delta.
\end{align}
This gradient bound suffices to show \eqref{eq:selfavg} via the Gaussian concentration of Lipschitz functions.
For complex $\beta$, one instead has
\begin{align}
    \norm{\nabla_J \log Z(\beta)}_2^2 = |\beta|^2 \sum_{I \in \binom{[n]}{q}} \frac{\abs{\Tr(e^{-\beta H}\psi_I)}^2}{\abs{\Tr(e^{-\beta H})}^2},
\end{align}
which can only be applied in a region already known to be zero-free. We thus require additional ideas to show zero-freeness and the absence of a phase transition.

\textbf{Cluster expansions.} The use of high-temperature expansions of $e^{-\beta H}$ into sums of products of local terms in $H$ has a long history~\cite{kotecky1986cluster,dobrushin1996estimates,park1982cluster,greenberg1969correlation}. To the extent of our knowledge, all rigorous proofs of zero-freeness of quantum Hamiltonians rely on cluster expansions whose polymers are supported on the union of the supports of all the Hamiltonian terms in such a product~\cite{harrow2020classical,mann2021efficient,yao2022polynomial,mann2024algorithmic,zlokapa2026syk}. This gives a universal zero-free disk of constant radius $|\beta| = O(1/D)$ for all degree-$D$ constant-local quantum Hamiltonians; for mean-field models such as SYK, however, this excludes constant temperature. In the classical literature, the influential work of~\cite{aizenman1987some} introduced a proof technique for mean-field cluster expansions, which was used by \cite{bencs2025zeros} to show zero-freeness of the Sherrington-Kirkpatrick model. However, as described in Sec.~\ref{sec:overview}, the cycle-counting approach of~\cite{aizenman1987some} generalizes poorly to noncommuting Hamiltonians. We thus give a different cluster expansion whose polymers are based on Wick pairs obtained by disorder-averaged moments.

\textbf{Barvinok's algorithm.} Barvinok's interpolation method~\cite{barvinok2014computing,barvinok2016computing,barvinok2016approximating,barvinok2016combinatorics,barvinok2018approximating} has been applied to classical systems such as the hardcore gas model and Ising model~\cite{peters2019conjecture,bezakova2018inapproximability,liu2019ising,liu2025correlation,guo2020zeros,bencs2021zero}, quantum many-body systems~\cite{harrow2020classical,mann2021efficient,yao2022polynomial,mann2024algorithmic}, and other contexts with a sign problem~\cite{eldar2018approximating,jiang2025positive}. In this work, we use it in a fairly simple manner since we show a zero-free disk, which is straightforward to analyze; the main technical difficulty lies in showing the zero-free property.

\subsection{Proof overview}\label{sec:overview}

Our main technical contribution, as well as the bulk of the proofs of Theorems~\ref{thm:main1} and~\ref{thm:main2}, is a cluster expansion for mean-field noncommuting models. More precisely, we actually require \emph{two} cluster expansions.
\begin{itemize}
    \item \emph{Annealed free energy.} To avoid computing the SYK annealed free energy, we use a cluster expansion to show that $\E \, Z$ admits a zero-free disk. The polymers correspond to Wick pairs obtained from Gaussian expectations. When the Hamiltonian is perturbed by an observable (as needed for Theorem~\ref{thm:main2}), the polymers are extended to pass through the deterministic support of the observable. To show convergence of the expansion of $\log \E\, Z$, we apply the Koteck\'y-Preiss condition with an activity weight proportional to the polymer support. This shows that for some constant $R > 0$,
    \begin{align}
        \inf_{|\beta| \leq R} |\E \, Z(\beta)| > 0
    \end{align}
    and $\log \E\, Z(\beta)$ admits a convergent cluster expansion (Theorem~\ref{thm:anneal}).
    \item \emph{Second moment.} To show that an SYK instance has a zero-free disk with high probability, it suffices to show that $\E \,|Z(\beta)|^2$ concentrates to $|\E \, Z(\beta)|^2$. We construct a cluster expansion similar to that of the annealed computation, where the only new contribution comes from ``mixed'' Wick pairs that have endpoints in both copies of the system. The contribution of these pairs can be heuristically bounded by the commutation index \eqref{eq:commind}, which exploits the noncommutativity of Hamiltonian terms. Most Wick pairs involve a repeated Hamiltonian term $\psi_I$; summing over all such possible terms contributes
    \begin{align}
        \sum_{I \in \binom{[n]}{q}} \abs{\tr(\psi_I O_1) \tr(\psi_I O_2)} \leq 2 \binom{n}{q} \Delta \tr|O_1| \tr |O_2|.
    \end{align}
    Alternatively, one cause the orthogonality of Majorana strings to give a more direct proof that more readily generalizes to less noncommutnig models; we use this strategy in Theorem~\ref{thm:sa} to show for some constant $0 < R' < R$ that
    \begin{align}\label{eq:mainbounds}
        \sup_{|\beta| \leq R'} \frac{\E \left|Z(\beta) - \E Z(\beta)\right|^2}{\left|\E Z(\beta)\right|^2} = O(n^{1-q/2}).
    \end{align}
\end{itemize}
We describe both of these cluster expansions in more detail in Secs.~\ref{sec:cluster1} and~\ref{sec:cluster2}, where we also provide a motivating classical example. For now, we complete the sketch of the proofs of the main theorems.

To complete the proof of Theorem~\ref{thm:main1}, we introduce slightly smaller constants $0 < r' < r < R'$. For $|\beta| \leq r'$, Cauchy's integral formula gives (see Corollary~\ref{cor:zf})
\begin{align}
    \abs{\frac{Z(\beta) - \E Z(\beta)}{\E Z(\beta)}} = \abs{\frac{1}{2\pi i} \oint_{|w|=r} \frac{1}{w-\beta}\cdot\frac{Z(w) - \E Z(w)}{\E Z(w)}} dw \leq \frac{r}{r-r'} \cdot \frac{1}{2\pi} \int_0^{2\pi} \abs{\frac{Z(re^{i\theta}) - \E Z(re^{i\theta})}{\E Z(re^{i\theta})}} d\theta,
\end{align}
since the annealed partition function is known to be zero-free. Squaring the LHS, we apply Jensen's inequality to the RHS to obtain (for all $|\beta| \leq r'$)
\begin{align}
    \frac{\left|Z(\beta) - \E Z(\beta)\right|^2}{\left|\E Z(\beta)\right|^2} \leq \lr{\frac{r}{r-r'}}^2 \sup_{|w|=r} \frac{\left|Z(w) - \E Z(w)\right|^2}{\left|\E Z(w)\right|^2}.
\end{align}
Finally, we take expectations and apply Markov's inequality to conclude by \eqref{eq:mainbounds} that for all $|\beta| \leq r'$, the partition function is zero-free with probability $1 - O(n^{1-q/2})$.

To show Theorem~\ref{thm:main2}, we extend the above proofs to the perturbed Hamiltonian $H \to H + \lambda O$ for sufficiently small $\lambda$. The thermal expectation is then evaluated by estimating the corresponding partition function $Z(\beta, \lambda)$ at and near $\lambda = 0$ to approximate
\begin{align}
    \Tr(O\rho_\beta) = -\frac{1}{\beta} \partial_\lambda\Big|_{\lambda=0} \log Z(\beta, \lambda)
\end{align}
via a finite difference. The error associated with this procedure is controlled via a standard argument (see, e.g., \cite{bravyi2006complexity}), allowing us to truncate the cluster expansions at order $k = O(\log n/\epsilon)$. Finally, to compute the coefficients (cumulants) in the expansion of $\log Z(\beta, \lambda)$, it suffices to compute the moments $\Tr(H^k)$ similarly to prior applications of Barvinok's method~\cite{harrow2020classical}.

\subsubsection{Annealed cluster expansion}\label{sec:cluster1}
To concretely describe our cluster expansion, we recall a rough statement of Koteck\'y-Preiss here, and we refer the reader to Theorem~\ref{thm:kp} for a formal presentation. Given polymer set $\cP$ and compatibility relation $\sim$ (such that all polymers $s_i \in \cP$ satisfy $s_i \nsim s_i$), the polymer partition function
\begin{align}\label{eq:ximain}
    \Xi(z) = 1 + \sum_{r \geq 1} \frac{1}{r!} \sum_{\substack{(s_1, \dots, s_r) \\ s_i \sim s_j}} \prod_{j=1}^r z_{s_j}.
\end{align}
satisfies the Koteck\'y-Preiss condition if there are functions $a, \rho:\cP\to[0,\infty)$ such that every $s_i \in \cP$ satisfies
\begin{align}\label{eq:kpmain}
    \sum_{s_j \,:\, s_j \nsim s_i} \rho(s_j) e^{a(s_j)} \leq a(s_i).
\end{align}
This implies that the cluster expansion of $\log \Xi$ converges absolutely.

Of the prior cluster expansions to show zero-freeness for quantum Hamiltonians~\cite{harrow2020classical,mann2021efficient,yao2022polynomial,mann2024algorithmic,zlokapa2026syk}, the most recent~\cite{zlokapa2026syk} gives the tightest bounds on a zero-free disk for bounded-degree Hamiltonians via the Koteck\'y-Preiss condition. These arguments are difficult to generalize to the mean-field setting, because they define polymers whose support consists of the union of the supports of the Hamiltonian terms involved in the polymer. While this is useful for a universal high-temperature phase independent of the Hamiltonian coefficients, it only gives a vanishingly small zero-free region $|\beta| = o(1)$. In contrast, our cluster expansion explicitly uses the Gaussian disorder to construct polymers with small support despite all-to-all connectivity.

\cite{aizenman1987some} developed a mean-field cluster expansion for the Sherrington-Kirkpatrick (SK) model
\begin{align}
    H_\mathrm{SK} = \sum_{I\in\binom{[n]}{2}} A_I \sigma_I, \quad A_I \sim \cN(0, 1/n),
\end{align}
where $\sigma_I$ denotes the product $\sigma_i \sigma_j$ for $I = \{i,j\}$, and $\sigma \in \{\pm1\}^n$ is a spin configuration.
The expansion, also used in~\cite{bencs2025zeros}, constructs polymers from cycles. As a rough sketch of this idea, we rewrite the normalized SK partition function as
\begin{align}\label{eq:skcycle}
    \tr e^{-\beta H_\mathrm{SK}} = \lr{\prod_{I\in\binom{[n]}{2}} \cosh(\beta A_I)} \sum_{E \subseteq \binom{[n]}{2}} \lr{\prod_{I \in E} \tanh (-\beta A_I)} \cdot 2^{-n} \sum_\sigma \prod_{\{i,j\}\in E} \sigma_i \sigma_j
\end{align}
Only terms corresponding to even graphs contribute to the partition function; i.e., $\deg_E(v)$ is even for all $v \in [n]$. Following this observation, we define a polymer $s$ to be a connected even edge set in the complete graph on $[n]$. Concretely, define a polymer $s_i$ to be a finite connected graph $(V(s_i), E(s_i))$ with $E(s_i) \subseteq \binom{[n]}{2}$, and every vertex in $V(s_i)$ has even degree in $E(s_i)$. The support of $s_i$ is given by $V(s_i)$, and the compatibility relation $s_i \sim s_j$ is $V(s_i) \cap V(s_j) = \emptyset$. In the form of \eqref{eq:ximain}, the polymer representation is thus
\begin{align}
    \tr e^{-\beta H_\mathrm{SK}} = \lr{\prod_{I\in\binom{[n]}{2}} \cosh(\beta A_I)} \sum_{r \geq 0} \frac{1}{r!} \sum_{\substack{(s_1, \dots, s_r) \\ s_i \sim s_j}} \prod_{u=1}^r w_{s_u}, \quad w(s_u) = \prod_{I \in E(s_u)} \tanh (-\beta A_I).
\end{align}
Because a connected graph with $v$ vertices and $e$ edges has family size $O(n^v)$ but weight $\EE{w(s)}^2 = O(n^{-e})$, the dominant contributions are unions of simple cycles each with $e=v$.

This argument becomes noticeably more difficult to apply for noncommuting models. While \eqref{eq:skcycle} made it evident that only cycles contribute to the partition function, one cannot directly obtain such an expression for the SYK model. We instead expand the partition function into moments and take an expectation over disorder. For simplicity, we describe the method here for the annealed partition function of the unperturbed SK model. In the Taylor expansion of $\E \, \tr e^{-\beta H_\mathrm{SK}}$, we evaluate the even moments (since odd moments vanish) with Wick's rule over the set $\Pi(2k)$ of pairings of $[2k]$:
\begin{align}
    \E\, \tr(H_\mathrm{SK}^{2k}) = n^{-k} \sum_{\pi \in \Pi(2k)} \sum_{\substack{I_1, \dots, I_{2k} \in \binom{[n]}{2}\\ I_a = I_b \text{ for } \{a,b\} \in \pi}} \tr(Z_{I_1} \cdots Z_{I_{2k}}).
\end{align}
A polymer is now defined on the tuple $\tau = (\pi, I_1, \dots, I_{2k})$, where the sets are consistent with the pairing ($I_a = I_b$ for $\{a,b\}\in\pi$). Let $S_{2k}$ be the set of all such tuples; we can then rewrite the annealed partition function as
\begin{align}
    \E \, \tr e^{-\beta H_\mathrm{SK}} = \sum_{k=0}^\infty \frac{(-\beta)^{2k}}{(2k)!} \sum_{\tau \in S_{2k}} n^{-k} \tr(Z_{I_1}\cdots Z_{I_{2k}}).
\end{align}
To obtain a polymer representation, we isolate connected components. Define the graph $G(\tau)$ on vertex set $[2k]$ that places an edge between $u,v \in [2k]$ if and only if $I_u \cap I_v \neq \emptyset$. Let $S_{2k}^\conn$ denote the subset of $\tau \in S_{2k}$ such that $G(\tau)$ is connected. Then the polymer set $\cP$ and the support of $s_i = (\pi,I_1,\dots,I_{2k}) \in \cP$ are 
\begin{align}
    \cP = \bigcup_{k \geq 1} S_{2k}^\conn, \quad \supp(s_i) = \bigcup_{j=1}^{2k} I_j.
\end{align}
Polymers are compatible $s_i \sim s_j$ if $\supp(s_i) \cap \supp(s_j) = \emptyset$. By factorizing $\tr(Z_{I_1} \cdots Z_{I_{2k}})$ into connected components, we can rewrite the annealed partition function in the form of \eqref{eq:ximain} as
\begin{align}\label{eq:skpoly}
    \E \, \tr e^{-\beta H_\mathrm{SK}} = \sum_{r=0}^\infty \frac{1}{r!} \sum_{\substack{(s_1, \dots, s_r) \\ s_i \sim s_j}} \prod_{u=1}^r z_{s_u}, \quad z_{s_u} = \frac{(-\beta/\sqrt n)^{|s_u|}}{|s_u|!} \tr \prod_{j=1}^{|s_u|} Z_{I_j^{(u)}}
\end{align}
where we use notation $s_u = (\pi_u, I_1^{(u)}, \dots, I_{|s_u|}^{(u)})$. To show convergence, we then check the Koteck\'y-Preiss condition \eqref{eq:kpmain} with $a(s_u) = |\supp(s_u)|/4$ and $\rho(s_u) = |z_{s_u}|$.

To adapt the above computation to the perturbed SYK model (Sec.~\ref{sec:first}), a few modifications must be made. For Hamiltonian $H = H_\mathrm{SYK} + \lambda O$, the polymers must now track
\begin{itemize}
    \item an indicator function $\eta : [m] \to \{0,1\}$ specifying in $\E \, \tr H^m$ if the $j$th slot belongs to an SYK Hamiltonian term or an observable term,
    \item a pairing $\pi$ of the $H_\mathrm{SYK}$ terms arising from Wick's theorem, and
    \item the supports $I_1, \dots, I_{2k}$ of the SYK terms and the supports of the deterministic observable insertions.
\end{itemize}
We form an intersection graph similarly to before and group terms into connected components. Since even Majorana strings supported on disjoint sets commute, these connected components factorize similarly to \eqref{eq:skpoly}, producing the desired polymer representation and allowing us to check the Koteck\'y-Preiss condition. This proves that for a disk of radius $|\beta| \cJ \lesssim 0.24\sqrt{q/\max\{q,L\}}$ for $\cJ=\sqrt{q/2^{q-1}}$, the annealed partition function is nonzero (Theorem~\ref{thm:anneal}). This is expected to be tight up to a constant independent of $q$ due to the non-rigorous results of~\cite{zlokapa2026syk}.

\subsubsection{Second moment cluster expansion}\label{sec:cluster2}

We use a cluster expansion similar to the annealed computation above to show \eqref{eq:mainbounds}; i.e., for all sufficiently small $|\beta|$,
\begin{align}\label{eq:conc}
    \frac{\E \left|Z(\beta) - \E Z(\beta)\right|^2}{\left|\E Z(\beta)\right|^2} = O(n^{1-q/2}).
\end{align}
Note that the disconnected product $\E Z(\beta)\E Z(\overline\beta)$ only involves Wick pairs within replicas. Hence, the ratio
\begin{align}
    \frac{\E\big[Z(\beta) Z(\overline\beta)\big]}{\E Z(\beta)\E Z(\overline\beta)}
\end{align}
is controlled by polymers containing at least one mixed Wick pair across two replicas. We will use $K_e$ to denote the support of a mixed Wick pair; more carefully, $K_e = I_u = I_v$ is a set in $\binom{[n]}{q}$, where $I_u, I_v$ correspond to a Wick pair $\{u,v\}$ that is split across the two replicas of the system.

We sketch our upper bound on the contribution from such mixed Wick pairs. To create a connected graph from $k$ Gaussian pairs and supported on $K_e$, we must add a set that intersects existing sites in the polymer. Once $K_e$ is fixed, there are roughly $O(n^{q-1})$ ways to choose a new set such that the component remains connected. (In the formal proof, we have to be more careful and track the $k$ dependence of this quantity, but this rough counting morally holds due to a $1/k!$ prefactor in the cluster expansion.) Each Wick pair carries a factor of $\sigma^2 = O(n^{-(q-1)})$, so each additional Wick pair roughly contributes $O(n^{q-1}\sigma^2) = O(1)$. The only nontrivial factor thus comes from choosing $K_e$: there are $O(n^q)$ ways to choose $K_e$, so this choice contributes $O(n^q \sigma^2) = O(n)$. In total, this counting argument says that the mixed Wick pairs contribute $O(n)$.

To show that the contribution from mixed Wick pairs is actually vanishingly small, we need to refine the above argument. We will split these mixed polymers into two classes. A polymer is a \emph{distinct-set polymer} if it contains a mixed Wick pair whose support $K_e$ appears nowhere else in the polymer, neither as another Gaussian support nor as one of the deterministic supports $A_{a_j}$; otherwise it is a \emph{repeated-set polymer}.

The repeated-set polymers are rarer, because every mixed-pair support must be reused elsewhere in the polymer. In particular, we have to constrain one of the additional Wick pairs to also be supported on $K_e$; this reduces a $O(n^{q-1}\sigma^2)$ contribution in the above argument to $O(\sigma^2)$. Hence, repeated-set polymers only contribute $O(n \cdot n^{-(q-1)}) = O(n^{2-q})$ (Lemma~\ref{lem:rep}).

We now count the contributions from distinct-set polymers. Cyclicity of trace lets us isolate the distinguished mixed pair as $\psi_{K_e}$, so
\begin{align}\label{eq:o1o2}
    \tr(W_1(P)) \tr(W_2(P)) = \tr(\psi_{K_e} O_1)\tr(\psi_{K_e} O_2),
\end{align}
where $W_1, W_2$ are formed by a product of Hamiltonian terms and observable terms, and $O_1$ and $O_2$ depend only on the remaining supports in the two replicas. There are two possible ways in which one could use this observation. One possibility, useful for the SYK model, is to relate this to the commutation index via Lemma~\ref{lem:comm2}: if one rearranges sums to ensure that $O_1, O_2$ are independent of the isolated mixed pair, then
\begin{align}
    \sum_{K_e \in \binom{[n]}{q}} \abs{\tr(\psi_{K_e} O_1) \tr(\psi_{K_e} O_2)} \leq \left[\sum_{K_e} \abs{\tr(\psi_{K_e} O_1)}^2\right]^{1/2} \left[\sum_{K_e} \abs{\tr(\psi_{K_e} O_2)}^2\right]^{1/2} \hspace{-1em} \leq 2 \binom{n}{q} \Delta \tr|O_1| \tr |O_2|.
\end{align}
Since $\Delta = O(n^{-q/2})$ for Majorana fermions (Theorem~\ref{thm:comm}), the contribution from $K_e$ is reduced from $O(n^q \sigma^2)$ to $O(n^q \Delta \sigma^2) = O(n^{1-q/2})$. Since the additional pairs in the polymer contribute $O(n^{q-1} \sigma^2) = O(1)$ as before, the total distinct-set contribution is $O(n^{1-q/2})$.

To give a proof strategy that we expect to apply beyond the SYK model (e.g., for classical models with constant $\Delta$), we use a different approach that uses orthogonality of terms instead of the commutation index. Observe that each factor $\tr(\psi_{K_e} O_b)$ vanishes unless $O_b$ is only supported on $K_e$. Suppose we delete $K_e$ from the support intersection graph of $W_1$; since $W_1$ was a single connected component, each of the remaining connected components must disjointly intersect $K_e$ and in fact contribute an even, nonempty subset of $K_e$. Since $|K_e| = q$, this means that the support intersection graph of $O_1$ is decomposed into at most $q/2$ connected components. If the $j$th component has $g_j$ Gaussian supports, then it contributes $O(n \cdot n^{(q-1)g_j} \cdot \sigma^{2g_j}) = O(n)$ to \eqref{eq:o1o2}, since we can write the connected component as a spanning tree with $n$ choices for the root. Hence, these connected components contribute at most $O(n^{q/2})$ and specify a unique $K_e$; multiplying by the variance $\sigma^2$ associated with $K_e$ yields a total contribution of $O(n^{q/2}\sigma^2) = O(n^{1-q/2})$ from the distinct-set pairs (Lemma~\ref{lem:dist}).

Once the contributions from mixed Wick pairs are bounded, we complete the proof by applying Koteck\'y-Preiss. The resulting bounds on the mixed and separated polymer partition functions yield \eqref{eq:conc} (Theorem~\ref{thm:sa}).

\subsection{Discussion and open questions}
Unlike most other systems studied in condensed matter and quantum chemistry, the SYK model has a well-defined asymptotic limit. It has thus been extensively studied in rigorous contexts as a model with distinctly quantum properties and a potential for provable quantum advantage~\cite{hastings2022optimizing,hastings2023field,anschuetz2025strongly,ramkumar2025high}. By showing a classical algorithm for estimating SYK thermal expectations, our work highlights the need for stronger evidence of classical hardness than heuristics such as polynomial quantum circuit complexity or the sign problem. A core theoretical obstruction to providing such evidence is that many ideas developed in the theory of classical average-case complexity~\cite{achlioptas2006solution,gamarnik2021overlap,hopkins2017bayesian,barak2019nearly,hopkins2018statistical,el2022sampling,el2025sampling} also extend to average-case hardness for quantum algorithms~\cite{basso2022performance,anschuetz2025efficient,zlokapa2025average}, in part due to their tight relationship to glassy physics~\cite{bandeira2022franz}. Hence, in the absence of significant structure, disordered instances often look average-case hard both classically and quantumly.

Our main technical contribution---a new cluster expansion for disordered quantum systems, including mean-field models---may aid in understanding the relationship between zero-freeness and other phenomena. We emphasize that the underlying methods apply to both commuting and non-commuting models, and thus may be of independent interest for classical and quantum applications beyond SYK. In the quantum setting, we suggest three future directions of particular interest.
\begin{itemize}
    \item \emph{Entanglement transitions}. \cite{bakshi2024high} showed the death of entanglement in bounded-degree models via a cluster expansion. Our cluster expansion may enable similar arguments for mean-field quantum systems to better understand the relationship between phase transitions in entanglement and other properties of the Gibbs state.
    \item \emph{Correlation decay}. It is known that zero-freeness implies point-to-point correlation decay in bounded-degree quantum systems~\cite{harrow2020classical}. It may be possible to extend such notions to quantum analogues of strong spatial mixing~\cite{brandao2019finite}, as in the classical setting~\cite{dobrushin1985completely,shao2021contraction,regts2023absence,liu2025correlation}. Even in the bounded-degree case, our method may be useful for analyzing disordered quantum systems.
    \item \emph{Mixing times}. The algorithmic question of mixing time bounds is closely related to stronger notions of correlation decay. Quantum algorithms for Gibbs sampling are largely expected to be efficient in a high-temperature, rapidly mixing phase due to their resemblance to physical system-bath interactions~\cite{chen2025efficient}. However, Barvinok's algorithm generically gives quasipolynomial-time algorithms in this high-temperature phase of any model due to the absence of Fisher zeros~\cite{fisher1965nature,zlokapa2026syk}. An interesting open question is to formally relate zero-freeness with mixing times of quantum Gibbs samplers.
\end{itemize}

We also comment on possible extensions of our work within the scope of SYK. While we only show results at sufficiently large constant temperature for the SYK model, non-rigorous arguments from physics suggest that Barvinok's method can be applied at \emph{any} constant temperature~\cite{zlokapa2026syk} because the SYK model is believed to have no phase transition at constant temperature~\cite{sachdev1993gapless,kitaev2015talks}. Showing this via our methods is likely to require additional technical insight: our use of Koteck\'y-Preiss is limited to showing a zero-free disk, while the SYK partition function appears to instead have a zero-free wedge~\cite{zlokapa2026syk} due to the complex zeros appearing on the imaginary $\beta$ axis from the spectral form factor~\cite{khramtsov2021spectral}. Another open question is to extend our work to obtain a fully polynomial time approximation scheme (FPTAS) for the SYK model. In other settings, a more refined analysis of Barvinok's method has yielded polynomial-time algorithms for both classical and quantum models~\cite{patel2017deterministic,yao2022polynomial,mann2024algorithmic}. Finally, we emphasize that our results only concern local thermal expectations; they do not address other tasks such as sampling measurement outcomes~\cite{bergamaschi2024quantum} or estimating time-evolved observables~\cite{larkin1969quasiclassical,swingle2016measuring,roberts2017chaos,google2025observation}, which remain interesting possible avenues for quantum advantage for the SYK model and other quantum many-body systems.

\paragraph{Acknowledgments.} AZ thanks Thiago Bergamaschi, David Gamarnik, Aram Harrow, Bobak Kiani, Kuikui Liu, and Francisco Pernice for discussions and comments, and the Hertz Foundation for support. ChatGPT Pro contributed (i) routine combinatorial calculations and bounds, which were checked by hand; (ii) Lemma~\ref{lem:comm2}, which motivates the second moment argument in the introductory material but is not formally used (in favor of a more general argument that applies to commuting Hamiltonians as well); and (iii) the connected component decomposition used to rearrange sums in Lemma~\ref{lem:dist}. The author verified the correctness and originality of all content including references.

\section{Preliminaries}\label{sec:prelim}
We restate our setup and main results to more clearly establish notation. We take the SYK Hamiltonian on $n$ fermions normalized as $\{\psi_i, \psi_j\}=\delta_{ij}$. A Majorana string on a $I \in \binom{[n]}{q}$ is indicated by
\begin{align}
    \psi_I = i^{q/2}\psi_{i_1}\cdots \psi_{i_q}.
\end{align}
We write the normalized trace as $\tr(\cdot) = \frac{1}{d}\Tr(\cdot)$ for Hilbert space dimension $d=2^{n/2}$. The SYK Hamiltonian is normalized as
\begin{align}
    H_0 = \sum_{I \in \binom{[n]}{q}} J_I \psi_I, \quad J_I \sim \cN(0, \sigma^2), \quad \sigma^2 = \frac{(q-1)!}{n^{q-1}}.
\end{align}
We will always assume $q \geq 4$ is an even constant and $n$ is even. In standard SYK notation, $\cJ = \sqrt{q/2^{q-1}}$. To compute local observable expectations, we will analyze the perturbed model
\begin{align}
    H = H_0 + \lambda O, \quad O = \sum_{a=1}^M c_a \psi_{A_a}
\end{align}
where each $A_a \subset [n]$ has even cardinality. We define
\begin{align}
    \Gamma = \max_{x \in [n]} \sum_{a:x\in A_a} |c_a|, \quad L = \max_{a \in [M]}|A_a|.
\end{align}
We will give quasipolynomial algorithms that estimate thermal expectations at constant temperature and any inverse polynomial error for $\Gamma, L = O(1)$.
We denote the partition function and normalized partition function of this perturbed Hamiltonian by $Z$ and $\wh Z$ respectively.

Our main claim is that with high probability the SYK model has a zero-free disk of constant radius. This proceeds from two results about the SYK model: the zero-freeness of its annealed free energy, and a second-moment bound that holds for complex $\beta$. These results combine with Cauchy's integral formula and Markov's inequality to imply zero-freeness of a typical SYK instance.

\begin{theorem}[Annealed zero-freeness]\label{thm:anneal}
    For any $|\lambda|\Gamma \leq 2^{-q/2} \max\{q,L\}^{-1/2}$, on the disk
    \begin{align}
        |\beta| \cJ \leq \sqrt{2}\lr{\sqrt{1+1/e}-1} \sqrt{\frac{q}{\max\{q,L\}}} \approx 0.24 \sqrt{\frac{q}{\max\{q,L\}}},
    \end{align}
    $\EE{\wh Z(\beta)} \neq 0$ and the annealed free energy $\log \EE{\wh Z(\beta)}$ admits an absolutely convergent cluster expansion.
\end{theorem}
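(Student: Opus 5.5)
We expand $\EE{\wh Z(\beta)} = \sum_{m\ge 0}\frac{(-\beta)^m}{m!}\,\E\,\tr H^m$ with $H = H_0+\lambda O$. Each factor of $H$ is either an SYK term $J_I\psi_I$ or an observable term $\lambda c_a\psi_{A_a}$, so each word of length $m$ records three pieces of data: an indicator $\eta:[m]\to\{0,1\}$ marking which slots carry SYK terms; a Wick pairing $\pi$ of the SYK slots (Wick's theorem applied to the Gaussian $J_I$, with each pair contributing $\sigma^2$ and forcing equal sets); and the supports $I_u$ and $A_{a_j}$. For such a tuple we form the intersection graph on $[m]$, joining two slots when their supports intersect. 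The polymers are the tuples whose intersection graph is connected. The support of a polymer is the union of its sets, and two polymers are compatible when their supports are disjoint.

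The key algebraic step is the factorization. All strings are even, so Majorana strings with disjoint supports commute. Hence the slots of a word can be regrouped into connected components without any sign, and $\tr$ of a product of disjointly supported even strings factorizes into the product of the component traces. The multinomial bookkeeping (interleaving the positions of the components) cancels against $1/m!$, exactly as in \eqref{eq:skpoly}. The result is $\EE{\wh Z} = \Xi(z)$ in the form \eqref{eq:ximain}, with activity
\begin{align}
z_s = \frac{(-\beta)^{|s|}}{|s|!}\,\sigma^{2k}\,\lambda^{m-2k}\prod_j c_{a_j}\,\tr(\text{word}).
\end{align}
We have $|\tr(\cdot)|\le 1$. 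One point needs care: an SYK pair with equal sets must count as a single choice of $I$, not two independent ones.

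Next we verify the Koteck\'y-Preiss condition \eqref{eq:kpmain} with $\rho(s) = |z_s|$ and $a(s) = c\,|\supp(s)|$, for a constant $c$ we tune. By a union bound over the sites $x\in\supp(s_i)$, it suffices to show, for each fixed site $x$,
\begin{align}
\sum_{s\ni x}|z_s|\,e^{c|\supp s|}\le c.
\end{align}
To bound this sum we encode each connected polymer by a spanning tree of its intersection graph (or by an exploration order), rooted at a slot containing $x$. Each new SYK pair attached to an existing support of size at most $\max\{q,L\}$ has at most $\max\{q,L\}\binom{n-1}{q-1}$ choices of $I$. Multiplying by $\sigma^2$ gives a factor of order $q\max\{q,L\}/q!\cdot(q-1)!$, that is, an $O(\max\{q,L\})$ constant, since $\binom{n-1}{q-1}\sigma^2\le 1$. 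The two slot positions of the pair contribute $|\beta|^2$. Each new observable insertion through a given site contributes at most $|\lambda|\,\Gamma\,\max\{q,L\}$ by the definition of $\Gamma$. The normalization of $\tr$ absorbs the Majorana factors $2^{-q/2}$ that generate the $2^{q}$ in $\cJ$; I expect this to come from a sharper evaluation of $\tr$ or of the counting constants, giving $|\beta|^2\sigma^2$-type factors of order $\cJ^2|\beta|^2/q$. The factor $e^{c|\supp s|}$ adds at most $e^{cq}$ per new SYK set and $e^{cL}$ per insertion.

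The resulting tree-counting sum is a generating function $T$ satisfying an equation of the form $T = x\,\exp(T)$, and it converges when $x\le 1/e$. The main obstacle will be to do this counting tightly enough, with the $1/m!$ absorbing the orderings of slots, to reach the explicit constant. Choosing $c$ of order $1/q$ should make the per-pair factor take the form $e\,(y^2/2+y\sqrt2)$ with $y = |\beta|\cJ\sqrt{\max\{q,L\}/q}$. Setting this at most $1$ gives $y\le\sqrt2(\sqrt{1+1/e}-1)$, which matches the stated radius. The hypothesis $|\lambda|\Gamma\le 2^{-q/2}\max\{q,L\}^{-1/2}$ places the insertion contributions at the same scale. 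Once the Koteck\'y-Preiss condition holds uniformly on the disk, Theorem~\ref{thm:kp} shows that $\log\Xi$ converges absolutely, and therefore $\EE{\wh Z(\beta)} = \Xi\neq 0$.
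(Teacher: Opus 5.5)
Your proposal follows the paper's proof essentially step for step:
\begin{itemize}
\item the same Wick-pair polymers, built from the slot indicator $\eta$, the pairing $\pi$ and the supports;
\item factorization over intersection-graph components using commuting even strings (Lemmas~\ref{lem:decomp} and~\ref{lem:polymer});
\item Koteck\'y--Preiss with $a\propto|\supp|$, reduced to a per-site sum by a union bound (Lemma~\ref{lem:annealcrit});
\item rooted spanning-tree counting (Lemma~\ref{lem:count}(1)) together with $\binom{2k+r}{r}(2k-1)!!/(2k+r)!=1/(2^k k!\,r!)$;
\item the tree function $T$ (Fact~\ref{fac:tfunc}).
\end{itemize}

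Two steps you left open need to be pinned down.

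First, the factor $2^{-q}$ per pair does not come from a sharper count. It is the operator norm of the strings. With $\{\psi_i,\psi_j\}=\delta_{ij}$ one has $\psi_i^2=1/2$, so $\|\psi_I\|=2^{-q/2}$ and $\|\psi_{A_a}\|\le 1$. This gives $|\tr(\text{word})|\le 2^{-qk}$ for a polymer with $2k$ SYK slots. With only $|\tr|\le 1$, the radius would come out roughly a factor $2^{q/2}$ too small.

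Second, the constant must be exactly $c=1/(2\max\{q,L\})$, not merely of order $1/q$; otherwise $e^{cL}$ is uncontrolled when $L\gg q$. The per-site sum is then at most $\frac{1}{\max\{q,L\}}T(x)$, where
\begin{align}
x = e^{1/2}\Bigl(\tfrac{\max\{q,L\}}{2^{q+1}}|\beta|^2+\max\{q,L\}\,\Gamma|\lambda\beta|\Bigr).
\end{align}
The Koteck\'y--Preiss requirement is therefore $T(x)\le 1/2$, i.e.\ $x\le 1/(2\sqrt e)$. Mere convergence at $x\le 1/e$ is not enough; the choice $c=1/(2\max\{q,L\})$ is in fact the optimal one in this scheme. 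Under the hypothesis on $|\lambda|\Gamma$, the condition $x\le 1/(2\sqrt e)$ is exactly your $e(y^2/2+\sqrt2\,y)\le 1$. The stated radius then follows as in Lemma~\ref{lem:annealdisk}.
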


\begin{theorem}[Concentration of partition function]\label{thm:sa}
    Let
    \begin{align}\label{eq:C}
    	C = 2^{q/2-1}e^{-1/4}\lr{\sqrt{1+\frac{4}{e}}-1}.
    \end{align}
    For any
    \begin{align}
        |\lambda|\Gamma \leq \frac{2^{-q/2}}{\sqrt{q\max\{q,L\}}},
    \end{align}
    we have
    \begin{align}
        \sup_{|\beta| \leq C/\sqrt{q\max\{q,L\}}} \frac{\E \left|\wh Z(\beta) - \E \wh Z(\beta)\right|^2}{\left|\E \wh Z(\beta)\right|^2} = O\lr{n^{1-q/2}}.
    \end{align}
\end{theorem}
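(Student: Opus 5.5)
The plan is to write the second moment as a polymer partition function on two replicas and compare it with the product of two annealed expansions. Since $\E|\wh Z - \E\wh Z|^2 = \E[\wh Z(\beta)\wh Z(\overline\beta)] - |\E\wh Z(\beta)|^2$, it suffices to show
\begin{align}
\frac{\E[\wh Z(\beta)\wh Z(\overline\beta)]}{\E\wh Z(\beta)\,\E\wh Z(\overline\beta)} = 1 + O(n^{1-q/2})
\end{align}
uniformly on the disk. To set this up, I would expand both replicas in Taylor series, with each slot labelled (as in the annealed expansion of Theorem~\ref{thm:anneal}) by an indicator $\eta$ marking SYK versus observable insertions, and the SYK insertions paired by Wick's theorem. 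Pairs may now join slots in different replicas. Polymers are the connected components of the support-intersection graph on the union of both replicas' slots. Each polymer records its slots in replica 1 and in replica 2, and its weight is the product of the two normalized traces, which factorizes over components because disjointly supported even Majorana strings commute. Polymers are compatible when their supports are disjoint. Polymers with no mixed pair are exactly the single-replica polymers of Theorem~\ref{thm:anneal}; I call these separated polymers, and call the rest mixed polymers.

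The first step is to verify the Koteck\'y-Preiss condition for the full two-replica polymer system, with $a(s) = |\supp(s)|/4$ (or a similar linear function of the support) and $\rho(s) = |z_s|$. For a fixed site $x$, I would bound $\sum_{s \ni x} |z_s| e^{a(s)}$ by a tree-counting argument. Each added Wick pair costs $\sigma^2$ and has at most $O(k\,n^{q-1})$ placements that keep the component connected, and this is absorbed by the $1/k!$ prefactors. Each observable insertion costs $|\lambda|\Gamma$ per shared site. Doubling the replicas doubles the number of slot choices, and this is what shrinks the radius from that of Theorem~\ref{thm:anneal} to $C/\sqrt{q\max\{q,L\}}$ with $C$ as in \eqref{eq:C}. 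Once Koteck\'y-Preiss holds, $\log$ of the ratio equals the sum over clusters containing at least one mixed polymer. The standard Koteck\'y-Preiss tail bound then gives
\begin{align}
|\log(\text{ratio})| \le \sum_{s \text{ mixed}} |z_s|\, e^{a(s)+\text{(cluster correction)}}.
\end{align}
So it remains to show that the total weighted mass of mixed polymers is $O(n^{1-q/2})$. Note that this is a sum over all of $[n]$, not per site.

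Next I split mixed polymers into the two classes of Sec.~\ref{sec:cluster2}.

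\emph{Repeated-set polymers} (Lemma~\ref{lem:rep}): every mixed pair's support $K_e$ is reused elsewhere. I pick the first mixed pair. Its support has $O(n^q)$ choices and costs $\sigma^2$. Because the set is reused, one further pair (or observable support) is forced onto the same set, so it contributes $O(\sigma^2)$ rather than $O(n^{q-1}\sigma^2)$. All remaining pairs contribute $O(1)$ each, summed geometrically. The total is $O(n^q\sigma^4) = O(n^{2-q})$.

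\emph{Distinct-set polymers} (Lemma~\ref{lem:dist}): some mixed pair has a support $K_e$ that appears nowhere else. Using cyclicity I write the weight as $\tr(\psi_{K_e}O_1)\,\tr(\psi_{K_e}O_2)$. Each trace vanishes unless the product of the remaining strings in that replica equals $\psi_{K_e}$ up to phase. Deleting $K_e$ from the intersection graph of replica~1 therefore leaves components, each covering a nonempty even subset of $K_e$, so there are at most $q/2$ of them. I reorganize the sum: first sum over these components, each rooted at one of $n$ sites with spanning-tree counting giving $O(1)$ per pair. Their joint support then determines $K_e$. This yields $O(n^{q/2})$ configurations, times $\sigma^2$ for the mixed pair, for a total of $O(n^{q/2}\sigma^2) = O(n^{1-q/2})$. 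Replica~2 is then a component rooted on the now-fixed $K_e$ and contributes $O(1)$.

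The main obstacle is this distinct-set step. It requires rearranging the sum so that the reduction to $O(n^{q/2})$ survives the combinatorial factors: the $k$-dependent slot placements, the $1/k!$ normalizations, the Koteck\'y-Preiss weights $e^{a(s)}$, and the cluster corrections. All of these must stay uniform in $n$ inside the same radius. I would first try to carry $e^{a(s)}$ as a per-site factor $e^{1/4}$ and absorb it into the per-pair geometric ratio, which is where the $e^{-1/4}$ in $C$ comes from. Finally, $\log(\text{ratio}) = O(n^{1-q/2})$ gives $\text{ratio} - 1 = O(n^{1-q/2})$, and taking the supremum over the disk proves the theorem.
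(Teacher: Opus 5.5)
The architecture is the paper's: the second-moment ratio, two-replica Wick-pair polymers whose only cross-replica edges are mixed pairs, Koteck\'y--Preiss, the pinned bound reducing to the total mass of mixed polymers, and the repeated/distinct split. The gap is in the distinct-set step, which you flag as the main obstacle, and it is set up on the wrong graph.

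You delete $K_e$ from the intersection graph of replica~1 and claim every remaining component covers a nonempty even subset of $K_e$. You then treat replica~2 as a single component rooted at $K_e$. But polymers are connected only in $G^\mix$, the graph whose cross-replica edges are the mixed pairs, so neither replica need be internally connected.

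For example, replica~1 can contain three further mixed pairs whose supports have empty symmetric difference and are disjoint from $K_e$. These can be tied to the rest only through an unmixed pair in replica~2 that meets both clusters. Such a replica-1 component covers no part of $K_e$, the number of such components is not bounded by $q/2$, and rooting each freely costs an extra factor $n$. Your replica-2 step does not account for them either.

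The repair, which is the paper's argument, is to use the replica-blind support-intersection graph on the pair labels $K_1=K_e,K_2,\dots,K_k$ and the observable sets $A_{a_1},\dots,A_{a_r}$.
\begin{itemize}
\item This graph is connected whenever the polymer is.
\item Deleting $K_1$ leaves components $R_1,\dots,R_c$ with pairwise disjoint supports, each meeting $K_1$.
\item The condition $\tr(\psi_{K_1}O_1)\neq 0$ forces $\supp(O_1)=\bigsqcup_j B_{j,1}=K_1$. Hence $B_{j,1}=\supp(R_j)\cap K_1$ is nonempty and even, and $c\le q/2$.
\end{itemize}
This one decomposition handles both replicas at once.

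The bookkeeping you leave open is also what fixes the radius, and you attribute it to the wrong source. One sums over the at most $c^{k+r-1}\le (q/2)^{k+r-1}$ assignments of the remaining $k-1+r$ objects to components. One then applies Lemma~\ref{lem:count}(1) to each component with $n$ choices of root, giving $n^{c}\le n^{q/2}$, after which $K_1$ is determined. Finally one uses $\prod_u (g_uq+h_uL)^{g_u+h_u-1}\le[\alpha(k+r)]^{k+r-1-c}$.

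The factor $(q/2)^{k+r-1}$, combined with the replica labelings, is what puts the factor $q$ into \eqref{eq:concond1}, and that condition is what yields $C/\sqrt{q\max\{q,L\}}$. Replica doubling alone would permit $|\beta|$ of order $2^{q/2}/\sqrt{\max\{q,L\}}$. Your per-site KP count over all two-replica polymers, done with the replica-blind tree, does close at the stated radius, and it avoids the large-$n$ step of Lemma~\ref{lem:kpconn}. But it is not the binding constraint.

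Similarly, the $e^{-1/4}$ in $C$ comes from $a(P)=|\supp(P)|/(4\alpha)$ with $\alpha=\max\{q,L\}$, so that each string carries $e^{1/4}$, not each site. Finally, no ``cluster correction'' enters: Corollary~\ref{cor:kp} gives exactly $\sum_{P\in\cP^\mix\setminus\cP^\sep}|z_P|e^{a(P)}$.
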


\begin{corollary}[Absence of a phase transition]\label{cor:zf}
For any
\begin{align}
    |\lambda|\Gamma \leq \frac{2^{-q/2}}{\sqrt{q\max\{q,L\}}},
\end{align}
we have
\begin{align}
    \pr{\wh Z(\beta) \neq 0 \text{ for all } |\beta| < \frac{19C/20}{\sqrt{q\max\{q,L\}}}} \geq 1 - O\lr{n^{1-q/2}},
\end{align}
where $C$ is given by \eqref{eq:C}.
\end{corollary}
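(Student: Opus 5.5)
Set $R' = C/\sqrt{q\max\{q,L\}}$ (the radius in Theorem~\ref{thm:sa}), $r$ strictly between $19R'/20$ and $R'$, and $r' = 19R'/20$. The plan follows the sketch in Sec.~\ref{sec:overview}: combine the annealed zero-freeness of Theorem~\ref{thm:anneal}, the second-moment bound of Theorem~\ref{thm:sa}, Cauchy's integral formula, and Markov's inequality.

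First I check that the disk $|\beta|\le R'$ sits inside the annealed zero-free disk of Theorem~\ref{thm:anneal}. In terms of $\cJ$, Theorem~\ref{thm:anneal} allows $|\beta| \le \sqrt{2}(\sqrt{1+1/e}-1)\cJ^{-1}\sqrt{q/\max\{q,L\}}$. Since $\cJ^{-1} = 2^{(q-1)/2}/\sqrt q$, this radius equals $2^{q/2}(\sqrt{1+1/e}-1)/\sqrt{\max\{q,L\}}$. The claim is that this dominates $C/\sqrt{q\max\{q,L\}}$, i.e.\ $2^{q/2-1}e^{-1/4}(\sqrt{1+4/e}-1)/\sqrt q \le 2^{q/2}(\sqrt{1+1/e}-1)$, which is a routine numerical inequality. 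The $\lambda$ hypothesis of the corollary is also at least as strong as that of Theorem~\ref{thm:anneal}. Hence $\E\wh Z$ is analytic and nonvanishing on a neighborhood of the closed disk $|\beta|\le r$. Together with the entire function $\wh Z$ (a finite-dimensional trace), this makes $f(\beta) = (\wh Z(\beta)-\E\wh Z(\beta))/\E\wh Z(\beta)$ holomorphic there.

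Next, for $|\beta|\le r'$, Cauchy's formula on the circle $|w|=r$ gives $|f(\beta)| \le \frac{r}{r-r'}\cdot\frac{1}{2\pi}\int_0^{2\pi}|f(re^{i\theta})|\,d\theta$. Squaring and applying Cauchy--Schwarz/Jensen to the average yields the random, pathwise bound
\[
\sup_{|\beta|\le r'}|f(\beta)|^2 \le \Bigl(\frac{r}{r-r'}\Bigr)^2 \frac{1}{2\pi}\int_0^{2\pi}|f(re^{i\theta})|^2\,d\theta .
\]
I keep the angular average rather than a supremum, so that I can take the expectation inside by Fubini/Tonelli. That step is legitimate because $|f|^2$ is nonnegative and jointly measurable. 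Theorem~\ref{thm:sa} then bounds $\E|f(re^{i\theta})|^2$ uniformly in $\theta$ by $O(n^{1-q/2})$, since $r \le R'$. This gives $\E\sup_{|\beta|\le r'}|f|^2 \le (r/(r-r'))^2\, O(n^{1-q/2}) = O(n^{1-q/2})$, with $r/(r-r')$ a constant (choose, e.g., $r = 39R'/40$).

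Finally, on the event $\sup_{|\beta|\le r'}|f(\beta)|<1$ we have $\wh Z(\beta) = \E\wh Z(\beta)(1+f(\beta)) \ne 0$ on the disk. Markov's inequality gives $\pr{\sup|f|^2\ge 1} = O(n^{1-q/2})$, which is the claim. The only delicate point is the first step, verifying that the radii are nested so that $\E\wh Z$ has no zeros on the contour; everything else is mechanical given the two theorems.
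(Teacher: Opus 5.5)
Your proposal is correct and follows essentially the same route as the paper. Both arguments apply Cauchy's integral formula on a circle of radius strictly between the target radius and $C/\sqrt{q\max\{q,L\}}$, use Jensen (with the expectation moved inside the angular average) to bound $\E\sup|F|^2$ via Theorem~\ref{thm:sa}, apply Markov's inequality, and conclude from the annealed zero-freeness of Theorem~\ref{thm:anneal} that $\wh Z=\E\wh Z\,(1+F)\neq 0$.

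The differences are cosmetic. You take the inner radius to be exactly $19C/(20\sqrt{q\max\{q,L\}})$, which covers the whole open disk in one step. You check explicitly that the disk is nested inside the annealed zero-free disk, which the paper leaves implicit. You use Markov at threshold $1$ rather than $1/4$.
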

\begin{proof}
    Fix constants $R, r$ satisfying
    \begin{align}\label{eq:rR}
        0 < r < \frac{19C/20}{\sqrt{q\max\{q,L\}}}, \quad R = \frac{1}{2}\lr{r + \frac{C}{\sqrt{q\max\{q,L\}}}}
    \end{align}
    and define on $|\beta| \leq R$ the function
    \begin{align}
        F(\beta) = \frac{\wh Z(\beta) - \E \wh Z(\beta)}{\E \wh Z(\beta)}.
    \end{align}
    We rewrite this using Cauchy's integral formula on the circle $|w| = R$ as
    \begin{align}
        F(\beta) = \frac{1}{2\pi i} \oint_{|w|=R} \frac{F(w)}{w-\beta} dw
    \end{align}
    to obtain bound
    \begin{align}
        |F(\beta)| \leq \frac{1}{2\pi} \int_0^{2\pi} \frac{|F(R e^{i\theta})|and }{|Re^{i\theta}-\beta|} R d\theta \leq \frac{R}{R-r} \cdot \frac{1}{2\pi} \int_0^{2\pi} |F(Re^{i\theta})| d\theta.
    \end{align}
    Applying Jensen's inequality and taking the supremum over $|\beta| \leq r$, this gives
    \begin{align}
        \E\, \sup_{|\beta| \leq r} |F(\beta)|^2 \leq \lr{\frac{R}{R-r}}^2 \sup_{|w| = R} \E\,|F(w)|^2.
    \end{align}
    Applying Markov's inequality to the random variable $\sup_{|\beta| \leq r} |F(\beta)|^2$ gives
    \begin{align}
        \pr{\sup_{|\beta| \leq r} |F(\beta)|^2 \geq \frac{1}{4}} \leq 4 \E\, \sup_{|\beta| \leq r} |F(\beta)|^2 \leq 4\lr{\frac{R}{R-r}}^2 \sup_{|w| = R} \E\,|F(w)|^2.
    \end{align}
    Given our choice of $R$ in \eqref{eq:rR}, this combines with Theorem~\ref{thm:sa} to give
    \begin{align}
        \pr{\sup_{|\beta| \leq r} |F(\beta)| < \frac{1}{2}} \geq 1 - O(n^{1-q/2}).
    \end{align}
    Conditioned on this event, we conclude that
    \begin{align}
        |\wh Z(\beta)| = |\E \wh Z(\beta)| \, |1 + F(\beta)| \geq \frac{1}{2} |\E \wh Z(\beta)|.
    \end{align}
    By Theorem~\ref{thm:anneal}, we conclude that
    \begin{align}
        |\wh Z(\beta)| > 0 \quad \text{for all} \quad |\beta| \leq r
    \end{align}
    with probability $1 - O(n^{1-q/2})$.
\end{proof}

In Theorem~\ref{thm:obs}, we show the algorithmic consequence of this zero-freeness, which we informally stated in Theorem~\ref{thm:main2}.

We now describe the main existing technical tools we require, as well as some useful bounds.

\subsection{Commutation index}

Although our proofs don't require the commutation index, we formally recall it due to its relevance to prior results and for building intuition.
We denote the set of all $q$-body Majorana operators on $n$ modes by
\begin{align}
    \cS_q^n = \left\{\psi_I \,:\, I \in \binom{[n]}{q}\right\}
\end{align}
and recall a fact about its commutation index.

\begin{theorem}[Commutation index]\label{thm:comm}
    \begin{align}
        \Delta(\cS_q^n) = \sup_\rho \binom{n}{q}^{-1} \sum_{I \in \binom{[n]}{q}} \lr{\Tr(\psi_I \rho)}^2 = \Theta\lr{n^{-q/2}}.
    \end{align}
\end{theorem}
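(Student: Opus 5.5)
The claim has two directions: a lower bound $\Delta(\cS_q^n)=\Omega(n^{-q/2})$, which I would prove with an explicit state, and an upper bound $\Delta(\cS_q^n)=O(n^{-q/2})$, which I would prove through a Lov\'asz theta function argument on the anticommutation graph of $\cS_q^n$. The upper bound follows \cite{anschuetz2025strongly} and the Johnson-scheme computation of \cite{delsarte1973algebraic}, and all $q$-dependent constants are absorbed into $\Theta(\cdot)$. A useful normalization to record first: since $\psi_i^2=1/2$ and $q$ is even, the phase $i^{q/2}$ makes $\psi_I$ Hermitian, and $\psi_I^2 = 2^{-q}\,\mathbb{1}$.

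\textbf{Lower bound.} Group the Majoranas into the $n/2$ pairs $\{2j-1,2j\}$. Let $\rho$ be the Gaussian (Fock) state stabilized by the pair operators $i\psi_{2j-1}\psi_{2j}$, i.e.\ the state on which each $i\psi_{2j-1}\psi_{2j}$ takes an extremal eigenvalue, of magnitude $1/2$. For every $I$ that is a union of $q/2$ such pairs, $\psi_I$ is (up to sign) a product of commuting pair operators. Hence $|\Tr(\psi_I\rho)| = 2^{-q/2}$. There are $\binom{n/2}{q/2}=\Theta(n^{q/2})$ such sets $I$. Dividing their total contribution by $\binom{n}{q}=\Theta(n^q)$ gives $\Delta \geq \Theta(n^{q/2})\cdot 2^{-q}/\binom nq = \Omega(n^{-q/2})$.

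\textbf{Upper bound.} Fix $\rho$ and set $c_I=\Tr(\psi_I\rho)$ and $X=\sum_I c_I\psi_I$. By Cauchy--Schwarz,
\begin{align}
\Big(\sum_I c_I^2\Big)^2=\Tr(\rho X)^2\le \Tr(\rho X^2)=\sum_{I,J} c_Ic_J\,\tfrac12\Tr\big(\rho\{\psi_I,\psi_J\}\big).
\end{align}
For even $q$, $\psi_I$ and $\psi_J$ anticommute exactly when $|I\cap J|$ is odd, so the anticommutator vanishes on those pairs. It also vanishes for all other pairs unless the product $\psi_I\psi_J$ has a nonzero expectation. The standard route from here is to set $v_I=\psi_I\rho^{1/2}$, note that these vectors are pairwise orthogonal (in the anticommutator inner product) on edges of the anticommutation graph $G_q^n$, and recognize this as an orthonormal-representation bound. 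This yields
\begin{align}
\sum_I c_I^2 \le 2^{-q}\,\vartheta(G),
\end{align}
where $G$ is the graph on $\binom{[n]}{q}$ with edges between anticommuting pairs (or its complement, depending on the convention for $\vartheta$). I would take this reduction from \cite{anschuetz2025strongly,king2025triply}, consistent with how the paper introduces $\Delta$ in~\eqref{eq:commind}. It then remains to show $\vartheta = O\big(\binom nq n^{-q/2}\big)$.

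\textbf{Main obstacle: bounding $\vartheta$.} The graph is a union of classes of the Johnson scheme $J(n,q)$: $I\sim J$ iff $|I\cap J|$ is odd. It is therefore vertex-transitive, and its adjacency eigenvalues are explicit signed sums of Eberlein polynomials $E_k(j)$, $k=0,\dots,q$, with one eigenvalue per eigenspace $V_j$. For a vertex-transitive graph, Lov\'asz's bound gives
\begin{align}
\vartheta \le N\cdot\frac{-\lambda_{\min}}{\lambda_{\max}-\lambda_{\min}}, \qquad N=\binom nq .
\end{align}
If needed, I would instead apply the Delsarte LP bound, which is tight for association schemes. The degree is $\lambda_{\max}=\Theta(n^{q-1})$, coming from the $|I\cap J|=1$ class. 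The remaining task is to show that the most negative eigenvalue satisfies $|\lambda_{\min}| = O(n^{q/2-1})$, so that the ratio is $O(n^{-q/2})$. The expected extremizer is the eigenspace $j=q/2$. Heuristically, the state built in the lower bound saturates there, which is also why both bounds scale as $n^{-q/2}$.

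I would carry this out via the generating-function representation of Eberlein polynomials. Summing over odd $k$ gives
\begin{align}
\lambda_j=\tfrac12\big[\textstyle\sum_k E_k(j)-\sum_k(-1)^kE_k(j)\big],
\end{align}
and each of these two sums is a coefficient of $(1+x)^{n-2j}(1-x^2)^j$-type generating functions evaluated at $x=\pm1$. This gives closed forms whose leading order in $n$ can be read off for each fixed $j\le q$. The delicate point, and the step I expect to be hardest, is controlling the signs and cancellations uniformly in $j$ to confirm that no other eigenspace produces a more negative eigenvalue than the $\Theta(n^{q/2-1})$ scaling.
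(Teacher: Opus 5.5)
Your lower bound is correct, and so is the reduction $\sum_I \Tr(\psi_I\rho)^2\leq 2^{-q}\vartheta(G)$ with $G$ the anticommutation graph itself (in Lov\'asz's convention, so no complement). The paper does not prove this theorem; it imports it from \cite{anschuetz2025strongly}, which uses the same Lov\'asz-theta/Johnson-scheme strategy. The gap is your final spectral step. It is not a matter of tracking signs: the claim $|\lambda_{\min}|=O(n^{q/2-1})$ is false. Your own identity $\lambda_j=\frac12\big[\sum_k E_k(j)-\sum_k(-1)^kE_k(j)\big]$, together with $\sum_k E_k(j)=0$ for $j\geq 1$, evaluates exactly to
\[
\lambda_j=-2^{j-1}\sum_{m\geq 0}(-1)^m\binom{q-j}{m}\binom{n-q-j}{m},\qquad 1\leq j\leq q .
\]
Hence $\lambda_j\approx-(-1)^j2^{j-1}\binom{n-q-j}{q-j}$. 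The most negative eigenvalue is therefore $\lambda_2=-2\binom{n-q-2}{q-2}\big(1+O(1/n)\big)=-\Theta(n^{q-2})$, on $V_2$ rather than $V_{q/2}$; for $q=4$ it is exactly $-(n-6)(n-7)+4(n-6)-2$. Since $N\cdot\frac{-\lambda_{\min}}{\lambda_{\max}-\lambda_{\min}}$ increases with $-\lambda_{\min}$, the ratio bound is of order $N n^{q-2}/n^{q-1}=N/n$. So this route gives at best $\Delta=O(1/n)$, a factor $n^{q/2-1}$ too weak. The obstruction is structural: one common weight on all odd-intersection classes cannot cancel the large negative eigenvalues on the low eigenspaces.

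What is missing is a class-dependent weighting, i.e.\ the full symmetrized theta program rather than its one-parameter Hoffman relaxation. By Lov\'asz's dual characterization, $\vartheta(G)\leq\lambda_{\max}(M)$ for every symmetric $M$ with $M_{IJ}=1$ whenever $|I\cap J|$ is even. Take $M=\mathbf{1}\mathbf{1}^{\top}+\sum_{i\ \mathrm{odd}}y_iD_i$, where $D_i$ is the adjacency matrix of the relation $|I\cap J|=i$; your ratio bound is the special case $y_i\equiv y$. With $y_i=-c_in^{(i+1)/2}$, a leading-order look at the Eberlein polynomials shows:
\begin{enumerate}
\item On $V_0$, the $n^q$ term is negative once $c_1>1/q^2$.
\item On even $j\in\{2,\dots,q-2\}$, the classes $i=j\mp1$ contribute $\big(j\,c_{j-1}-(q-j)^2c_{j+1}\big)n^{q-j/2}/(q-j)!$ at leading order. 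This is negative once $c_{j+1}>j\,c_{j-1}/(q-j)^2$.
\item On odd $j$, every leading contribution is negative.
\item On $V_q$, the eigenvalue of $D_i$ is exactly $-\binom{q}{i}$ for odd $i$.
\end{enumerate}
Hence for large $n$, $\lambda_{\max}(M)=\sum_{i\ \mathrm{odd}}c_i\binom{q}{i}n^{(i+1)/2}=\Theta(n^{q/2})$, which is the bound you need. For $q=4$, the tuned choice $y_1=-(n+6)/16$, $y_3=-(n^2+12n)/32$ gives $\lambda_{\max}(M)=n^2/8+O(n)$, matching your $\binom{n/2}{2}$.

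Your fallback to ``the Delsarte LP'' would have to mean exactly this program. Delsarte's LP proper computes Schrijver's $\vartheta'\leq\vartheta$, so bounding it does not bound the quantity your orthonormal-representation reduction produces. Either way, a weighted dual certificate of this kind is the real content of the upper bound, and your proposal does not supply it.
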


In the heuristic argument described in the introduction, we used the following second moment bound.

\begin{lemma}[Commutation index for second moment]\label{lem:comm2}
    For every trace-class operator $X$,
    \begin{align}
        \binom{n}{q}^{-1} \sum_{I \in \binom{[n]}{q}} \abs{\tr(\psi_I X)}^2 \leq 2 \Delta(\cS_q^n)(\tr\abs{X})^2.
    \end{align}
\end{lemma}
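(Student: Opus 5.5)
The plan is to reduce to positive semidefinite operators, where the bound is just the definition of $\Delta(\cS_q^n)$, and to pay the factor $2$ only once, for splitting $X$ into Hermitian and anti-Hermitian parts.

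\textbf{Step 1 (Hermitian splitting).} Write $X = A + iB$ with $A = (X+X^\dagger)/2$ and $B = (X - X^\dagger)/(2i)$, both Hermitian. Since each $\psi_I$ is Hermitian, $\tr(\psi_I A)$ and $\tr(\psi_I B)$ are real. Hence
\begin{align}
    \abs{\tr(\psi_I X)}^2 = \tr(\psi_I A)^2 + \tr(\psi_I B)^2
\end{align}
holds exactly, with no cross term. Define the quadratic form $f(Y) = \sum_{I} \tr(\psi_I Y)^2$ on Hermitian $Y$. It suffices to show
\begin{align}
    f(Y) \leq \binom{n}{q}\Delta(\cS_q^n)\,(\tr|Y|)^2 .
\end{align}
We also need $\tr|A|, \tr|B| \leq \tr|X|$. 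Both follow from the triangle inequality for the trace norm together with $\tr|X^\dagger| = \tr|X|$. Summing the two contributions then gives the factor $2(\tr|X|)^2$.

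\textbf{Step 2 (positive operators).} Let $P \succeq 0$ with $P \neq 0$. Then $\rho = P/\Tr P$ is a state, and $\tr(\psi_I P)/\tr P = \Tr(\psi_I \rho)$, because the normalization $1/d$ cancels. The definition of $\Delta(\cS_q^n)$ in Theorem~\ref{thm:comm} is a supremum over all states $\rho$, mixed ones included. It therefore gives directly
\begin{align}
    f(P) \leq \binom{n}{q}\Delta(\cS_q^n)(\tr P)^2 .
\end{align}

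\textbf{Step 3 (general Hermitian $Y$).} Take the Jordan decomposition $Y = Y_+ - Y_-$ with $Y_\pm \succeq 0$, so that $\tr|Y| = \tr Y_+ + \tr Y_-$. The quantity $\sqrt{f(\cdot)}$ is the Euclidean norm of the linear map $Y \mapsto (\tr(\psi_I Y))_I$, so it is a seminorm. By the triangle inequality and Step 2,
\begin{align}
    \sqrt{f(Y)} \leq \sqrt{f(Y_+)} + \sqrt{f(Y_-)} \leq \sqrt{\tbinom{n}{q}\Delta(\cS_q^n)}\,(\tr Y_+ + \tr Y_-) .
\end{align}
Squaring gives the Step 1 target. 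Applying it to $A$ and $B$ and using $(\tr|A|)^2 + (\tr|B|)^2 \leq 2(\tr|X|)^2$ yields the lemma with exactly the constant $2$.

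\textbf{Main obstacle.} The only delicate point is not losing constants. A naive bound $|a+ib|^2 \leq 2(a^2+b^2)$, or splitting $X$ into four positive parts and applying Cauchy--Schwarz, would give $4$ or worse. Two observations avoid this: the realness of $\tr(\psi_I A)$ removes the cross term exactly, and Step 3 works with the seminorm $\sqrt{f}$ rather than with $f$ itself, so it loses no constant.
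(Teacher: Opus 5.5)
Your proof is correct and follows the paper's argument essentially step for step. The paper also handles a Hermitian operator through its Jordan decomposition $H = H_+ - H_-$ and the triangle inequality for the Euclidean norm of $(\Tr(\psi_I H))_I$, applying the definition of $\Delta(\cS_q^n)$ to the states $H_\pm/\Tr H_\pm$; it then splits $X = A + iB$, drops the cross term by realness of $\Tr(\psi_I A)$ and $\Tr(\psi_I B)$, and uses $\tr|A|, \tr|B| \leq \tr|X|$ to obtain the factor $2$, so the only differences are the order of presentation and your explicit remark that the $1/d$ normalization cancels.
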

\begin{proof}
Let $H$ be Hermitian with decomposition $H = H_+ - H_-$ onto positive and negative eigenvalues. Set $\rho_+ = H_+/\Tr(H_+)$ and $\rho_- = H_-/\Tr(H_-)$, omitting either quantity if vanishing. Since $\rho_\pm$ are positive and $\psi_I$ is Hermitian,
\begin{align}
    \lr{\binom{n}{q}^{-1}\sum_I \abs{\Tr(\psi_I H)}^2}^{1/2} &\leq \Tr(H_+)\lr{\binom{n}{q}^{-1}\sum_I \abs{\Tr(\psi_I \rho_+)}^2}^{1/2}\nonumber\\
    &\quad + \Tr(H_-)\lr{\binom{n}{q}^{-1}\sum_I \abs{\Tr(\psi_I \rho_-)}^2}^{1/2}.
\end{align}
Applying the definition
\begin{align}
    \Delta(\cS_q^n) = \sup_\rho \binom{n}{q}^{-1} \sum_{I \in \binom{[n]}{q}} \lr{\Tr(\psi_I \rho)}^2
\end{align}
and squaring both sides gives
\begin{align}
    \binom{n}{q}^{-1}\sum_I \abs{\Tr(\psi_I H)}^2 \leq \lr{\left[\Tr(H_+) + \Tr(H_-)\right]\sqrt{\Delta(\cS_q^n)}}^2 = \Delta(\cS_q^n) \norm{H}_1^2.
\end{align}
Now decompose $X = A + iB$ for Hermitian $A, B$ and apply the above result to obtain
\begin{align}
    \binom{n}{q}^{-1}\sum_I \abs{\Tr(\psi_I X)}^2 &= \binom{n}{q}^{-1}\sum_I \abs{\Tr(\psi_I A)}^2 + \binom{n}{q}^{-1}\sum_I \abs{\Tr(\psi_I B)}^2 \\
    &\leq \Delta(\cS_q^n)\lr{\norm{A}_1^2 + \norm{B}_1^2} \\
    &\leq 2\Delta(\cS_q^n)\norm{X}_1^2,
\end{align}
where the first equality uses the fact that $\Tr(\psi_I A), \Tr(\psi_I B) \in \R$, and the final inequality uses $\norm{A}_1, \norm{B}_1 \leq \norm{X}_1$ since $A = (X+X^\dagger)/2$ implies $\norm{A}_1 \leq (\norm{X}_1 + \norm{X^\dagger}_1)/2$ and similarly for $B$. We state the final result in terms of the normalized trace.
\end{proof}

\subsection{Koteck\'y-Preiss}

Our main results rely on Koteck\'y-Preiss. We state the theorem as it is presented in~\cite{fernandez2007cluster} (see (2.6) and (2.14)) and give a slightly more general version of the corollary they report there.

\begin{theorem}[Koteck\'y-Preiss]\label{thm:kp}
    Let $\cP$ be a polymer set with incompatibility relation $\nsim$ such that every polymer $s \in \cP$ satisfies $s \nsim s$. For each finite $\Lambda \subset \cP$ and vector $(z_s)_s \in \C^\cP$, define the polymer partition function
    \begin{align}
        \Xi_\Lambda(z) = 1 + \sum_{r \geq 1} \frac{1}{r!} \sum_{(s_1,\dots,s_r) \in \Lambda^r} \lr{\prod_{j=1}^r z_{s_j}}\lr{\prod_{1 \leq j<k \leq r}1\{s_j\sim s_k\}}.
    \end{align}
    Let $a, \rho:\cP\to[0,\infty)$ be functions such that
    \begin{align}\label{eq:kp}
        \sum_{s' \in \cP \,:\, s' \nsim s} \rho_{s'} e^{a(s')} \leq a(s) \text{ for all } s \in \cP.
    \end{align}
    Let $\cG(s_1, \dots, s_r)$ be the graph with vertex set $[r]$ and edge set
    \begin{align}
        E(\cG(s_1, \dots, s_r)) = \left\{\{i,j\}:1\leq i < j \leq r, \, s_i \nsim s_j \right\}.
    \end{align}
    Define the Ursell coefficient
    \begin{align}
        \phi^T(s_1, \dots, s_r) = \sum_{\substack{G \text{ connected graph on }[r]\\ E(G) \subseteq E(\cG(s_1,\dots,s_r))}} (-1)^{|E(G)|},
    \end{align}
    with the convention $\phi^T(s_1) = 1$ when $r=1$. Then for every finite $\Lambda \subset \cP$ and $z$ satisfying $|z_s| \leq \rho_s$ for all $s \in \cP$, one has
    \begin{align}
        \Xi_\Lambda(z) \neq 0
    \end{align}
    and the cluster expansion
    \begin{align}
        \log \Xi_\Lambda(z) &= \sum_{r \geq 1} \frac{1}{r!} \sum_{(s_1,\dots,s_r)\in \Lambda^r} \phi^T(s_1,\dots,s_r) \prod_{j=1}^r z_{s_j}
    \end{align}
    converges absolutely and has termwise derivative
    \begin{align}
        \partial_{z_s} \log \Xi_\Lambda(z) &= 1 + \sum_{r \geq 1} \frac{1}{r!} \sum_{(s_1,\dots,s_r)\in \Lambda^r} \phi^T(s,s_1,\dots,s_r) \prod_{j=1}^r z_{s_j},
    \end{align}
    which also converges absolutely. Finally, for every $s \in \cP$,
    \begin{align}\label{eq:pibound}
        1 + \sum_{r \geq 1} \frac{1}{r!} \sum_{(s_1,\dots,s_r)\in \cP^r} \abs{\phi^T(s, s_1,\dots,s_r)} \prod_{j=1}^r \rho_{s_j} \leq e^{a(s)},
    \end{align}
    i.e., for every $s \in \Lambda \subset \cP$ and $z$ satisfying $|z_s| \leq \rho_s$ for all $s \in \cP$,
    \begin{align}
        \abs{\frac{\partial_{z_s} \Xi_\Lambda(z)}{\Xi_\Lambda(z)}} \leq e^{a(s)}.
    \end{align}
\end{theorem}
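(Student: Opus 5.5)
The plan is the standard tree-graph proof of Koteck\'y-Preiss, in the form of~\cite{fernandez2007cluster}. Everything reduces to one combinatorial estimate, \eqref{eq:pibound}. Once that is established, the remaining claims (absolute convergence, $\Xi_\Lambda \neq 0$, the formula for $\log\Xi_\Lambda$, and the derivative bound) follow by analyticity.

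\textbf{Step 1 (tree-graph bound).} First I would bound $|\phi^T(s_1,\dots,s_r)|$ by the number of spanning trees of $\cG(s_1,\dots,s_r)$. For this I will use Penrose's partition scheme. Fix a rule that assigns to every connected graph $G \subseteq \cG$ a spanning tree $T(G)$, for instance via breadth-first search from vertex $1$. The connected subgraphs of $\cG$ are then partitioned into boolean intervals $[T, R(T)]$, one for each Penrose tree $T$. Each interval with $R(T) \neq T$ contributes $0$ to the alternating sum. This gives $|\phi^T| \le \#\{\text{Penrose trees}\} \le \#\{\text{spanning trees of } \cG\}$.

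\textbf{Step 2 (bounding the tree sum).} By Step 1, the left side of \eqref{eq:pibound} is at most
\begin{align}
T(s) = \sum_{r \ge 0}\frac{1}{r!}\sum_{(s_1,\dots,s_r)\in\cP^r}\ \sum_{\tau \text{ tree on } \{0,\dots,r\}} \prod_{\{i,j\}\in\tau} 1\{s_i\nsim s_j\}\prod_{j=1}^r \rho_{s_j}, \qquad s_0 = s.
\end{align}
I will root each tree at $0$ and truncate by depth, writing $T_N(s)$ for the sum over trees of depth at most $N$. Decomposing at the root gives the recursion
\begin{align}
T_{N+1}(s) = \sum_{k\ge0}\frac{1}{k!}\Bigl(\sum_{s'\nsim s}\rho_{s'}T_N(s')\Bigr)^k = \exp\Bigl(\sum_{s'\nsim s}\rho_{s'}T_N(s')\Bigr).
\end{align}
Here the $1/r!$ together with the ordered tuples produces the $1/k!$, because unlabeled rooted trees with labelled vertices factorize over the subtrees of the root's children. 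Since $T_0(s) = 1 \le e^{a(s)}$, induction combined with the hypothesis \eqref{eq:kp} gives
\begin{align}
T_{N+1}(s) \le \exp\Bigl(\sum_{s'\nsim s}\rho_{s'}e^{a(s')}\Bigr) \le e^{a(s)}.
\end{align}
Monotone convergence in $N$ then yields \eqref{eq:pibound}. Note that $s\nsim s$ is exactly what allows repeated polymers to appear in the trees.

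\textbf{Step 3 (analytic consequences).} First, the cluster series for $\log\Xi_\Lambda$ is absolutely convergent on the polydisk $\{|z_s|\le\rho_s\}$. To see this, bound it by $\sum_{s\in\Lambda}\rho_s$ times the rooted sum from \eqref{eq:pibound}, which is at most $\sum_{s\in\Lambda}\rho_s e^{a(s)} < \infty$ because $\Lambda$ is finite.

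Next, the Mayer identity $\Xi_\Lambda = \exp(\sum \text{clusters})$ holds as an identity of formal power series in $(z_s)_{s\in\Lambda}$. This is the usual exponential formula for connected graphs. Absolute convergence upgrades it to an identity of analytic functions on the polydisk, so $\Xi_\Lambda(z) = e^{(\cdot)} \neq 0$ there.

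Finally, differentiating term by term is justified by absolute convergence. A cluster in which $s$ appears $m$ times contributes a factor $m$ when differentiated. Relabelling so that $s$ occupies the first slot cancels this factor against the $1/r!$, which gives the stated formula for $\partial_{z_s}\log\Xi_\Lambda$. The final bound $|\partial_{z_s}\Xi_\Lambda/\Xi_\Lambda| \le e^{a(s)}$ is then immediate from \eqref{eq:pibound}.

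\textbf{Main obstacle.} The step I expect to need the most care is Step 1, the Penrose tree-graph inequality. In particular, one must choose the partition scheme so that it is valid for an arbitrary incompatibility graph $\cG$. If this proves delicate, I would instead bound $|\phi^T|$ by the number of spanning trees using the Brydges--Battle--Federbush interpolation formula, which gives the same inequality. The bookkeeping of the $1/r!$ factors in Step 2 against ordered tuples versus unordered subtrees is the other place where errors could creep in.
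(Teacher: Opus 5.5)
Your argument is correct. It is the standard tree-graph proof from \cite{fernandez2007cluster}, which the paper cites for this theorem instead of proving it: first the Penrose bound $|\phi^T|\le\#\{\text{spanning trees of }\cG\}$, then the depth-truncated recursion $T_{N+1}(s)=\exp\bigl(\sum_{s'\nsim s}\rho_{s'}T_N(s')\bigr)$, closed by the hypothesis \eqref{eq:kp}. Bounding by all spanning trees, rather than only by Penrose trees (whose siblings are pairwise compatible), is exactly what gives the Koteck\'y--Preiss condition instead of the sharper Fern\'andez--Procacci one.
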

\begin{corollary}[Pinned Koteck\'y-Preiss]\label{cor:kp}
    Let $\Lambda \subset \cP$ be finite and let $z$ satisfy $|z_s| \leq \rho_s$ for all $s \in \cP$. For any $\Lambda_0 \subseteq \Lambda$,
    \begin{align}
        \abs{\log\frac{\Xi_\Lambda(z)}{\Xi_{\Lambda_0}(z)}} \leq \sum_{s \in \Lambda\setminus \Lambda_0} |z_s|e^{a(s)}.
    \end{align}
\end{corollary}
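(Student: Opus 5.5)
The plan is to prove Corollary~\ref{cor:kp} by interpolating between $\Lambda_0$ and $\Lambda$ through the activities, and then integrating the logarithmic derivative bound of Theorem~\ref{thm:kp}. Define $z^{(t)}$ for $t\in[0,1]$ by $z^{(t)}_s = z_s$ for $s\in\Lambda_0$ and $z^{(t)}_s = t z_s$ for $s \in \Lambda\setminus\Lambda_0$. Two facts make this work. First, $\Xi_\Lambda(z^{(0)}) = \Xi_{\Lambda_0}(z)$, because any tuple containing a polymer outside $\Lambda_0$ then carries a zero factor. Second, $|z^{(t)}_s| \le |z_s| \le \rho_s$ for every $s$, so $z^{(t)}$ satisfies the hypotheses of Theorem~\ref{thm:kp} uniformly in $t$. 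Hence $\Xi_\Lambda(z^{(t)})\neq 0$ on the whole segment.

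Since $\Xi_\Lambda$ is a polynomial in finitely many variables and never vanishes along the path, $t\mapsto \log\Xi_\Lambda(z^{(t)})$ has a continuous branch. I will take the branch given by the absolutely convergent cluster expansion of Theorem~\ref{thm:kp}. This is the natural meaning of $\log$ in the statement, since both $\log\Xi_\Lambda$ and $\log\Xi_{\Lambda_0}$ are defined by their cluster expansions. The chain rule then gives
\begin{align}
\log\frac{\Xi_\Lambda(z)}{\Xi_{\Lambda_0}(z)} = \int_0^1 \sum_{s\in\Lambda\setminus\Lambda_0} z_s\, \frac{\partial_{z_s}\Xi_\Lambda(z^{(t)})}{\Xi_\Lambda(z^{(t)})}\, dt .
\end{align}
Applying the final bound of Theorem~\ref{thm:kp}, $|\partial_{z_s}\Xi_\Lambda/\Xi_\Lambda|\le e^{a(s)}$ at each point $z^{(t)}$, bounds the integrand by $\sum_{s\in\Lambda\setminus\Lambda_0}|z_s|e^{a(s)}$ uniformly in $t$. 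Integrating over $[0,1]$ yields exactly the claimed estimate.

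The only delicate point is checking that the branch of the logarithm is consistent with the cluster-expansion logarithms of $\Xi_\Lambda$ and $\Xi_{\Lambda_0}$. I will handle this using termwise differentiability of the absolutely convergent expansion from Theorem~\ref{thm:kp}, applied at each $z^{(t)}$ (all of which are admissible). This shows that the cluster-expansion series $\log\Xi_\Lambda(z^{(t)})$ is a $C^1$ function of $t$. Its derivative is the displayed integrand, and at $t=0$ it coincides with the cluster expansion for $\Lambda_0$: every cluster containing a polymer outside $\Lambda_0$ drops out, and what remains is precisely the series for $\Xi_{\Lambda_0}(z)$. With that, the fundamental theorem of calculus applies directly and the proof is complete.
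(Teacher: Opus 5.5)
Your proposal is correct and follows the paper's proof essentially verbatim: the same interpolation $z_s\mapsto t z_s$ on $\Lambda\setminus\Lambda_0$, the chain rule, the bound $|\partial_{z_s}\log\Xi_\Lambda|\le e^{a(s)}$ from Theorem~\ref{thm:kp} applied at each admissible $z^{(t)}$, and integration over $[0,1]$. Your remark that the cluster-expansion branch of $\log\Xi_\Lambda(z^{(t)})$ is continuous in $t$ and reduces to the $\Lambda_0$ expansion at $t=0$ is a point the paper leaves implicit, and it is handled correctly.
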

\begin{proof}
    For $|\lambda| \leq 1$, define
    \begin{align}
        z_s^{(\lambda)} = \begin{cases}
            \lambda z_s & s \in \Lambda \setminus \Lambda_0 \\
            z_s & s \notin \Lambda \setminus \Lambda_0.
        \end{cases}
    \end{align}
    Then since $|z_s^{(\lambda)}| \leq |z_s| \leq \rho_s$, we have
    \begin{align}
        \Xi_\Lambda(z^{(\lambda)}) \neq 0.
    \end{align}
    The chain rule gives
    \begin{align}
        \frac{d}{d\lambda} \log \Xi_\Lambda(z^{(\lambda)}) = \sum_{s \in \Lambda \setminus \Lambda_0} z_s \partial_{z_s} \log \Xi_\Lambda(z^{(\lambda)}).
    \end{align}
    By Theorem~\ref{thm:kp}, we have derivative
    \begin{align}
        \partial_{z_s} \log \Xi_\Lambda(z^{(\lambda)}) = 1 + \sum_{r\geq 1} \frac{1}{r!} \sum_{(s_1,\dots,s_r)\in\Lambda^r} \phi^T(s,s_1,\dots,s_r) \prod_{j=1}^r z_{s_j}^{(\lambda)}
    \end{align}
    and thus
    \begin{align}
        \abs{\partial_{z_s} \log \Xi_\Lambda(z^{(\lambda)})} &\leq 1 + \sum_{r\geq 1} \frac{1}{r!} \sum_{(s_1,\dots,s_r)\in\Lambda^r} \abs{\phi^T(s,s_1,\dots,s_r)} \abs{z_{s_1}^{(\lambda)}}\cdots\abs{z_{s_r}^{(\lambda)}}\\
        &\leq 1 + \sum_{r\geq 1} \frac{1}{r!} \sum_{(s_1,\dots,s_r)\in\cP^r} \abs{\phi^T(s,s_1,\dots,s_r)} \rho_{s_1} \cdots \rho_{s_r}\\
        &\leq e^{a(s)}
    \end{align}
    where the final inequality follows from \eqref{eq:pibound}.
    Hence, we have
    \begin{align}
        \left|\frac{d}{d\lambda}\log \Xi_\Lambda(z^{(\lambda)})\right| \leq \sum_{s \in \Lambda \setminus \Lambda_0} |z_s| \left|\partial_{z_s} \log \Xi_\Lambda(z^{(\lambda)})\right| \leq \sum_{s \in \Lambda \setminus \Lambda_0}|z_s|e^{a(s)}.
    \end{align}
    Since $\Xi_\Lambda(z) = \Xi_\Lambda(z^{(1)})$ and $\Xi_{\Lambda_0}(z) = \Xi_\Lambda(z^{(0)})$, we can integrate to obtain
    \begin{align}
        \log \frac{\Xi_\Lambda(z)}{\Xi_{\Lambda_0}(z)} = \int_0^1 \frac{d}{d\lambda} \log \Xi_\Lambda(z^{(\lambda)}) d\lambda,
    \end{align}
    which gives
    \begin{align}
        \left|\log \frac{\Xi_\Lambda(z)}{\Xi_{\Lambda_0}(z)}\right| \leq \int_0^1 \sum_{s \in \Lambda\setminus\Lambda_0} |z_s| e^{a(s)} d\lambda = \sum_{s \in \Lambda\setminus\Lambda_0} |z_s|e^{a(s)}.
    \end{align}
\end{proof}

\subsection{Computing thermal expectations}

We give a standard argument for estimating thermal expectations $\Tr(O\rho_\beta) = -\frac{1}{\beta} \partial_\lambda\big|_{\lambda=0} \log Z_\lambda(\beta)$ for partition function $Z_\lambda$ of $H_0 + \lambda O$ via a finite difference at small $\lambda$. Due to the approximation of the derivative by a finite difference, smaller $\lambda$ reduces the error; it suffices to take inverse polynomial $\lambda$ to obtain inverse polynomial additive error $\epsilon$ of the thermal expectation. (See Lemma 11 of \cite{bravyi2021complexity} for a similar discussion of this argument.) Since the perturbed Hamiltonian's partition function remains zero-free at constant $\lambda$ given a local, bounded-norm observable, this argument holds for any relevant choice of $\epsilon$. Finally, to compute the coefficients in the expansion of $\log Z_\lambda$, we compute the cumulant expansion from moments of the Hamiltonian, similarly to \cite{harrow2020classical} and prior works. Accounting for all these steps yields a final runtime of $n^{O(\log n/\epsilon)}$.

\begin{theorem}[Estimating local observables]\label{thm:obs}
	There is a deterministic classical algorithm which, given the couplings $\{J_I\}$ of an SYK instance $H_0$,
	the coefficients $c_a$ and sets $A_a \subset [n]$ of an observable
	\begin{align}
		O = \sum_{a=1}^M c_a \psi_{A_a}, \qquad L = \max_a |A_a|, \qquad \Gamma = \max_{x\in[n]} \sum_{a:x\in A_a}|c_a|,
	\end{align}
	an inverse temperature satisfying for $C$ given by \eqref{eq:C}
	\begin{align}
		0 \leq \beta < \frac{9C/10}{\sqrt{q\max\{q,L\}}},
	\end{align}
	and any
	\begin{align}
		0 < \epsilon < \frac{2^{3-q/2}\beta n^2 \Gamma}{\sqrt{q\max\{q,L\}}},
	\end{align}
	outputs in time $n^{O(L \log n \Gamma/\epsilon)}$ an estimate of $\Tr(O \rho_\beta)$ within additive error $\epsilon$ with probability $1 - O(n^{1-q/2})$ over the SYK disorder.
\end{theorem}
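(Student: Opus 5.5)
We follow Barvinok's interpolation method. We work with the normalized partition function $\wh Z(\beta,\lambda)=\tr e^{-\beta(H_0+\lambda O)}$, which is a polynomial-like entire function of $\lambda$ for fixed real $\beta$. We use $\Tr(O\rho_\beta)=-\frac1\beta\partial_\lambda\log \wh Z(\beta,\lambda)|_{\lambda=0}$.

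\textbf{Step 1: a zero-free disk in $\lambda$.} Fix $\beta<\frac{9C/10}{\sqrt{q\max\{q,L\}}}$. Set $\lambda_0=\frac{2^{-q/2}}{\Gamma\sqrt{q\max\{q,L\}}}$. For complex $\lambda$, the product $\beta\lambda$ is what enters the expansion. We therefore apply Corollary~\ref{cor:zf} with the observable rescaled by a complex phase. Equivalently, we rerun Theorems~\ref{thm:anneal} and~\ref{thm:sa} with $|\lambda|\Gamma\le 2^{-q/2}/\sqrt{q\max\{q,L\}}$; their cluster bounds only use $|\lambda|$. We also need the Cauchy argument to be uniform in $\lambda$. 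For this we take a finite net of $\lambda$ values, or treat the pair $(\beta,\lambda)$ jointly as a two-variable Cauchy estimate. The conclusion is that, with probability $1-O(n^{1-q/2})$, $\wh Z(\beta,\lambda)\ne0$ for all $|\lambda|\le\lambda_0$, and moreover $|\wh Z(\beta,\lambda)|\ge\frac12|\E\wh Z(\beta,\lambda)|$ there. The margin $9/10<19/20$ leaves room for this.

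\textbf{Step 2: truncated Taylor series.} Let $f(\lambda)=\log\wh Z(\beta,\lambda)$, which is analytic on $|\lambda|<\lambda_0$. We need a bound $|f|\le B=O(n)$ on that disk. The bound $|\wh Z|\le e^{|\beta|\,\|H_0+\lambda O\|}$ gives the upper side, with $\|H_0\|=O(n)$ with high probability and $\|O\|\le n\Gamma$. The lower side comes from Step 1 together with the annealed cluster expansion, which gives $|\log\E\wh Z|=O(n)$. Borel–Carathéodory then bounds the Taylor coefficients: $|f_k|\lambda_0^k\lesssim B$ on a slightly smaller disk. Next we pick $\delta\ll\lambda_0$ and estimate $f'(0)$ by the central difference $(f(\delta)-f(-\delta))/2\delta$. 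The error of this difference is $O(B\delta^2/\lambda_0^3)$. We evaluate each $f(\pm\delta)$ by its Taylor polynomial of degree $m$; the truncation error is $O(B(\delta/\lambda_0)^{m})$ divided by $\delta$. Choosing $\delta=\Theta(\epsilon\lambda_0^3/n)$ and $m=O(\log(n/\epsilon))$ makes the total error at most $\beta\epsilon$. (Equivalently, we could directly use $f'(0)=f_1$, with the derivative taken exactly; the finite difference is only needed if one insists on evaluating $\log\wh Z$.) The stated restriction on $\epsilon$ just guarantees $\delta\le\lambda_0$.

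\textbf{Step 3: computing coefficients.} The coefficients of $f$ in $\lambda$ up to order $m$ come from the cumulants. These are polynomials in the mixed moments $\tr\big((H_0+\lambda O)^j\big)$ with $j\le m'$. The degree $m'$ in $\beta$ must also be truncated. We do this by expanding jointly in $\beta$ via the same zero-free disk in $\beta$, with $\beta$ inside radius $19C/20$, again taking $O(\log(n/\epsilon))$ terms. Each moment $\tr(X_1\cdots X_j)$ is a sum over $(n^q+M)^j$ products of Majorana strings. The normalized trace of each product is computed exactly in $\mathrm{poly}(n)$ time by sign bookkeeping. This gives total time $n^{O(L\log(n\Gamma/\epsilon))}$. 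The algorithm is deterministic; only its correctness relies on the high-probability event from Step 1.

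\textbf{Main obstacle.} The hardest part is Step 1 together with the $\beta$-truncation in Step 3. It requires a two-parameter zero-free region with quantitative lower bounds on $|\wh Z|$, rather than the pointwise-in-$\lambda$ statement of Corollary~\ref{cor:zf}. We plan to obtain this by applying the Cauchy/Markov argument in the variable $\beta$ while holding $\beta\lambda$ in a compact set. Concretely, we regard $\wh Z$ as an analytic function of $(\beta,\mu)$ with $\mu=\beta\lambda$ and use a polydisk Cauchy formula, so that a single Markov bound covers the whole region.
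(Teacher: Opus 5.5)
Your proposal is correct in outline, but it takes a genuinely different route from the paper.

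\textbf{The paper's route.} The paper never makes $\lambda$ complex. It applies Corollary~\ref{cor:zf} only at the two real points $\lambda\in\{0,h\}$, with $h=\epsilon/(8\beta n^2\Gamma^2)$, and takes a union bound. It Taylor-expands $\log\wh Z_\lambda(z)$ only in the temperature variable $z$. The coefficients are controlled by Borel--Carath\'eodory, using $\Re\log\wh Z_\lambda(w)\le|w|\,\|H\|$ and $\|H_0\|=O(n)$, followed by Cauchy estimates, with truncation at $K=O(\log(n/\epsilon))$. The finite-difference error is then controlled by a purely real-variable bound, $0\le\partial_\lambda^2\log\wh Z_\lambda(\beta)\le\beta^2\|O\|^2$, which comes from Duhamel's formula. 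The upper restriction on $\epsilon$ exists only to keep $h\Gamma$ inside the range of Corollary~\ref{cor:zf}.

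\textbf{What your route needs instead.} You need a zero-free polydisk in $(\beta,\lambda)$, which costs two extra steps.
First, you must rerun Theorem~\ref{thm:sa} for complex $\lambda$. Note that $\overline{\wh Z(\beta,\lambda)}=\wh Z(\bar\beta,\bar\lambda)$, so the second replica carries $\bar\lambda$ rather than $\lambda$. This is harmless, since every cluster bound sees only $|\lambda|$.
Second, you need a polydisk Cauchy--Jensen--Markov argument on a torus $|\beta|=R_1$, $|\lambda|=R_2$ lying inside the region of Theorem~\ref{thm:sa}. A single Markov bound then gives $|\wh Z|\ge\frac12|\E\wh Z|$, but only on a strictly smaller polydisk than $|\lambda|\le\lambda_0$.

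\textbf{Drop the finite-net option.} Zero-freeness at the net points does not propagate to the points between them. Crude Lipschitz control, comparing $|\partial_\lambda\wh Z|\le\beta\|O\|e^{\beta\|H\|}$ against $|\E\wh Z|$ (which can be exponentially smaller), would force exponentially many net points and ruin the union bound.

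\textbf{Once the polydisk is in hand.} By the annealed cluster expansion, $|\log\wh Z|\le|\log\E\wh Z|+1=O(n)$ on the polydisk. You can then read off $\Tr(O\rho_\beta)=-\sum_{j\ge1}c_{j1}\beta^{j-1}$ directly from the joint coefficients $c_{jk}$ of $\beta^j\lambda^k$, with $|c_{j1}|\le O(n)/(r_1^j r_2)$. The central difference in your Step~2 is therefore superfluous.

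\textbf{Trade-off.} The paper's route is more economical: it reuses Corollary~\ref{cor:zf} verbatim and needs no new probabilistic or complex-$\lambda$ input. Yours establishes joint analyticity of $\log\wh Z$ in $(\beta,\lambda)$. It extracts the first-order coefficient exactly, with no restriction on $\epsilon$, and it gives all higher $\lambda$-derivatives (connected correlators of $O$) at the same cost.
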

\begin{proof}
	Let $\wh Z_\lambda$ denote the normalized partition function of $H_0 + \lambda O$. We will give an algorithm to estimate $\wh Z_0$ and $\wh Z_h$ for
	\begin{align}
		h = \frac{\epsilon}{8\beta n^2 \Gamma^2}
	\end{align}
	which we use to give a finite difference approximation of $\Tr(O\rho_\beta) = -\frac{1}{\beta} \partial_\lambda\big|_{\lambda=0} \log \wh Z_\lambda$. By the stated assumption on $\epsilon$, note that
	\begin{align}
		h\Gamma \leq \frac{2^{-q/2}}{\sqrt{q\max\{q,L\}}},
	\end{align}
	and thus the zero-free disk of Corollary~\ref{cor:zf} includes the input value of $\beta$ for both $\lambda=0$ and $\lambda=h$. To control the error of the Taylor series of $\log \wh Z_\lambda$, we will ultimately use Cauchy's estimate on a disk of radius $R$ defined as
		\begin{align}
		R = (1-\delta/2)\rho, \quad \rho = \frac{19C/20}{\sqrt{q\max\{q,L\}}},
	\end{align}
	so $\beta \leq (1-\delta)\rho < R < \rho$, where $\rho$ is within the radius of convergence given by Corollary~\ref{cor:zf} and $\delta = 1/19$.
		This will require an upper bound on $\sup_{|z| \leq R} |\log \wh Z_\lambda(z)|$. We obtain this by Borel-Caratheodory since $\log \wh Z_\lambda(0) = 0$:
	\begin{align}\label{eq:bc}
		\sup_{|z| \leq R} |\log \wh Z_\lambda(z)| \leq \frac{2R}{(1-\delta/4)\rho - R} \sup_{|w| \leq (1-\delta/4)\rho} \Re \log \wh Z_\lambda(w).
	\end{align}
		For $H = H_0 + \lambda O$, we have
	\begin{align}
		\abs{\tr(e^{-w H})} \leq \tr \abs{e^{-wH}} \leq \tr e^{|w| \, |H|} \leq e^{|w| \, \norm{H}}
	\end{align}
	and thus
	\begin{align}
		\Re \log \wh Z_\lambda(w) = \log \abs{\tr e^{-w H}} \leq |w|\, \norm{H}.
	\end{align}
	We first upper bound the norm of $H_0$: by Theorem 4.1.1 of~\cite{tropp2015introduction},
	\begin{align}\label{eq:syknorm}
		\pr{\norm{H_0} \geq t} \leq 2^{n/2+1} \exp[-\frac{t^2}{2 \cdot \binom{n}{q} \frac{(q-1)!}{n^{q-1}} 2^{-q}}] \implies \pr{\norm{H_0} > \sqrt{\frac{1.01\log 2}{q 2^q}} n} = \exp[-\Omega(n)].
	\end{align}
	Using $\norm{O} \leq n\Gamma$, we have that
	\begin{align}
		\norm{H_0 + h O} \leq \lr{h\Gamma + \sqrt{\frac{1.01\log 2}{q 2^q}}} n
	\end{align}
	with probability $1 - e^{-\Omega(n)}$. Combining this with \eqref{eq:bc} gives
	\begin{align}
		\sup_{|z| \leq R} |\log \wh Z_\lambda(z)| \leq B n, \quad B = \frac{2R(1-\delta/4)\rho}{(1-\delta/4)\rho - R}\lr{h\Gamma + \sqrt{\frac{1.01\log 2}{q 2^q}}}.
	\end{align}
	Consequently, by Cauchy's estimate on the circle $|z|=R$, the Taylor expansion
	\begin{align}
		\log \wh Z_\lambda(z) = \sum_{m \geq 1} a_m(\lambda) z^m
	\end{align}
	satisfies
	\begin{align}
		\abs{a_m(\lambda)} \leq \sup_{|z| \leq R} |\log \wh Z_\lambda(z)| R^{-m} \leq B n R^{-m}
	\end{align}
	and thus truncating the Taylor series at $m \leq K$ has remainder at most $B n \sum_{m > K} (\beta/R)^m$. We choose $K$ such that
	\begin{align}\label{eq:Kbound}
		B n \sum_{m > K} \lr{\frac{\beta}{R}}^m \leq \frac{\beta h \epsilon}{4},
	\end{align}
	e.g.,
	\begin{align}
		K \geq \frac{\log\lr{4Bn/\beta h\epsilon(1-\beta/R)}}{\log(R/\beta)}.
	\end{align}
	For the finite difference, this truncation gives error
	\begin{align}
		\abs{-\frac{1}{\beta h}\lr{\sum_{m=1}^K a_m(h) \beta^m - \sum_{m=1}^K a_m(0) \beta^m} + \frac{1}{\beta h}\lr{\log \wh Z_h(\beta) - \log \wh Z_0(\beta)}} \leq \frac{2}{\beta h} \cdot \frac{\beta h \epsilon}{4} = \frac{\epsilon}{2}.
	\end{align}
	We will add the error associated with the truncation to the finite difference error, which is bounded by Taylor's theorem as
	\begin{align}\label{eq:taylor}
		\abs{-\frac{\log \wh Z_h(\beta) - \log \wh Z_0(\beta)}{\beta h} - \Tr(O \rho_\beta)} &= \frac{1}{\beta}\abs{-\frac{\log \wh Z_h(\beta) - \log \wh Z_0(\beta)}{h} - \partial_\lambda\Big|_{\lambda=0} \log \wh Z_\lambda(\beta)} \\
        &\leq \frac{h}{2\beta} \max_{0 \leq \lambda \leq h} \abs{\partial_\lambda^2 \log \wh Z_\lambda(\beta)}.
	\end{align}
	The argument follows Lemma 11 of~\cite{bravyi2021complexity}. By Duhamel's formula,
	\begin{align}
		\partial_\lambda^2 \log \wh Z_\lambda(\beta) = \beta^2 \int_0^1 \frac{\tr(e^{-s\beta(H_0+\lambda O)}Oe^{-(1-s)\beta(H_0+\lambda O)}O)}{\tr(e^{-\beta(H_0+\lambda O)})}ds - \beta^2 \lr{\frac{\tr(Oe^{-\beta(H_0+\lambda O)})}{\tr(e^{-\beta(H_0+\lambda O)})}}^2.
	\end{align}
	In the eigenbasis of the perturbed Hamiltonian $H_0 + \lambda O = \sum_i E_i\ketbra{i}$, let $p_i = e^{-\beta E_i} / Z_\lambda(\beta)$ denote the Gibbs weight of $E_i$. Since $O$ is Hermitian, we can rewrite the Duhamel form as
	\begin{align}
		\frac{\partial_\lambda^2 \log \wh Z_\lambda(\beta)}{\beta^2} = \sum_{ij} \lr{\int_0^1 p_i^s p_j^{1-s} ds} \abs{O_{ij}}^2 - \lr{\sum_i p_i O_{ii}}^2.
	\end{align}
	This quantity is nonnegative since $\sum_i p_i O_{ii}^2 - \lr{\sum_i p_i O_{ii}}^2 \geq 0$ for Hermitian $O$. For the upper bound, AM-GM gives $\int_0^1 p_i^s p_j^{1-s} ds \leq (p_i + p_j)/2$ and thus
	\begin{align}
		\frac{\partial_\lambda^2 \log \wh Z_\lambda(\beta)}{\beta^2} \leq \sum_{ij} \frac{p_i+p_j}{2} \abs{O_{ij}}^2 - \lr{\sum_i p_i O_{ii}}^2 = \Tr(O^2 \rho_\beta) - \Tr(O\rho_\beta)^2 \leq \norm{O}^2.
	\end{align}
	This gives $0 \leq \partial_\lambda^2 \log \wh Z_\lambda(\beta) \leq \beta^2 \norm{O}^2 \leq \beta^2 n^2 \Gamma^2$. Returning to the Taylor bound \eqref{eq:taylor}, we have
	\begin{align}
		\abs{-\frac{\log \wh Z_h(\beta) - \log \wh Z_0(\beta)}{\beta h} - \Tr(O \rho_\beta)} \leq \frac{h}{2\beta} \cdot \beta^2 n^2 \Gamma^2 \leq \frac{\epsilon}{16}.
	\end{align}
	Since $\epsilon/2 + \epsilon/16 < \epsilon$, this truncation and finite difference gives an estimate of the thermal expectation within additive error $\epsilon$.
	
	It remains to bound the runtime of computing coefficients $a_m(h)$ and $a_m(0)$ for all $1 \leq m \leq K$. We describe how to do this for $H = H_0 + \lambda O$ following \cite{harrow2020classical}. We first compute the moments $\mu_r = \tr(H^r)$, so that
	\begin{align}
		\wh Z(z)=\tr(e^{-zH})=\sum_{r\geq 0}\frac{(-1)^r\mu_r}{r!}z^r,
	\end{align}
	and then convert $\{\mu_r\}_{r\leq K}$ to the \emph{cumulants} $\{\kappa_r\}_{r\leq K}$, which are the coefficients in $\log \wh Z(z) = \sum_{m \geq 1} \frac{(-1)^m}{m!} \kappa_m z^m$; these give the desired coefficients $a_r=(-1)^r\kappa_r/r!$. To do this efficiently, we use the moment-to-cumulant recurrence costing $O(K^2)$ arithmetic operations,
    \begin{align}
    \kappa_1=\mu_1, \qquad \kappa_r=\mu_r-\sum_{i=1}^{r-1}\binom{r-1}{i-1}\kappa_i \mu_{r-i} \quad (r \geq 2).
    \end{align}
    Since $H$ is a sum of $O(n^q) + n^{O(L)}$ monomials, the number of terms in $H^j$ is at most $n^{O((q+L)j)}$. Hence, we obtain a total arithmetic runtime of $n^{O((q+L)K)}$, which produces the claimed runtime by substituting \eqref{eq:Kbound} and taking constant $q,L$.
\end{proof}

\subsection{Useful bounds}
We record some bounds that we will use repeatedly.

\begin{fact}\label{fac:tfunc}
    For $0 \leq x \leq 1/e$, the series
    \begin{align}
        T(x) = \sum_{k \geq 1} \frac{k^{k-1}}{k!} x^k
    \end{align}
    converges and satisfies $T(x) \leq 1$. In particular,
    \begin{align}
        T\lr{\frac{1}{2\sqrt e}} = \frac{1}{2} \quad \text{and} \quad T(x) \leq \frac{1}{2} \text{ for all } 0 \leq x \leq \frac{1}{2\sqrt e}
    \end{align}
    and
    \begin{align}
        T\lr{\frac{1}{4 e^{1/4}}} = \frac{1}{4} \quad \text{and} \quad T(x) \leq \frac{1}{4} \text{ for all } 0 \leq x \leq \frac{1}{4 e^{1/4}}.
    \end{align}
\end{fact}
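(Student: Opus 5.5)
The plan is to recognize $T$ as the Cayley tree function, i.e.\ the power-series inverse of $g(y) = y e^{-y}$ near $y=0$. Every claim then reduces to evaluating $g$ at explicit points.

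First, convergence. The coefficients are nonnegative. By Stirling's formula,
\begin{align}
    \frac{k^{k-1}}{k!}\, e^{-k} \sim \frac{k^{-3/2}}{\sqrt{2\pi}},
\end{align}
so the series converges absolutely at $x = 1/e$, and hence, by comparison, for all $0 \leq x \leq 1/e$. The radius of convergence is exactly $1/e$, and since all coefficients are nonnegative, $T$ is nondecreasing on $[0,1/e]$.

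Second, the functional identity. The function $g(y) = y e^{-y}$ is analytic with $g(0)=0$ and $g'(0)=1$, so it has an analytic inverse near $0$. Lagrange inversion gives the coefficients of this inverse as
\begin{align}
    [x^k]\, g^{-1}(x) = \frac{1}{k}\,[y^{k-1}]\, e^{ky} = \frac{k^{k-1}}{k!}.
\end{align}
Hence $T(x) e^{-T(x)} = x$ near $0$. Both sides are analytic on the open disk $|x| < 1/e$, so by the identity theorem this relation holds on the whole disk, and in particular on $[0,1/e)$. On this interval $T$ is real, continuous and nondecreasing with $T(0)=0$. Moreover, $g$ is strictly increasing on $[0,1]$ and strictly decreasing on $[1,\infty)$, so $T(x)$ cannot cross $1$ before $x$ reaches $g(1) = 1/e$. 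It follows that $T$ coincides with the branch $(g|_{[0,1]})^{-1}$ on $[0,1/e)$. By Abel's theorem (monotone convergence, since the coefficients are nonnegative), the identity extends to $x = 1/e$ and gives $T(1/e) = 1$. Together with monotonicity this yields $T(x) \leq 1$ on $[0,1/e]$.

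Third, the specific values follow by direct evaluation:
\begin{align}
    g(1/2) = \tfrac{1}{2}e^{-1/2} = \frac{1}{2\sqrt e}, \qquad g(1/4) = \tfrac14 e^{-1/4} = \frac{1}{4e^{1/4}}.
\end{align}
Both points lie in $[0,1/e]$, so $T(1/(2\sqrt e)) = 1/2$ and $T(1/(4e^{1/4})) = 1/4$. The inequalities $T(x)\leq 1/2$ and $T(x) \leq 1/4$ below these thresholds then follow from monotonicity.

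The only delicate step is the branch identification in the second part: showing that the power series follows the principal branch $y \in [0,1]$ of the inverse along the real segment, rather than some other solution of $y e^{-y} = x$. I would handle this with the continuity and monotonicity argument above, since $T$ cannot jump to the other branch $y>1$ without passing through $y = 1$, which happens only at $x = 1/e$.
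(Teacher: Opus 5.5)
Your proof is correct and follows the same route as the paper: Stirling to get convergence at $x=1/e$, Lagrange inversion to get $T(x)e^{-T(x)}=x$, evaluating $ye^{-y}$ at $y=1/2$ and $y=1/4$, and monotonicity from the nonnegative coefficients. Your continuity and intermediate-value argument that pins $T$ to the branch $[0,1]$, together with monotone convergence at the endpoint, spells out a step the paper compresses into the single word ``hence.''
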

\begin{proof}
By the Stirling lower bound $k! \geq \sqrt{2\pi k}\,(k/e)^k$, we have that $T(x)$ converges on $[0,1/e]$ since
\begin{align}
    0 \leq \frac{k^{k-1}x^k}{k!} \leq \frac{k^{k-1}e^{-k}}{k!} \leq \frac{1}{\sqrt{2\pi}k^{3/2}}.
\end{align}
The Lagrange inversion theorem shows that $T$ satisfies $x = T(x) e^{-T(x)}$; hence, $T(x)\leq 1$ for $x\leq 1/e$, and $T(te^{-t})=t$ for $t \in [0,1]$. Taking $t=1/2$ and
$t=1/4$ thus gives
\begin{align}
    T\lr{\frac{1}{2\sqrt e}} = \frac{1}{2}, \qquad T\lr{\frac{1}{4e^{1/4}}}=\frac{1}{4}.
\end{align}
Since all coefficients of $T$ are nonnegative, $T$ is nondecreasing on $[0,1/e]$, and the two stated inequalities follow.
\end{proof}

\begin{lemma}[Counting factors]\label{lem:count}
    Let $B_1,\dots,B_r \subseteq [n]$ be fixed sets with $|B_j|\leq L$. Form the intersection graph $G$ on the labeled objects
    \begin{align}
        K_1,\dots,K_k,B_1,\dots,B_r,
    \end{align}
    with an edge iff the corresponding supports intersect. Then the following bounds hold.
    \begin{enumerate}
        \item If $x\in[n]$ is fixed, then the number of ordered $k$-tuples $(K_1,\dots,K_k)\in \binom{[n]}{q}^k$ such that
        \begin{align}
            x\in \Big(\bigcup_{i=1}^k K_i\Big)\cup \Big(\bigcup_{j=1}^r B_j\Big)
        \end{align}
        and $G$ is connected is at most
        \begin{align}
            \binom{n-1}{q-1}^{k} (kq+rL)^{k+r-1}.
        \end{align}

        \item If $k\geq 1$ and $K_1\in\binom{[n]}{q}$ is fixed, then the number of ordered $(k-1)$-tuples $(K_2,\dots,K_k)\in \binom{[n]}{q}^{k-1}$ such that $G$ is connected is at most
        \begin{align}
            q \binom{n-1}{q-1}^{k-1} (kq+rL)^{k+r-2}.
        \end{align}

        \item If $k\geq 2$, $e\in\{2,\dots,k\}$ is fixed, and $K_1\in\binom{[n]}{q}$ is fixed, then the number of ordered $(k-2)$-tuples $(K_2,\dots,K_{e-1},K_{e+1},\dots,K_k)\in \binom{[n]}{q}^{k-2}$ such that $K_e=K_1$ and $G$ is connected is at most
        \begin{align}
            q \binom{n-1}{q-1}^{k-2} (kq+rL)^{k+r-3}.
        \end{align}

        \item If $k\geq 1$, $r\geq 1$, $f\in[r]$ is fixed, and $K_1=B_f\in\binom{[n]}{q}$, then the number of ordered $(k-1)$-tuples $(K_2,\dots,K_k)\in \binom{[n]}{q}^{k-1}$ such that $G$ is connected is at most
        \begin{align}
            q \binom{n-1}{q-1}^{k-1} (kq+rL)^{k+r-3}.
        \end{align}
    \end{enumerate}
    In cases (3) and (4), when $k+r=2$ the count is 1.
\end{lemma}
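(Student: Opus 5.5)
The plan is to prove all four counts by encoding each admissible tuple through a spanning tree of the intersection graph $G$, then counting the encodings. Since the vertex set has $N = k+r$ labeled objects and $G$ is connected, $G$ contains a spanning tree. We build the tuple by exploring that tree from a root and attaching each new $q$-set $K_i$ to an already-placed object it intersects. The generic step is this: if an object $Y$ already placed has $|Y| \leq kq+rL$ sites available in the current union, then a new $K_i$ intersecting the union can be specified by
\begin{itemize}
\item[(a)] choosing a shared site $y$ in the union of all supports, which has at most $kq+rL$ elements, and
\item[(b)] choosing the remaining $q-1$ elements of $K_i$, in at most $\binom{n-1}{q-1}$ ways.
\end{itemize}
The fixed sets $B_j$ cost nothing to place, but the tree structure still has to record where they attach. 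To make this uniform, I would use an injective encoding by rooted labeled trees rather than a sequential exploration. Let $V$ be the $N$ labeled vertices. To each connected configuration I assign a spanning tree $T$ of $G$, chosen canonically, for example by BFS from a root with lowest-label tie-breaking. For each non-root vertex $v$ with parent $p(v)$, I record a witness site $x_v \in \mathrm{supp}(v) \cap \mathrm{supp}(p(v))$.

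Knowing $T$, the witness sites, and the remaining $q-1$ elements of each $K_i$ (relative to its witness) reconstructs the tuple. A cleaner bookkeeping replaces (tree, parent, witness) by a single site per non-root vertex: the map sending each non-root vertex to its witness site in the global union gives, via a Prüfer-type/Cayley argument, at most $(kq+rL)^{N-1}$ choices in total for the attachment data. Each non-root $K_i$ then contributes $\binom{n-1}{q-1}$. For the $B_j$, the witness choice is what must be paid for, and they contribute no binomial factor.

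The root is handled differently in each case.
\begin{itemize}
\item[(1)] Root the tree at the object containing $x$. If that object is some $K_i$, it costs $\binom{n-1}{q-1}$ because $x \in K_i$. This gives $\binom{n-1}{q-1}^k (kq+rL)^{k+r-1}$.
\item[(2)] Root at the fixed $K_1$. It costs nothing, the other $k-1$ sets each cost $\binom{n-1}{q-1}$, and the attachment data costs $(kq+rL)^{k+r-2}$. The extra factor $q$ absorbs the ambiguity in the first attachment into $K_1$.
\item[(3)] $K_e = K_1$ is also fixed, so it can be contracted with $K_1$ into a single vertex. This leaves $k+r-1$ vertices and gives $(kq+rL)^{k+r-3}$ and $\binom{n-1}{q-1}^{k-2}$.
\item[(4)] Contract $K_1$ with $B_f$ in the same way.
\end{itemize}
In cases (3) and (4), when $k+r=2$ the contracted graph is a single vertex and the count is $1$.

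The main obstacle is making the Cayley-type count produce exactly $(kq+rL)^{N-1}$ rather than an extra $N^{N-2}$ tree factor. I would try the encoding where each non-root vertex points to a site of the union, not to a vertex. The exploration order, for example visiting sites in increasing order and then the objects containing each site, then determines the parent structure, so specifying one site per non-root vertex is injective. The factor $q$ in (2)–(4) and the reduced exponents should come out of this same bookkeeping for the fixed root.
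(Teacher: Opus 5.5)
Your overall architecture matches the paper's. You bound connectivity by a spanning tree, pay $\binom{n-1}{q-1}$ for the $q-1$ free elements of each $K_i$, and handle (3)--(4) by deleting the duplicate vertex, which has the same neighbours as $K_1$. The gap is the step you yourself flag as the main obstacle, and it is the actual content of the lemma: summing over spanning trees without paying an extra tree-count factor.

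Your proposed fix is one witness site per non-root vertex ``in the global union,'' with the exploration order recovering the tree. As it stands this is not an argument. The global union depends on the unknown $K_i$'s, so naming a site in it with only $kq+rL$ choices needs a decoding procedure, which you do not give. The natural rigorous version processes explored sites in increasing order, attaches each unplaced object at the first processed site it contains, and records the step index. That scheme charges $kq+rL$ per free non-root $K_i$. For $r=0$, case (2) then gives $(kq)^{k-1}\binom{n-1}{q-1}^{k-1}$, a factor $k$ larger than the claimed $q(kq)^{k-2}\binom{n-1}{q-1}^{k-1}$.

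Likewise, a count of $(kq+rL)^{N-1}$ for the attachment data with the root fixed does not yield $q(kq+rL)^{N-2}$ in (2). The sentence ``the extra factor $q$ absorbs the ambiguity'' is not a derivation. In (1) you also never pay for which object containing $x$ is the root.

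The missing ingredient is the weighted Cayley formula.
\begin{itemize}
\item Index each witness inside the parent's support, so a non-root $K_w$ costs $|S_{p(w)}|\binom{n-1}{q-1}$. Here $|S_{p(w)}|$ does not depend on the configuration, because $|K_i|=q$ always.
\item For a fixed $B_j$, bound the indicator $1\{B_j\cap S_{p(w)}\neq\emptyset\}\le |S_{p(w)}|$.
\item Overcount by summing over all spanning trees, $1\{G\text{ connected}\}\le\sum_T 1\{E(T)\subseteq E(G)\}$, instead of choosing a canonical tree.
\item For $T$ rooted at $v$, $\prod_{w\neq v}|S_{p(w)}| = |S_v|\prod_u |S_u|^{\deg_T(u)-1}$.
\item Apply the identity $\sum_T\prod_u x_u^{\deg_T(u)-1}=\big(\sum_u x_u\big)^{N-2}$ with $x_u=|S_u|$.
\end{itemize}
With the root fixed at $K_1$ this gives exactly $q(kq+rL)^{N-2}$ in (2); the factor $q$ is simply $x_{K_1}$, and (3)--(4) follow by your deletion argument. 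In (1), summing over roots gives $\sum_v |S_v|\,\big(\sum_u|S_u|\big)^{N-2}\le (kq+rL)^{N-1}$. So the root choice is absorbed into that first factor rather than costing an extra factor of $N$.
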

\begin{proof}
    For $N=k+r$ and $u\in[N]$, write
    \begin{align}
        S_u =
        \begin{cases}
            K_u,& 1\leq u\leq k,\\
            B_{u-k},& k+1\leq u\leq N.
        \end{cases}
    \end{align}
    For all cases, we will upper bound the indicator function that the intersection graph $G$ is connected by summing over the possible spanning trees, i.e.,
    \begin{align}
        1\{G\text{ connected}\} \leq \sum_{T\in\mathrm{Tree}(N)} 1\{E(T) \subseteq E(G)\},
    \end{align}
    since every connected graph has at least one spanning tree.
    We will also use the weighted Cayley identity
    \begin{align}\label{eq:weighted-cayley-lemcount}
        \sum_{T\in \mathrm{Tree}(N)} \prod_{u=1}^N |S_u|^{\deg_T(u)-1}
        =
        \lr{\sum_{u=1}^{k+r} |S_u|}^{N-2}.
    \end{align}
    For (1), if the intersection graph is connected and $x\in \bigcup_{u=1}^{k+r} S_u$, then there is a root vertex $v\in[N]$ with $x\in S_v$ and a spanning tree $T\subseteq G$. Root $T$ at $v$. If $v\le k$, then $K_v$ must contain $x$, so there are at most $\binom{n-1}{q-1}$ choices for $K_v$; if $v>k$, there is no choice since $B_{v-k}$ is fixed.
    For every non-root vertex $w$, let $p(w)$ be its parent. If $w\le k$, then the edge $\{w,p(w)\}$ requires $K_w\cap S_{p(w)}\neq \emptyset$, so there are at most $|S_{p(w)}|\binom{n-1}{q-1}$ choices for $K_w$. If $w>k$, then $B_{w-k}$ is fixed and $1\{B_{w-k}\cap S_{p(w)}\neq \emptyset\}\le |S_{p(w)}|$. Hence, for fixed rooted $(v,T)$, the number of admissible configurations is at most
    \begin{align}
        1\{x\in S_v\}\binom{n-1}{q-1}^{k}\prod_{u=1}^N |S_u|^{\#\mathrm{children}_T(u)} = 1\{x\in S_v\}\binom{n-1}{q-1}^{k} |S_v| \prod_{u=1}^N |S_u|^{\deg_T(u)-1}.
    \end{align}
    Summing over $v$ and $T$ gives
    \begin{align}
        \#\{\text{admissible }(K_1,\dots,K_k)\}
        &\le
        \binom{n-1}{q-1}^{k}
        \sum_{v=1}^N 1\{x\in S_v\} |S_v|
        \sum_{T\in \mathrm{Tree}(N)} \prod_{u=1}^N |S_u|^{\deg_T(u)-1}\\
        &\le
        \binom{n-1}{q-1}^{k}
        \Big(\sum_{v=1}^N |S_v|\Big)
        \Big(\sum_{u=1}^N |S_u|\Big)^{N-2}\\
        &\le
        \binom{n-1}{q-1}^{k} (qk+Lr)^{N-1},
    \end{align}
    proving (1).

    For (2), root a spanning tree at the fixed vertex corresponding to $K_1$. The root contributes a factor $|K_1| = q$, while every non-root variable Gaussian set contributes at most $|S_{p(w)}|\binom{n-1}{q-1}$, and every fixed non-root $B_j$ contributes at most $|S_{p(w)}|$ as in the previous case. Thus
    \begin{align}
        \#\{\text{admissible }(K_2,\dots,K_k)\}
        \le
        q\binom{n-1}{q-1}^{k-1}
        \sum_{T\in \mathrm{Tree}(N)} \prod_{u=1}^N |S_u|^{\deg_T(u)-1} \le
        q\binom{n-1}{q-1}^{k-1} (qk+Lr)^{N-2},
    \end{align}
    proving (2).

    For (3), fix $e\in\{2,\dots,k\}$ and impose $K_e=K_1$. Since $K_e$ and $K_1$ have identical support, they have identical neighbors in the intersection graph. Therefore the full graph on
    \begin{align}
        K_1,\dots,K_k,B_1,\dots,B_r
    \end{align}
    is connected iff the graph obtained by deleting the duplicate vertex $K_e$ is connected. Applying (2) to the reduced family
    \begin{align}
        K_1,K_2,\dots,K_{e-1},K_{e+1},\dots,K_k,B_1,\dots,B_r
    \end{align}
    gives
    \begin{align}
        q \binom{n-1}{q-1}^{k-2}\big(q(k-1)+Lr\big)^{k+r-3}
        \le
        q \binom{n-1}{q-1}^{k-2} (qk+Lr)^{k+r-3},
    \end{align}
    proving (3).

    For (4), if $B_f=K_1$, then $B_f$ and $K_1$ again have identical support and hence identical neighbors. So the full graph is connected iff the graph obtained by deleting $B_f$ is connected. Applying (2) to the reduced family
    \begin{align}
        K_1,\dots,K_k,B_1,\dots,B_{f-1},B_{f+1},\dots,B_r
    \end{align}
    gives
    \begin{align}
        q \binom{n-1}{q-1}^{k-1}\big(qk+L(r-1)\big)^{k+r-3}
        \le
        q \binom{n-1}{q-1}^{k-1} (qk+Lr)^{k+r-3},
    \end{align}
    proving (4).
\end{proof}

\section{First moment}\label{sec:first}

\subsection{Polymer representation}
We first derive a polymer representation for the annealed partition function. To do this, we decompose the annealed partition function as
\begin{align}
    \EE{\wh Z(\beta)} = \EE{\tr e^{-\beta H}} = \sum_{\ell=0}^\infty \frac{(-\beta)^\ell}{\ell!} \EE{\tr(H^\ell)}.
\end{align}
The even moments of the unperturbed Hamiltonian can be decomposed as
\begin{align}
    \EE{\tr(H_0^{2k})} = \sigma^{2k} \sum_{\pi \in \Pi(2k)} \sum_{I_1,\dots,I_{2k} \in \binom{[n]}{q}} \lr{\prod_{\{a,b\}\in\pi} 1\{I_a=I_b\}} \tr(\psi_{I_1}\cdots \psi_{I_{2k}}),
\end{align}
where $\Pi$ denotes the set of all pairings of $2k$ elements.

To compute $\EE{\tr(H^m)}$ for the perturbed Hamiltonian, we have to include the additional terms $\psi_{A_a}$. We track these terms with the function $\eta : [m] \to \{0,1\}$ that indicates whether slot $j$ belongs to $H_0$ ($\eta(j)=0$) or to the perturbation ($\eta(j) = 1$). The pairings $\pi$ then belong to the set $\eta^{-1}(0)$. This gives
\begin{align}
    \EE{\tr(H^m)} &=  \sum_{\substack{\eta:[m]\to\{0,1\}\\ |\eta^{-1}(0)| \text{ even}}} \sigma^{|\eta^{-1}(0)|} \lambda^{|\eta^{-1}(1)|} \lr{\prod_{j:\eta(j)=1} c_{a_j}} \sum_{\pi \in \Pi(|\eta^{-1}(0)|)} \nonumber\\
    &\quad \sum_{\substack{I_j \in \binom{[n]}{q} : j \in \eta^{-1}(0)\\ I_u = I_v \, \forall\, \{u,v\} \in \pi}} \sum_{a_j \in [M] : j \in \eta^{-1}(1)} \tr \prod_{j=1}^m \begin{cases}
        \psi_{I_j} & \eta(j) = 0\\
        \psi_{A_{a_j}} & \eta(j) = 1.
    \end{cases}
\end{align}
For brevity, we let $S_m$ denote the set of consistent tuples $(\eta, \pi, \vec I, \vec a)$ satisfying the various conditions underneath the sums.
Throughout this section, we will also use notation
\begin{align}
    X_j(\tau) = \begin{cases}
        \psi_{I_j} & \eta(j) = 0\\
        \psi_{A_{a_j}} & \eta(j) = 1
    \end{cases}
\end{align}
so
\begin{align}
    \EE{\tr(H^m)} = \sum_{(\eta, \pi, \vec I, \vec a) \in S_m} \sigma^{|\eta^{-1}(0)|} \lambda^{|\eta^{-1}(1)|} \lr{\prod_{j:\eta(j)=1} c_{a_j}} \tr \prod_{j=1}^m X_j(\tau).
\end{align}
We define an intersection graph $G(\tau)$ for $\tau \in S_m$. The graph has vertex set $[m]$ and edges between distinct $u, v \in [m]$ iff the supports intersect: i.e.,
\begin{align}
    \begin{cases}
        I_u \cap I_v \neq \emptyset & \eta(u) = \eta(v) = 0\\
        I_u \cap A_{a_v} \neq \emptyset & \eta(u) = 0, \eta(v) = 1\\
        A_{a_u} \cap A_{a_v} \neq \emptyset & \eta(u) = \eta(v) = 1.
    \end{cases}
\end{align}
Let $S_m^\conn$ denote the set of $\tau \in S_m$ whose intersection graph $G(\tau)$ is connected. The set of all polymers is defined as
\begin{align}
    \cP = \bigcup_{m \geq 1} S_m^\conn.
\end{align}
We set two polymers $P, Q \in \cP$ to be compatible ($P\sim Q$) if
\begin{align}
    \supp(P) \cap \supp(Q) = \emptyset
\end{align}
for support
\begin{align}
    \supp(\tau) = \lr{\bigcup_{j:\eta(j)=0} I_j} \cup \lr{\bigcup_{j:\eta(j)=1} A_{a_j}}.
\end{align}
Before deriving the polymer representation of the annealed partition function, we note two useful properties of the intersection graph.

\begin{lemma}[Decomposition of moments into connected components]\label{lem:decomp}
    Let $\tau = (\eta,\pi,\vec I, \vec a) \in S_m$ and let $C_1, \dots, C_r \subseteq [m]$ be the connected components of $G(\tau)$.
    Then
    \begin{align}
        \sigma^{|\eta^{-1}(0)|}\lambda^{|\eta^{-1}(1)|} \tr \prod_{j=1}^m X_j(\tau) = \prod_{u=1}^r \sigma^{|\{j \in C_u : \eta(j)=0\}|} \lambda^{|\{j \in C_u : \eta(j)=1\}|} \tr \prod_{j\in C_u} X_j(\tau),
    \end{align}
    where the product preserves the ordering of $\vec I$ and $\vec a$.
\end{lemma}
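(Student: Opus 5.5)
The factorization has two parts: the scalar prefactors split trivially, and the trace splits by a commutation-and-independence argument.

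\textbf{Prefactors.} Since $C_1,\dots,C_r$ partition $[m]$, we have $|\eta^{-1}(0)| = \sum_u |\{j\in C_u:\eta(j)=0\}|$, and similarly for $\eta^{-1}(1)$. Hence $\sigma^{|\eta^{-1}(0)|}\lambda^{|\eta^{-1}(1)|}$ factorizes over components with no further argument.

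\textbf{Reordering the trace.} Every $X_j(\tau)$ is a Majorana string on an even-cardinality set: $|I_j|=q$ is even, and each $|A_a|$ is even by assumption. Two Majorana strings $\psi_S,\psi_T$ satisfy $\psi_S\psi_T = (-1)^{|S||T|-|S\cap T|}\psi_T\psi_S$. For $|S|,|T|$ even and $S\cap T=\emptyset$, the sign is $+1$, so the strings commute. If $u\in C_a$ and $v\in C_b$ with $a\neq b$, there is no edge between them in $G(\tau)$, so their supports are disjoint and $X_u,X_v$ commute. We can therefore bubble-sort the product $\prod_{j=1}^m X_j(\tau)$ into $\prod_{u=1}^r \prod_{j\in C_u} X_j(\tau)$. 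Each swap exchanges adjacent factors from different components, and the relative order within each component is never changed. This preserves the ordering of $\vec I$ and $\vec a$ inside each component, as the statement requires.

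\textbf{Factorizing the normalized trace.} Let $Y_u = \prod_{j\in C_u} X_j(\tau)$. It is a product of Majorana operators on the set $U_u = \bigcup_{j\in C_u}\supp X_j$, and the $U_u$ are pairwise disjoint. We need $\tr(Y_1\cdots Y_r) = \prod_u \tr(Y_u)$. Each $Y_u$ is a phase times a monomial in $\{\psi_i\}_{i\in U_u}$; reducing with $\psi_i^2 = 1/2$ shows it is a scalar multiple of $\psi_{T_u}$ for some $T_u\subseteq U_u$, with $|T_u|$ even because every factor has even degree. Write $Y_u = \alpha_u\psi_{T_u}$. Since the $T_u$ are disjoint and even, $Y_1\cdots Y_r = (\prod_u\alpha_u)\,\psi_{T_1}\cdots\psi_{T_r}$. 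This product is, up to phase, a Majorana string on $T_1\cup\cdots\cup T_r$.

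We then use the standard fact that $\tr\psi_T = 0$ for nonempty $T$ (orthogonality of Majorana strings). It follows that both $\tr(Y_1\cdots Y_r)$ and $\prod_u\tr(Y_u)$ equal $\prod_u\alpha_u$ when every $T_u=\emptyset$, and both vanish otherwise. Hence the traces agree.

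Combining the prefactor factorization with the trace factorization gives the lemma. The only step that needs care is the sign bookkeeping in the commutation; the evenness of every support makes it automatic.
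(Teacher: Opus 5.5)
Your proposal is correct and follows the same route as the paper. No edge between components forces disjoint supports, and even Majorana strings on disjoint supports commute, so the product regroups by component with intra-component order preserved; the normalized trace then factorizes. The one difference is that you prove the final trace factorization explicitly, by reducing each block to a scalar times a Majorana monomial and using tracelessness of nonempty strings, whereas the paper only appeals to a tensor-product decomposition (and additionally remarks that Wick pairs never straddle components, which the stated identity does not need).
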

\begin{proof}
    If $u \neq v$, then
    \begin{align}
        \lr{\bigcup_{j\in C_u} \supp(X_j(\tau))} \cap \lr{\bigcup_{j\in C_v} \supp(X_j(\tau))} = \emptyset
    \end{align}
    otherwise there would exist an edge between $C_u$ and $C_v$. Every $X_j(\tau)$ is an even Majorana string, so operators supported on disjoint sets commute. Hence,
    \begin{align}
        \prod_{j=1}^m X_j(\tau) = \prod_{u=1}^r \prod_{j \in C_u} X_j(\tau),
    \end{align}
    where the product preserves the relative order of indices inside $C_u$. Factorization of trace over the resulting tensor product decomposition shows the first property. To show that $|\{j\in C_u:\eta(j)=0\}|$ has even cardinality, we observe that each connected component is a disjoint union of pairs, since if $\{u, v\} \in \pi$ then $I_u \cap I_v = I_u \neq \emptyset$.
\end{proof}

\begin{lemma}[Polymer representation]\label{lem:polymer}
    \begin{align}\label{eq:polymer}
        \EE{\tr e^{-\beta H}} = \sum_{r=0}^\infty \frac{1}{r!} \sum_{\substack{s_1, \dots, s_r \in \cP\\ s_i \sim s_j}} \prod_{u=1}^r z_{s_u}(\beta),
    \end{align}
    where for $s = (\eta, \pi, \vec I, \vec a)$ we let
    \begin{align}
        z_s(t) = \frac{(-t\sigma)^{|\eta^{-1}(0)|} (-t\lambda)^{|\eta^{-1}(1)|}}{\lr{|\eta^{-1}(0)| + |\eta^{-1}(1)|}!} \lr{\prod_{j:\eta(j)=1} c_{a_j}} \tr \prod_{j=1}^{|\eta^{-1}(0)| + |\eta^{-1}(1)|} X_j(s).
    \end{align}
\end{lemma}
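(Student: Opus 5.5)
We prove the identity by a bijective regrouping of the moment expansion, working at the level of formal power series in $\beta$ and then upgrading to an equality of convergent series. We start from the Taylor expansion $\EE{\tr e^{-\beta H}} = \sum_{m\geq 0} \frac{(-\beta)^m}{m!}\EE{\tr(H^m)}$, which is justified since $\tr(H^m)$ has Gaussian moments growing at most like $C^m m^{m/2}$. We then substitute the expansion of $\EE{\tr(H^m)}$ as a sum over $\tau=(\eta,\pi,\vec I,\vec a)\in S_m$. Each $\tau$ has a weight $\sigma^{|\eta^{-1}(0)|}\lambda^{|\eta^{-1}(1)|}\prod_j c_{a_j}\,\tr\prod_j X_j(\tau)$. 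Lemma~\ref{lem:decomp} factorizes this weight over the connected components $C_1,\dots,C_r$ of $G(\tau)$, with each factor preserving the internal order of the slots.

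The key combinatorial step is a bijection. On one side we take pairs $(m,\tau)$ together with an ordering of the components of $G(\tau)$. On the other side we take ordered tuples $(s_1,\dots,s_r)$ of pairwise compatible polymers, with $s_u\in S_{m_u}^\conn$, together with a shuffle of $[m]$ into blocks of sizes $m_1,\dots,m_r$ with $\sum_u m_u=m$.

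The forward map sends $\tau$ to its components, relabeled to consecutive positions. Each $\eta$, each $\vec I$ and $\vec a$ entry, and each pair of $\pi$ lies within a single component, because paired sets intersect. Components are pairwise disjoint in support, so the resulting polymers are compatible.

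Conversely, a compatible tuple plus a shuffle reassembles a unique $\tau$. Since the polymer supports are disjoint, the components of the reassembled $G(\tau)$ are exactly the blocks. Two things must be checked here. First, connectivity of each block is preserved, which holds because the graph on a block depends only on supports. Second, no cross-block edges appear, which holds by disjointness.

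With the bijection in hand, we count. The shuffles number $m!/\prod_u m_u!$. Ordering the $r$ components contributes a factor $r!$, which we divide back out. Combining the $1/m!$ from the exponential with $(-\beta)^m=\prod_u(-\beta)^{m_u}$ gives
\begin{align}
\frac{(-\beta)^m}{m!}\cdot\frac{m!}{\prod_u m_u!}\cdot\frac{1}{r!} = \frac{1}{r!}\prod_{u=1}^r \frac{(-\beta)^{m_u}}{m_u!}.
\end{align}
Distributing the signs, $\sigma$, and $\lambda$ factors over the blocks yields exactly $\frac{1}{r!}\prod_u z_{s_u}(\beta)$. The $r=0$ term corresponds to $m=0$ and gives $1$.

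The main obstacle is the interchange of summations, i.e.\ reordering the double sum over $m$ and $\tau$ into a sum over $r$ and compatible tuples. We will handle it by absolute convergence. The bound $|\tr\prod_j X_j|\le 1$ and the crude count $|S_m|\le m!!\cdot\binom{n}{q}^{m}M^m$ (summed over $\eta$) show that the absolute series $\sum_m \frac{|\beta|^m}{m!}\sum_{\tau\in S_m}|\text{weight}|$ is dominated, for finite $n$, by $\sum_m \frac{(c_n|\beta|)^m\,m!!}{m!}$. This dominating series is entire in $\beta$. Hence both sides converge absolutely and the rearrangement is legitimate for every complex $\beta$, and in particular the identity holds as an identity of entire functions.
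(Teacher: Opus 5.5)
Your proposal is correct and follows the paper's proof essentially step for step. You decompose each Wick-expanded term $\tau\in S_m$ into the connected components of $G(\tau)$ via Lemma~\ref{lem:decomp}, invert this using compatible polymers together with a choice of blocks $C_1,\dots,C_r$, and turn the sum over blocks into the multinomial $m!/\prod_u m_u!$, with $1/r!$ accounting for the orderings of the components. Your finite-$n$ absolute-convergence argument, and your device of ordering the components so that you get an honest bijection rather than one ``up to $r!$'', fill in the rearrangement step that the paper leaves implicit when it simply compares coefficients of $\beta^m$.
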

\begin{proof}
    Decompose polymer $s_u$ as $(\eta_u, \pi_u, \vec I^{(u)}, \vec a^{(u)})$ and let $m_u = |\eta_u^{-1}(0)| + |\eta_u^{-1}(1)|$.
    We will compute the coefficient of $\beta^m$ in \eqref{eq:polymer} and compare it to the coefficient of $\beta^m$ in
    \begin{align}\label{eq:mgf}
        \EE{\tr e^{-\beta H}} = \sum_{m=0}^\infty \frac{(-\beta)^m}{m!} \EE{\tr(H^m)}.
    \end{align}
    I.e., it suffices to show that
    \begin{align}\label{eq:trcoeff}
        \frac{\EE{\tr(H^m)}}{m!} = \sum_{r=0}^\infty \frac{1}{r!} \sum_{\substack{s_1, \dots, s_r \in \cP\\ s_i \sim s_j\\ m_1 + \cdots + m_r = m}} \prod_{u=1}^r \frac{\sigma^{|\eta_u^{-1}(0)|} \lambda^{|\eta_u^{-1}(1)|}}{m_u!} \lr{\prod_{j:\eta_u(j)=1} c_{a_j}} \tr \prod_{j=1}^{m_u} X_j(s_u).
    \end{align}
    Recall from Lemma~\ref{lem:decomp} that if $\tau=(\eta,\pi,\vec I,\vec a)\in S_m$ has connected components $C_1,\dots,C_r$ in $G(\tau)$, then we can decompose $\tau$ over these components. We perform this by introducing $\rho_u:[m_u]\to C_u$ as the unique increasing bijection with $m_u=|C_u|$; then
    \begin{align}\label{eq:trhn}
        \EE{\tr(H^m)}
        =
        \sum_{\tau\in S_m}
        \prod_{u=1}^r
        \left[
            \sigma^{|\{j\in C_u:\eta(j)=0\}|}
            \lambda^{|\{j\in C_u:\eta(j)=1\}|}
            \lr{\prod_{j\in C_u:\eta(j)=1} c_{a_j}}
            \tr\prod_{a=1}^{m_u} X_{\rho_u(a)}(\tau)
        \right].
    \end{align}
    We can construct polymers $(s_1,\dots,s_r)$ corresponding to these connected components, with each polymer decomposed as
    \begin{align}
        s_u = (\eta_u, \pi_u, \vec I^{(u)}, \vec a^{(u)}) \in S_{m_u}^\conn.
    \end{align}
    Namely, we set
    \begin{align}
        \eta_u(a) = \eta(\rho_u(a))
    \end{align}
    and if $\eta_u(a)=0$, we set $I_a^{(u)} = I_{\rho_u(a)}$, while if $\eta_u(a)=1$, we set $a_a^{(u)} = a_{\rho_u(a)}$. Since $\{b,c\}\in\pi$ implies $I_b=I_c$, the property $I_b\cap I_c\neq\emptyset$ ensures that no pair is split across connected components; we thus put
    \begin{align}
        \{\rho_u^{-1}(b),\rho_u^{-1}(c)\}\in \pi_u
        \qquad\text{if}\qquad
        \{b,c\}\in\pi
        \text{ and }
        b,c\in C_u.
    \end{align}
    Note that $s_i\sim s_j$ for $i\neq j$ since $C_i$ and $C_j$ are disjoint connected components, and that $s_u\in S_{m_u}^{\conn}$ since $C_u$ is connected. Moreover, the ordering of the $r$ items in $(s_1,\dots,s_r)$ does not matter.

    To rewrite \eqref{eq:trhn} as a sum over $S_m^\conn$, we show a bijection (up to the ordering of the $r$-tuple) by constructing $\tau\in S_m$ from compatible polymers $s_1,\dots,s_r\in\cP$. Given sizes $m_1,\dots,m_r\geq 1$ such that $\sum_{u=1}^r m_u=m$, given an ordered tuple of disjoint subsets $C_1,\dots,C_r\subseteq [m]$ satisfying
    \begin{align}
        \bigsqcup_{u=1}^r C_u = [m],
        \qquad
        |C_u|=m_u,
    \end{align}
    and given $(s_1,\dots,s_r)$ with $s_u\in S_{m_u}^{\conn}$ and $s_i\sim s_j$ for $i\neq j$, we construct $\tau=(\eta,\pi,\vec I,\vec a)\in S_m$ as follows. Set $\rho_u:[m_u]\to C_u$ to be the unique increasing bijection. As in the first direction, set
    \begin{align}
        s_u = (\eta_u, \pi_u, \vec I^{(u)}, \vec a^{(u)})
    \end{align}
    with $\eta(\rho_u(a))=\eta_u(a)$; if $\eta_u(a)=0$, set $I_{\rho_u(a)} = I_a^{(u)}$, and if $\eta_u(a)=1$, set $a_{\rho_u(a)} = a_a^{(u)}$.
    Define the pairing
    \begin{align}
        \pi
        =
        \bigsqcup_{u=1}^r
        \left\{
            \{\rho_u(a),\rho_u(b)\}
            \,:\,
            \{a,b\}\in \pi_u
        \right\}.
    \end{align}
    Since the polymers are pairwise compatible, no edge of the global intersection graph joins two different $C_u$'s, so the connected components of $G(\tau)$ are exactly $C_1,\dots,C_r$. Thus $\tau\in S_m$. Note that here, the ordering of $(C_1,\dots,C_r)$ and $(s_1,\dots,s_r)$ does not matter (as long as they agree), so $r!$ choices produce the same $\tau$.

    This bijection implies the decomposition
    \begin{align}
        \EE{\tr(H^m)}
        =
        \sum_{r=0}^\infty
        \frac{1}{r!}
        \sum_{\substack{m_1+\cdots+m_r=m\\ m_u\geq 1}}
        \sum_{\substack{C_1,\dots,C_r\subseteq [m]\\ \bigsqcup_u C_u=[m]\\ |C_u|=m_u}}
        \sum_{\substack{s_1,\dots,s_r\in \cP\\ s_i\sim s_j\\ |s_u|=m_u}}
        \prod_{u=1}^r
        \left[
            \sigma^{|\eta_u^{-1}(0)|}
            \lambda^{|\eta_u^{-1}(1)|}
            \lr{\prod_{j:\eta_u(j)=1} c_{a_j^{(u)}}}
            \tr\prod_{j=1}^{m_u} X_j(s_u)
        \right].
    \end{align}
    We replace the sum over $C_1,\dots,C_r$ by the counting factor $\binom{m}{m_1,\dots,m_r}$, so
    \begin{align}
        \frac{\EE{\tr(H^m)}}{m!}
        =
        \sum_{r=0}^\infty
        \frac{1}{r!}
        \sum_{\substack{m_1+\cdots+m_r=m\\ m_u\geq 1}}
        \sum_{\substack{s_1,\dots,s_r\in \cP\\ s_i\sim s_j\\ |s_u|=m_u}}
        \prod_{u=1}^r
        \left[
            \frac{\sigma^{|\eta_u^{-1}(0)|}\lambda^{|\eta_u^{-1}(1)|}}{m_u!}
            \lr{\prod_{j:\eta_u(j)=1} c_{a_j^{(u)}}}
            \tr\prod_{j=1}^{m_u} X_j(s_u)
        \right].
    \end{align}
    We combine the sums over $m_1+\cdots+m_r=m$ and $s_1,\dots,s_r$, noting that every polymer $s_u\in\cP$ automatically satisfies
    \begin{align}
        |s_u|
        =
        |\eta_u^{-1}(0)|
        +
        |\eta_u^{-1}(1)|
        = m_u,
    \end{align}
    to obtain
    \begin{align}
        \frac{\EE{\tr(H^m)}}{m!}
        =
        \sum_{r=0}^\infty
        \frac{1}{r!}
        \sum_{\substack{s_1,\dots,s_r\in \cP\\ s_i\sim s_j\\ m_1+\cdots+m_r=m}}
        \prod_{u=1}^r
        \left[
            \frac{\sigma^{|\eta_u^{-1}(0)|}\lambda^{|\eta_u^{-1}(1)|}}{m_u!}
            \lr{\prod_{j:\eta_u(j)=1} c_{a_j^{(u)}}}
            \tr\prod_{j=1}^{m_u} X_j(s_u)
        \right].
    \end{align}
    This is exactly \eqref{eq:trcoeff}, completing the proof.
\end{proof}

\subsection{Koteck\'y-Preiss}
To apply Koteck\'y-Preiss, we set 
\begin{align}
    a(P) = \frac{|\supp(P)|}{2\alpha}
\end{align}
for some $\alpha$ satisfying $\alpha \geq \max\{q, L\}$.
\begin{lemma}[Annealed condition]\label{lem:annealcrit}
    For $t \in \C$, define
    \begin{align}
        \Xi(t) = \sup_{x \in [n]} \sum_{\substack{s \in \cP\\ x \in \supp(s)}} |z_s(t)| e^{|\supp(s)|/2\alpha}.
    \end{align}
    If $\Xi(t) \leq 1/2\alpha$, then \eqref{eq:kp} holds for $a(s) = |\supp(s)|/2\alpha$.
\end{lemma}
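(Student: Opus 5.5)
We verify the Koteck\'y-Preiss condition \eqref{eq:kp} with $\rho_s = |z_s(t)|$ and $a(s) = |\supp(s)|/2\alpha$ by a union bound over the sites of the fixed polymer. Fix $s \in \cP$. By definition of compatibility, $s' \nsim s$ holds iff $\supp(s') \cap \supp(s) \neq \emptyset$, i.e., iff there is some $x \in \supp(s)$ with $x \in \supp(s')$. Hence the indicator of incompatibility is dominated by the sum over $x\in\supp(s)$ of $1\{x \in \supp(s')\}$. This gives
\begin{align}
    \sum_{s' \nsim s} |z_{s'}(t)| e^{|\supp(s')|/2\alpha} \leq \sum_{x \in \supp(s)} \sum_{\substack{s' \in \cP\\ x \in \supp(s')}} |z_{s'}(t)| e^{|\supp(s')|/2\alpha} \leq |\supp(s)|\, \Xi(t).
\end{align}
Next we use the hypothesis $\Xi(t) \leq 1/2\alpha$. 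The right-hand side is then at most $|\supp(s)|/2\alpha = a(s)$, which is exactly \eqref{eq:kp}.

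Two minor points must be checked for Theorem~\ref{thm:kp} to apply. First, every polymer is self-incompatible. This holds because any $s\in\cP$ has $m\geq1$ slots, each with nonempty support (since $q\geq 4$ and the sets $A_a$ are nonempty), so $\supp(s)\neq\emptyset$ and $s\nsim s$. Second, the functions $a$ and $\rho$ must be nonnegative and finite. Nonnegativity is immediate, and finiteness of the inner sums is implicit in the hypothesis $\Xi(t)\le 1/2\alpha$.

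No step here is a real obstacle. The substantive work, namely bounding $\Xi(t)$ via Lemma~\ref{lem:count} and Fact~\ref{fac:tfunc}, is deferred to later results. The only care needed is the interchange of sums in the union bound, which is justified by nonnegativity of all terms (Tonelli).
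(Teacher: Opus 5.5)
Your proposal is correct and matches the paper's proof: dominate $1\{s'\nsim s\}$ by $\sum_{x\in\supp(s)}1\{x\in\supp(s')\}$, exchange the sums, and bound the result by $|\supp(s)|\,\Xi(t)\le |\supp(s)|/2\alpha = a(s)$. Your additional check of self-incompatibility goes slightly beyond what the paper verifies; it tacitly assumes every $A_a\neq\emptyset$, which the paper never states but which is harmless.
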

\begin{proof}
    If $s' \nsim s$, then $\supp(s') \cap \supp(s) \neq \emptyset$, so
    \begin{align}
        1\{s' \nsim s\} \leq \sum_{x \in \supp(s)} 1\{x \in \supp(s')\}.
    \end{align}
    This implies \eqref{eq:kp} since
    \begin{align}
        \sum_{s' \nsim s} |z_{s'}(t)| e^{a(s')} \leq \sum_{x \in \supp(s)} \sum_{s':x\in\supp(s')} |z_{s'}(t)| e^{|\supp(s')|/2\alpha} \leq |\supp(s)| \Xi(t) \leq \frac{|\supp(s)|}{2\alpha}.
    \end{align}
\end{proof}
We proceed to bound $\Xi(t)$ to prove Theorem~\ref{thm:anneal}, which is obtained directly from the following lemma.
\begin{lemma}[Annealed Koteck\'y-Preiss]\label{lem:annealcritsat}
    If
    \begin{align}\label{eq:anncond}
        \frac{\alpha e^{q/2\alpha}}{2^{q+1}}|t|^2 + \alpha e^{L/2\alpha} \Gamma |\lambda t| \leq \frac{1}{2\sqrt e}
    \end{align}
    then $\Xi(t) \leq 1/2\alpha$.
\end{lemma}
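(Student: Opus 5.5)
Fix $x\in[n]$ and bound the sum over polymers $s\ni x$ by organizing it according to the numbers of Gaussian slots $2k=|\eta^{-1}(0)|$ and observable slots $r=|\eta^{-1}(1)|$. The plan is to show that each $(k,r)$ term is at most a constant times $\frac{(k+r)^{k+r-1}}{(k+r)!}$ multiplied by $A^{k}B^{r}$, where
\[
A=\frac{\alpha e^{q/2\alpha}}{2^{q+1}}|t|^2,\qquad B=\alpha e^{L/2\alpha}\Gamma|\lambda t| .
\]
Summing over $(k,r)$ with multinomial weights then yields a function of the form $T(A+B)$. Fact~\ref{fac:tfunc} and the hypothesis \eqref{eq:anncond} give $T(A+B)\le 1/2$, and a final prefactor of $1/\alpha$ then produces the claimed bound $\Xi(t)\le 1/2\alpha$.

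\textbf{Step 1: trivial bounds on each polymer weight.} Since $|\tr\prod_j X_j|\le 1$, for a polymer with $m=2k+r$ slots we have
\[
|z_s|\le \frac{(|t|\sigma)^{2k}(|t\lambda|)^r\prod_j|c_{a_j}|}{m!}.
\]
We also use $e^{|\supp(s)|/2\alpha}\le e^{kq/2\alpha}\,e^{rL/2\alpha}$, because the $2k$ Gaussian slots carry only $k$ distinct sets.

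\textbf{Step 2: counting polymers.} For fixed $(k,r)$ we sum over four choices:
\begin{itemize}
\item $\eta$, which gives a factor $\binom{m}{2k}$;
\item the pairing $\pi$, which gives $(2k-1)!!$;
\item the $k$ distinct Gaussian sets $K_1,\dots,K_k$;
\item the observable labels $a_j$.
\end{itemize}
Connectivity of $G(\tau)$ is equivalent to connectivity of the reduced intersection graph on $K_1,\dots,K_k,A_{a_1},\dots,A_{a_r}$. For the labels we run a spanning-tree argument like that of Lemma~\ref{lem:count}(1), now also treating the observable supports as variables. Rooting the tree at a vertex containing $x$, each Gaussian child contributes $|S_{p}|\binom{n-1}{q-1}$, while each observable child contributes $|S_p|\cdot\max_y\sum_{a\ni y}|c_a|\le |S_p|\Gamma$. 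The weighted Cayley identity then gives the bound
\[
\binom{n-1}{q-1}^k\,\Gamma^r\,(kq+rL)^{k+r-1}.
\]
Now multiply by $\sigma^{2k}=((q-1)!/n^{q-1})^k$. Using $\binom{n-1}{q-1}(q-1)!/n^{q-1}\le 1$ and $kq+rL\le\alpha(k+r)$, the total for fixed $(k,r)$ is at most
\[
\frac{\binom{m}{2k}(2k-1)!!}{m!}\,|t|^{2k}\,|t\lambda\Gamma|^r\,\alpha^{k+r-1}(k+r)^{k+r-1}\,e^{kq/2\alpha+rL/2\alpha}.
\]

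\textbf{Step 3: combinatorial simplification (main obstacle).} The prefactor simplifies to
\[
\frac{\binom{m}{2k}(2k-1)!!}{m!}=\frac{1}{2^k k!\,r!}=\frac{1}{2^k(k+r)!}\binom{k+r}{k}.
\]
This leaves a factor $2^{-k}$ per pair, whereas the target $A$ carries $2^{-(q+1)}$. The missing $2^{-q}$ per Gaussian set must come from the trace, and this is the delicate step. The input is that $\tr(\psi_I\psi_I)=2^{-q}$ under the normalization $\{\psi_i,\psi_j\}=\delta_{ij}$, i.e.\ $\psi_i^2=1/2$. Each Majorana string is then a product of operators of norm $2^{-1/2}$, so $\|\psi_I\|=2^{-q/2}$ and $\|\psi_{A}\|\le 2^{-L/2}\le 1$. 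Hence $|\tr\prod_j X_j|\le 2^{-qk}$, since the $2k$ Gaussian factors each have norm $2^{-q/2}$. I will first double-check this normalization against the paper's $\cJ$ convention and the norm bound \eqref{eq:syknorm}, since the constant in \eqref{eq:anncond} depends on it.

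\textbf{Step 4: summing.} Substituting the trace bound and summing over $(k,r)$ with $k+r=N\ge 1$ via the binomial theorem gives
\[
\frac1\alpha\sum_{N\ge1}\frac{N^{N-1}}{N!}\bigl(A+B\bigr)^N=\frac{1}{\alpha}\,T(A+B).
\]
By \eqref{eq:anncond} we have $A+B\le 1/2\sqrt e$, so Fact~\ref{fac:tfunc} gives $T(A+B)\le 1/2$ and hence $\Xi(t)\le 1/2\alpha$. The identification of $(2k-1)!!$ pairings with $k$ distinct sets, including the possibility that distinct pairs share the same set, is harmless: it only overcounts, because we bound over ordered tuples $K_1,\dots,K_k$ with repetitions allowed.
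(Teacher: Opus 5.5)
Your proof is correct and follows the paper's argument essentially step for step. The shared steps are the trace bound $2^{-qk}$, the factor $\binom{2k+r}{r}(2k-1)!!/(2k+r)! = 1/(2^k k!\,r!)$, the spanning-tree count via the weighted Cayley identity, and the resummation into $\frac{1}{\alpha}T(A+B)$ with Fact~\ref{fac:tfunc}. Your handling of the observable labels, as tree vertices weighted by $\sum_{a\ni y}|c_a|\le\Gamma$, is more careful than the paper's write-up: the paper bounds $\prod_j|c_{a_j}|\le\Gamma^r$ and then applies Lemma~\ref{lem:count}(1) with the $B_j$ fixed, so its sum over $\vec a$ is never actually counted. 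Your only slip is the harmless claim $\|\psi_A\|\le 2^{-L/2}$, which should read $\|\psi_A\| = 2^{-|A|/2}\le 1$; only the bound by $1$ is used.
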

\begin{proof}
    Let polymer $s = (\eta, \pi, \vec I, \vec a)$ have $|\eta^{-1}(0)| = 2k$ vertices from the SYK Hamiltonian and $|\eta^{-1}(1)| = r$ vertices from the perturbation. Since $\norm{\psi_I} = 2^{-q/2}$ and $\norm{\psi_{A_a}} = 2^{-|A_a|/2} \leq 1$, we have
    \begin{align}
        \tr \prod_{j=1}^{2k+r} X_j(s) \leq 2^{-qk}.
    \end{align}
    For $L = \max_a |A_a|$, we also have
    \begin{align}
        \abs{\supp(s)} \leq qk + Lr
    \end{align}
    and thus
    \begin{align}
        |z_s(t)| e^{a(s)} \leq \frac{(|t|\sigma)^{2k}|\lambda t|^r}{(2k+r)!} \lr{\prod_{j:\eta(j)=1} |c_{a_j}|} 2^{-qk} e^{(qk+Lr)/2\alpha} \leq \frac{(|t|\sigma)^{2k}|\lambda t|^r}{(2k+r)!} \Gamma^r 2^{-qk} e^{(qk+Lr)/2\alpha}.
    \end{align}
    We sum over all such polymers for some fixed $x \in [n]$ to obtain
    \begin{align}
        \sum_{\substack{s \in \cP \\ x \in \supp(s)\\ |\eta^{-1}(0)| = 2k, |\eta^{-1}(1)|=r}} |z_s(t)| e^{a(s)} &\leq \frac{(|t|\sigma)^{2k}|\lambda t|^r}{(2k+r)!} \Gamma^r 2^{-qk} e^{(qk+Lr)/2\alpha} \nonumber\\
        &\quad \times \#\{s \in S_m^\conn : x \in \supp(s), |\eta^{-1}(0)| = 2k, |\eta^{-1}(1)|=r\}.
    \end{align}
    We now bound the counting factor. There are $\binom{2k+r}{r}$ choices for assignments to the perturbation. Recall that identifying $s$ requires identifying a pairing $\pi$ and consistent subsets $I_1, \dots, I_{2k}$ such that $I_a = I_b$ for all $\{a, b\} \in \pi$. Let $K_1, \dots, K_k$ denote the sets corresponding to pairs of $I_1, \dots, I_{2k}$. For each tuple $(K_1, \dots, K_k)$, there are $(2k-1)!!$ choices for $\pi$. The remaining counting factor is given by Lemma~\ref{lem:count} to yield
    \begin{align}
        \sum_{\substack{s \in \cP \\ x \in \supp(s)\\ |\eta^{-1}(0)| = 2k, |\eta^{-1}(1)|=r}} |z_s(t)| e^{a(s)} \leq \binom{2k+r}{r}(2k-1)!! \binom{n-1}{q-1}^k (kq+rL)^{k+r-1} \frac{(|t|\sigma)^{2k}|\lambda t|^r}{(2k+r)!} \Gamma^r 2^{-qk} e^{(qk+Lr)/2\alpha}.
    \end{align}
    Using $\binom{2k+r}{r}(2k-1)!!/(2k+r)! = 1/2^kk!r!$, $\binom{n-1}{q-1} \sigma^2 \leq 1$ and the assumed condition $qk+Lr \leq \alpha(k+r)$, we have
    \begin{align}\label{eq:full1sum}
        \sum_{\substack{s \in \cP \\ x \in \supp(s)\\ |\eta^{-1}(0)| = 2k, |\eta^{-1}(1)|=r}} |z_s(t)| e^{a(s)} \leq \frac{1}{\alpha} \frac{(k+r)^{k+r-1}}{k!r!} \lr{\frac{\alpha e^{q/2\alpha}}{2^{q+1}}|t|^2}^k \lr{\alpha e^{L/2\alpha}\Gamma |\lambda t|}^r.
    \end{align}
    To apply Fact~\ref{fac:tfunc}, we use
    \begin{align}
        \sum_{\substack{k,r \geq 0\\ k+r \geq 1}} \frac{(k+r)^{k+r-1}}{k!r!} y_1^k y_2^r \leq T(y_1 + y_2)
    \end{align}
    to rewrite the sum in \eqref{eq:full1sum} as
    \begin{align}
        \Xi(t) \leq \frac{1}{\alpha} T\lr{\frac{\alpha e^{q/2\alpha}}{2^{q+1}}|t|^2 + \alpha e^{L/2\alpha} \Gamma |\lambda t|}.
    \end{align}
    By Fact~\ref{fac:tfunc}, enforcing
    \begin{align}
        \frac{\alpha e^{q/2\alpha}}{2^{q+1}}|t|^2 + \alpha e^{L/2\alpha} \Gamma |\lambda t| \leq \frac{1}{2\sqrt e}
    \end{align}
    implies that
    \begin{align}
        \sum_{s \in \cP : x \in \supp(s)} |z_s(t)| e^{|\supp(s)|/2\alpha} \leq \frac{1}{2\alpha}.
    \end{align}
\end{proof}

\subsection{Proof of Theorem~\ref{thm:anneal}}
The proof of Theorem~\ref{thm:anneal} follows by applying the result of Lemma~\ref{lem:annealcritsat} to Lemma~\ref{lem:annealcrit} and Theorem~\ref{thm:kp}. We report a slightly looser sufficient condition on $|\beta|$ to write the zero-free region described by \eqref{eq:anncond} as a disk.

\begin{lemma}[Annealed partition function zero-free disk]\label{lem:annealdisk}
The condition \eqref{eq:anncond} is implied by choosing
\begin{align}\label{eq:tann}
	\alpha = \max\{q,L\}, \quad |\lambda|\Gamma \leq 2^{-q/2} \max\{q,L\}^{-1/2}, \quad |t|\cJ = |t| \cdot \sqrt{\frac{q}{2^{q-1}}} \leq \sqrt{2}\lr{\sqrt{1+1/e}-1} \sqrt{\frac{q}{\alpha}}.
\end{align}
\end{lemma}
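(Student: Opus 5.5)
With $\alpha=\max\{q,L\}$ we have $q/2\alpha\le 1/2$ and $L/2\alpha\le 1/2$. So both exponential factors in \eqref{eq:anncond} are at most $e^{1/2}$, and it suffices to show
\begin{align}
    \frac{\alpha e^{1/2}}{2^{q+1}}|t|^2 + \alpha e^{1/2}\Gamma|\lambda t| \leq \frac{1}{2\sqrt e},
\end{align}
or equivalently, after multiplying through by $2^{q+1}e^{-1/2}/\alpha$,
\begin{align}
    |t|^2 + 2^{q+1}\Gamma|\lambda|\,|t| \leq \frac{2^{q}}{e\alpha}.
\end{align}

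Next use the hypothesis $|\lambda|\Gamma\le 2^{-q/2}\alpha^{-1/2}$. This bounds the linear coefficient by $2^{q+1}\Gamma|\lambda|\le 2^{q/2+1}\alpha^{-1/2}$. Substituting $u = |t|\,2^{-q/2}\alpha^{1/2}$ then turns the condition into the $q$- and $\alpha$-free quadratic inequality
\begin{align}
    u^2 + 2u \leq \frac{1}{e}.
\end{align}
Its positive root is $u_* = \sqrt{1+1/e}-1$. Since $u^2+2u$ is increasing for $u\ge0$, any $u\le u_*$ satisfies the inequality. Hence the sufficient condition is $|t|\le 2^{q/2}\alpha^{-1/2}\lr{\sqrt{1+1/e}-1}$.

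Finally, convert to the $\cJ$ normalization. Multiplying by $\cJ=\sqrt{q/2^{q-1}} = \sqrt{2q}\,2^{-q/2}$ gives exactly $|t|\cJ \le \sqrt2\lr{\sqrt{1+1/e}-1}\sqrt{q/\alpha}$, which is the condition in \eqref{eq:tann}.

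The argument is entirely routine. The only care needed is to check that the direction of each inequality is preserved: the bounds $e^{q/2\alpha},e^{L/2\alpha}\le e^{1/2}$ and the bound on $|\lambda|\Gamma$ each only make the left-hand side larger, and the monotonicity of the quadratic in $u$ justifies passing from $u\le u_*$ to the inequality.
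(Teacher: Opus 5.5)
Your proof is correct and is essentially the paper's own argument. The paper likewise bounds $e^{q/2\alpha},e^{L/2\alpha}\le e^{1/2}$, uses $|t|^2\le u_*^2\,2^q/\alpha$ and $\Gamma|\lambda|\,|t|\,\alpha\le u_*$ with $u_*=\sqrt{1+1/e}-1$, and closes with the identity $u_*^2+2u_*=1/e$; your substitution $u=|t|\,2^{-q/2}\alpha^{1/2}$ together with monotonicity of $u^2+2u$ is the same computation written slightly differently.
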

\begin{proof}
	Note that the condition on $\Gamma|\lambda|$ in \eqref{eq:tann} implies that
	\begin{align}
		|t| \leq \lr{\sqrt{1+1/e}-1} \min\left\{\frac{2^{q/2}}{\sqrt{\alpha}}, \frac{1}{\Gamma|\lambda|\alpha}\right\}.
	\end{align}
	Since additionally $\alpha = \max\{q, L\}$, we have
	\begin{align}
		\frac{\alpha e^{q/2\alpha}}{2^{q+1}}|t|^2 + \alpha e^{L/2\alpha} \Gamma |\lambda t| &\leq \frac{\alpha e^{1/2}}{2^{q+1}}|t|^2 + \alpha e^{1/2} \Gamma |\lambda t|\\
		&\leq \frac{e^{1/2}}{2} \cdot \lr{\sqrt{1+1/e}-1}^2 + e^{1/2}\lr{\sqrt{1+1/e}-1}\\
		&\leq \frac{1}{2\sqrt e}
	\end{align}
	where in the first inequality, we used $e^{q/2\alpha}, e^{L/2\alpha} \leq e^{1/2}$; in the second inequality we used the fact that \eqref{eq:tann} implies
	\begin{align}
		|t|^2 \leq \lr{\sqrt{1+1/e}-1}^2 \frac{2^q}{\alpha}, \quad |t| \leq \lr{\sqrt{1+1/e}-1} \frac{1}{\Gamma|\lambda|\alpha}.
	\end{align}
	This satisfies \eqref{eq:anncond}.
\end{proof}

\section{Second moment}
\subsection{Polymer representation}
Similarly to the annealed computation, we expand the second moment
\begin{align}
    \EE{\wh Z(\beta) \wh Z(\gamma)} = \sum_{\ell_1,\ell_2 \geq 0} \frac{(-\beta)^{\ell_1}}{\ell_1!}\frac{(-\gamma)^{\ell_2}}{\ell_2!} \EE{\tr(H^{\ell_1})\tr(H^{\ell_2})}.
\end{align}
Since we have two copies of the system, we introduce
\begin{align}
    \mu \,:\, [m] \to \{1,2\}
\end{align}
to record the replica label. Each polymer will thus now have the form $\tau = (\eta, \pi, \mu, \vec I, \vec a)$. For $\ell_1+\ell_2=2k$, the moments of the unperturbed Hamiltonian are expanded as
\begin{align}
    \EE{\tr(H^{\ell_1})\tr(H^{\ell_2})} &= \binom{2k}{\ell_1}^{-1} \sigma^{2k}\sum_{\pi \in \Pi(2k)} \sum_{\substack{\mu:[2k]\to\{1,2\}\\|\mu^{-1}(1)|=\ell_1\\ |\mu^{-1}(2)|=\ell_2}}\sum_{I_1,\dots,I_{2k}} \nonumber\\
    &\quad \times \lr{\prod_{\{u,v\}\in\pi}1\{I_u=I_v\}} \tr(\prod_{j:\mu(j)=1}\psi_{I_j}) \tr(\prod_{j:\mu(j)=2}\psi_{I_j})
\end{align}
since there are $\binom{2k}{\ell_1}$ choices for $\mu$. For the perturbed Hamiltonian, let $T_m(\ell_1, \ell_2)$ for $m = \ell_1+\ell_2$ denote the set of all $\tau = (\eta, \pi, \mu, \vec I, \vec a)$ such that
\begin{align}
    |\mu^{-1}(1)| = \ell_1, \quad |\mu^{-1}(2)| = \ell_2
\end{align}
and such that for every $\{u,v\} \in \pi$, $I_u = I_v$. Then
\begin{align}\label{eq:2rep-moment}
    \EE{\tr(H^{\ell_1})\tr(H^{\ell_2})} = \binom{m}{\ell_1}^{-1} \sum_{\tau \in T_m(\ell_1, \ell_2)} \sigma^{|\eta^{-1}(0)|} \lambda^{|\eta^{-1}(1)|} \lr{\prod_{j:\eta(j)=1} c_{a_j}} \tr(W_1(\tau)) \tr(W_2(\tau))
\end{align}
for
\begin{align}
    W_b(\tau) = \prod_{j:\mu(j)=b} X_j(\tau),
\end{align}
where the product preserves the order of $\vec I, \vec a$.

We now define separated and mixed graphs to track the connected contribution. Define the graph $G^\sep(\tau)$ on vertex set $[m]$ by putting an edge between distinct $u,v \in [m]$ iff
\begin{align}
    \mu(u) = \mu(v) \quad \text{and}\quad \supp(X_u(\tau)) \cap \supp(X_v(\tau)) \neq \emptyset.
\end{align}
This corresponds to two non-interacting replicas; we will write $\EE{\wh Z} \cdot \EE{\wh Z}$ in terms of $G^\sep$.

To address the connected piece $\EE{\wh Z(\beta) \wh Z(\gamma)}$, we introduce a second type of edge between replicas. Define the graph $G^\mix(\tau)$ on vertex set $[m]$ by putting an edge between distinct $u, v \in [m]$ iff either
\begin{enumerate}
    \item $\mu(u) = \mu(v)$ and $\supp(X_u(\tau)) \cap \supp(X_v(\tau)) \neq \emptyset$, or
    \item $\{u,v\} \in \pi$, $\eta(u)=\eta(v) = 0$ and $\mu(u) \neq \mu(v)$.
\end{enumerate}
The second type of edge is between distinct replicas ($\mu(u) \neq \mu(v)$) and only connects Wick-paired contributions from the original SYK Hamiltonian.

We write $T_m^\sep, T_m^\mix$ to denote the set of $\tau \in T_m$ whose intersection graph $G^\sep, G^\mix$ is connected, and we define polymers
\begin{align}
    \cP^\sep = \bigcup_{m \geq 1} T_m^\sep, \quad \cP^\mix = \bigcup_{m \geq 1} T_m^\mix.
\end{align}
We define the replica supports
\begin{align}
    \supp_b(\tau) = \bigcup_{j\,:\,\mu(j)=b} \supp(X_j(\tau)),
    \qquad b\in\{1,2\},
\end{align}
and the total support on $\{1,2\}\times[n]$ by
\begin{align}
    \supp(\tau) =
    \lr{\{1\}\times \supp_1(\tau)}\cup
    \lr{\{2\}\times \supp_2(\tau)},
\end{align}
which implies
\begin{align}
    |\supp(\tau)| = |\supp_1(\tau)| + |\supp_2(\tau)|.
\end{align}
Two polymers are compatible ($P \sim Q$) iff their supports are disjoint, equivalently iff
\begin{align}
    \supp(P)\cap \supp(Q)=\emptyset
    \iff
    \supp_1(P)\cap \supp_1(Q)=\emptyset
    \text{ and }
    \supp_2(P)\cap \supp_2(Q)=\emptyset.
\end{align}
As a direct consequence of Lemma~\ref{lem:polymer}, we have the following.

\begin{corollary}[Disconnected polymer representation]\label{cor:polymer2}
    For polymer $P = (\eta, \pi, \mu, \vec I, \vec a)$, define
    \begin{align}\label{eq:z2}
        z_P(\beta, \gamma) = \frac{\sigma^{|\eta^{-1}(0)|} \lambda^{|\eta^{-1}(1)|}}{\lr{|\eta^{-1}(0)|+|\eta^{-1}(1)|}!} (-\beta)^{|\mu^{-1}(1)|} (-\gamma)^{|\mu^{-1}(2)|} \lr{\prod_{j:\eta(j)=1} c_{a_j}} \tr(W_1(P)) \tr(W_2(P)),
    \end{align}
    we have
    \begin{align}
        \EE{\wh Z(\beta)} \cdot \EE{\wh Z(\gamma)} &= \sum_{r=0}^\infty \frac{1}{r!} \sum_{\substack{P_1, \dots, P_r \in \cP^\sep\\ P_i \sim P_j}} \prod_{u=1}^r z_{P_u}(\beta, \gamma) \label{eq:zzsep}.
    \end{align}
\end{corollary}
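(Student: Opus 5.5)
The plan is to deduce Corollary~\ref{cor:polymer2} from Lemma~\ref{lem:polymer} applied once to each replica, and then to merge the two resulting polymer sums into a single sum over $\cP^\sep$.

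First, by Lemma~\ref{lem:polymer},
\begin{align}
\EE{\wh Z(\beta)}\,\EE{\wh Z(\gamma)} = \sum_{r_1,r_2\geq 0}\frac{1}{r_1!\,r_2!}\sum_{\substack{s_1,\dots,s_{r_1}\in\cP\\ s_i\sim s_j}}\ \sum_{\substack{s'_1,\dots,s'_{r_2}\in\cP\\ s'_i\sim s'_j}} \prod_{u=1}^{r_1} z_{s_u}(\beta)\prod_{v=1}^{r_2} z_{s'_v}(\gamma).
\end{align}
Absolute convergence is not needed at the level of formal power series in $(\beta,\gamma)$, so the rearrangements below are legitimate.

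Second, observe that $\cP^\sep$ is in bijection with $\{1,2\}\times\cP$. If $G^\sep(P)$ is connected, then, because $G^\sep$ has no edges between replicas, every vertex of $P$ carries the same label $\mu\equiv b$. A Wick pair cannot straddle the replicas, since that would require an edge joining different $\mu$-classes, which $G^\sep$ does not allow. Forgetting $\mu$ therefore turns $P$ into a polymer $s\in\cP$, and conversely each $s\in\cP$ together with a choice of $b$ gives back $P$.

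Next, I would check that the weights match under this bijection. For $\mu\equiv 1$, the other replica is empty, so $W_2(P)=1$ and $\tr(W_2)=1$. Also $|\mu^{-1}(1)|=|\eta^{-1}(0)|+|\eta^{-1}(1)|$. Comparing with \eqref{eq:z2}, this gives $z_P(\beta,\gamma)=z_s(\beta)$, and symmetrically $z_P(\beta,\gamma)=z_s(\gamma)$ when $\mu\equiv 2$. Compatibility also matches: two replica-$1$ polymers are compatible iff their $\supp_1$ sets are disjoint, which is exactly compatibility in $\cP$. Polymers living in different replicas are always compatible, because their supports in $\{1,2\}\times[n]$ are automatically disjoint.

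Finally, I would expand the right-hand side of \eqref{eq:zzsep}. An $r$-tuple of pairwise compatible separated polymers is specified by three pieces of data: the subset of positions carrying replica~$1$ (of size $r_1$, with $\binom{r}{r_1}$ choices), a compatible $r_1$-tuple in $\cP$, and a compatible $r_2$-tuple in $\cP$, where $r_2=r-r_1$. Since $\frac{1}{r!}\binom{r}{r_1}=\frac{1}{r_1!r_2!}$, this reproduces the product expansion above. The only point needing care is the bijection in the second step, specifically confirming that $G^\sep$-connectedness forces a single replica and that Wick pairs cannot cross replicas. Everything else is bookkeeping.
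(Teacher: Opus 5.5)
Your proposal is correct and takes the route the paper intends: the paper records this corollary as a direct consequence of Lemma~\ref{lem:polymer} applied to each replica, and your argument fills in the bookkeeping it leaves implicit. In particular, all of the following are right: the identification $\cP^\sep \cong \{1,2\}\times\cP$ (connectedness of $G^\sep$ forces a single replica label), the weight match via $\tr(W_2)=1$ for the empty replica, the automatic compatibility of polymers living in different replicas, and the factor $\binom{r}{r_1}/r! = 1/(r_1!\,r_2!)$. Working coefficientwise in $(\beta,\gamma)$ is also legitimate, since each coefficient is a finite sum.
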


We now address connected components of $G^\mix$ similarly to Lemma~\ref{lem:decomp}.

\begin{lemma}[Decomposition into mixed connected components]\label{lem:decomp2}
    Let $\tau = (\eta, \pi, \mu, \vec I, \vec a) \in T_m(\ell_1, \ell_2)$ and let $C_1, \dots, C_r \subseteq [m]$ be the connected components of $G^\mix(\tau)$. Then
    \begin{align}
        &\sigma^{|\eta^{-1}(0)|} \lambda^{|\eta^{-1}(1)|} \tr(W_1(\tau)) \tr(W_2(\tau)) \nonumber\\
        &= \prod_{u=1}^r \sigma^{|\{j\in C_u : \eta(j)=0\}|} \lambda^{|\{j\in C_u : \eta(j)=1\}|} \tr(\prod_{j\in C_u:\mu(j)=1} X_j(\tau)) \tr(\prod_{j\in C_u:\mu(j)=2} X_j(\tau)),
    \end{align}
    where the product preserves the ordering of $\vec I, \vec a$.
\end{lemma}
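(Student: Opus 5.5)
The plan mirrors Lemma~\ref{lem:decomp}, applied to each replica separately. First, the factors $\sigma$ and $\lambda$ are trivial to distribute: $\{C_u\}$ partitions $[m]$, so
\begin{align}
    |\eta^{-1}(0)| = \sum_u |\{j\in C_u:\eta(j)=0\}|, \qquad |\eta^{-1}(1)| = \sum_u |\{j\in C_u:\eta(j)=1\}|.
\end{align}
It then suffices to prove the factorizations
\begin{align}
    \tr(W_b(\tau)) = \prod_{u=1}^r \tr\Big(\prod_{j\in C_u:\mu(j)=b} X_j(\tau)\Big), \qquad b\in\{1,2\},
\end{align}
and multiply the two.

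Fix a replica $b$. The key observation is that the vertices with $\mu(j)=b$, restricted to a single component $C_u$, may no longer induce a connected subgraph of the same-replica intersection graph, since $C_u$ may be connected only through mixed Wick edges. Nevertheless, the supports of different components are disjoint within replica $b$. Indeed, if $j\in C_u$ and $j'\in C_v$ with $u\neq v$, $\mu(j)=\mu(j')=b$, and $\supp(X_j)\cap\supp(X_{j'})\neq\emptyset$, then edge type (1) of $G^\mix$ would join $C_u$ and $C_v$, contradicting the fact that they are distinct components. So the sets
\begin{align}
    U_{u,b} = \bigcup_{j\in C_u:\mu(j)=b} \supp(X_j(\tau))
\end{align}
are pairwise disjoint for fixed $b$.

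Since every $X_j(\tau)$ is an even Majorana string, operators supported on disjoint sets commute. We can therefore reorder $W_b(\tau)$ into $\prod_u \prod_{j\in C_u,\mu(j)=b} X_j(\tau)$ while preserving the relative order within each $C_u$. The trace then factorizes over these disjoint even-parity blocks. This holds because even Majorana operators on disjoint mode sets generate a tensor-product-like structure under which the normalized trace is multiplicative, exactly as in Lemma~\ref{lem:decomp}. If some $C_u$ has no vertices in replica $b$, its factor is $\tr(\mathbb{1})=1$, which is consistent with the statement.

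The only point needing care is multiplicativity of the normalized trace for even operators on disjoint mode sets when the blocks are not separately "connected". Connectivity is never used for this, only disjointness and evenness, so I would cite the same argument as in Lemma~\ref{lem:decomp}. I do not expect a real obstacle. Note also that mixed Wick pairs are never split across components, since they are edges of $G^\mix$, although this fact is not needed for the identity itself.
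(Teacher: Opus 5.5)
Your proposal is correct and follows the paper's proof essentially step for step. Both arguments use the type-(1) edges of $G^\mix$ to get per-replica disjointness of component supports, commute the even Majorana strings to regroup $W_b(\tau)$ by component while preserving relative order, factorize the normalized trace, and redistribute the $\sigma,\lambda$ prefactors.
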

\begin{proof}
    Fix distinct connected components $C_u,C_v$ and a replica label $b\in\{1,2\}$. We have that
    \begin{align}
        \left(\bigcup_{j\in C_u:\mu(j)=b}\supp(X_j(\tau))\right) \cap \left(\bigcup_{j\in C_v:\mu(j)=b}\supp(X_j(\tau))\right) = \emptyset,
    \end{align}
    otherwise an edge would connect $C_u, C_v$. Since every $X_j(\tau)$ is an even Majorana string, operators with disjoint supports commute; hence,
    \begin{align}
        W_b(\tau)=\prod_{j:\mu(j)=b} X_j(\tau) = \prod_{u=1}^r \prod_{j\in C_u:\mu(j)=b} X_j(\tau),
    \end{align}
    where the relative order of indices is preserved.
    Because the supports of the component-blocks are pairwise disjoint within each replica, normalized trace factorizes over the decomposition; factorizing the prefactor $\sigma^{|\eta^{-1}(0)|}\lambda^{|\eta^{-1}(1)|}$ thus gives the claimed result
    \begin{align}
        &\sigma^{|\eta^{-1}(0)|} \lambda^{|\eta^{-1}(1)|} \tr(W_1(\tau)) \tr(W_2(\tau)) \nonumber\\
        &= \prod_{u=1}^r \sigma^{|\{j\in C_u : \eta(j)=0\}|} \lambda^{|\{j\in C_u : \eta(j)=1\}|} \tr(\prod_{j\in C_u:\mu(j)=1} X_j(\tau)) \tr(\prod_{j\in C_u:\mu(j)=2} X_j(\tau)).
    \end{align}
\end{proof}

\begin{lemma}[Connected polymer representation]\label{lem:polymer2}
    \begin{align}
        \EE{\wh Z(\beta) \wh Z(\gamma)} &= \sum_{r=0}^\infty \frac{1}{r!} \sum_{\substack{P_1, \dots, P_r \in \cP^\mix\\ P_i \sim P_j}} \prod_{u=1}^r z_{P_u}(\beta, \gamma) \label{eq:zz},
    \end{align}
    where $z_P$ is given by \eqref{eq:z2}.
    Note that this is the same as \eqref{eq:zzsep}, except the sum is over $\cP^\mix$ instead of $\cP^\sep$.
\end{lemma}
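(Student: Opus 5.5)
The plan is to follow the proof of Lemma~\ref{lem:polymer} essentially verbatim, replacing Lemma~\ref{lem:decomp} by Lemma~\ref{lem:decomp2} and the graph $G(\tau)$ by $G^\mix(\tau)$. First we expand
\begin{align}
    \EE{\wh Z(\beta)\wh Z(\gamma)} = \sum_{m\geq 0}\sum_{\ell_1+\ell_2=m}\frac{(-\beta)^{\ell_1}(-\gamma)^{\ell_2}}{\ell_1!\,\ell_2!}\EE{\tr(H^{\ell_1})\tr(H^{\ell_2})}.
\end{align}
We then insert \eqref{eq:2rep-moment}. Since $\binom{m}{\ell_1}^{-1}/(\ell_1!\ell_2!) = 1/m!$, and summing over $(\ell_1,\ell_2)$ with $\tau\in T_m(\ell_1,\ell_2)$ is the same as summing over all $\tau\in T_m := \bigsqcup_{\ell_1+\ell_2=m}T_m(\ell_1,\ell_2)$, we get
\begin{align}
    \EE{\wh Z(\beta)\wh Z(\gamma)} = \sum_{m\geq0}\frac{1}{m!}\sum_{\tau\in T_m} (-\beta)^{|\mu^{-1}(1)|}(-\gamma)^{|\mu^{-1}(2)|}\sigma^{|\eta^{-1}(0)|}\lambda^{|\eta^{-1}(1)|}\lr{\prod_{j:\eta(j)=1}c_{a_j}}\tr(W_1(\tau))\tr(W_2(\tau)).
\end{align}
Each summand has the same form as $m!\,z_\tau(\beta,\gamma)$ from \eqref{eq:z2}. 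So it suffices to show, for each $m$, the coefficient identity analogous to \eqref{eq:trcoeff}: $\frac{1}{m!}\sum_{\tau\in T_m}(\cdots) = \sum_r \frac{1}{r!}\sum_{P_i\in\cP^\mix,\,P_i\sim P_j,\,\sum|P_u|=m}\prod_u z_{P_u}$.

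Next we decompose each $\tau$ into the connected components $C_1,\dots,C_r$ of $G^\mix(\tau)$. By Lemma~\ref{lem:decomp2}, the product of traces and the $\sigma,\lambda$ prefactors factorize over components. The remaining factors $c_{a_j}$ and the powers of $-\beta,-\gamma$ (counted by $\mu$) factorize trivially. Restricting $(\eta,\pi,\mu,\vec I,\vec a)$ along the increasing bijections $\rho_u:[m_u]\to C_u$ gives tuples $P_u\in T^\mix_{m_u}$. The pairing $\pi$ restricts properly because no Wick pair is split across components. A same-replica pair has $I_u\cap I_v=I_u\neq\emptyset$, which is an edge of type 1. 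A cross-replica pair is an edge of type 2 by definition. This is exactly why the mixed edges were introduced, and it is the one point where we must check that we do not use $G^\sep$.

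The $P_u$ are pairwise compatible. Two distinct components have no type-1 edge between them, so their supports are disjoint within each replica, i.e.\ $\supp_b(P_u)\cap\supp_b(P_v)=\emptyset$ for $b=1,2$. Conversely, given compatible mixed polymers and an ordered set partition $(C_1,\dots,C_r)$ of $[m]$ with $|C_u|=m_u$, we reassemble a $\tau\in T_m$. The replica labels, slot types, sets and pairings are transported via $\rho_u$. Compatibility ensures there is no type-1 edge across blocks. There is no type-2 edge across blocks either, since pairs stay within a block. Hence the components of $G^\mix(\tau)$ are exactly the $C_u$. Simultaneous relabelings of the $r$ blocks give $r!$ preimages.

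Summing over set partitions gives the multinomial $\binom{m}{m_1,\dots,m_r}$, which converts $1/m!$ into $\prod_u 1/m_u!$. This produces the $z_{P_u}$ with their $1/|P_u|!$ normalization, and summing over $m$ yields \eqref{eq:zz}.

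The main obstacle is the bookkeeping of the replica labels $\mu$. We must confirm that the $\binom{m}{\ell_1}^{-1}$ in \eqref{eq:2rep-moment} cancels cleanly into $1/m!$, and that $\mu$, rather than $\eta$ or the ordering, is what fixes the exponents of $\beta$ and $\gamma$ in each polymer. Once this is checked, the bijection is identical to that of Lemma~\ref{lem:polymer}. Convergence is not an issue because the identity holds coefficientwise as formal power series in $(\beta,\gamma,\lambda)$. By Lemma~\ref{lem:annealcritsat}-type bounds adapted to $\cP^\mix$, both sides are also entire, or at least absolutely convergent in the region used later.
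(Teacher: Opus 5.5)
Your proposal is correct and follows essentially the same route as the paper's proof: coefficient matching via \eqref{eq:2rep-moment} using $\binom{m}{\ell_1}^{-1}/(\ell_1!\,\ell_2!)=1/m!$, factorization over the components of $G^\mix(\tau)$ by Lemma~\ref{lem:decomp2}, and the restriction/reassembly bijection (with its $r!$ orderings and the multinomial count), exactly as in Lemma~\ref{lem:polymer}. Your explicit check that no Wick pair straddles two components (same-replica pairs give type-1 edges, cross-replica pairs give type-2 edges) is a step the paper leaves implicit and is worth keeping; your convergence remark needs no Koteck\'y--Preiss-type input, because in this finite-dimensional setting the termwise absolute sum is finite for all $\beta,\gamma$.
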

\begin{proof}
    Comparing the coefficient of $\beta^{\ell_1}\gamma^{\ell_2}$ in \eqref{eq:zz} with the Taylor expansion of $\EE{\wh Z(\beta)\wh Z(\gamma)}$, it suffices to show that
    \begin{align}\label{eq:zzcoeff}
        &\frac{\EE{\tr(H^{\ell_1})\tr(H^{\ell_2})}}{\ell_1!\ell_2!}\nonumber\\
        &=
        \sum_{r=0}^\infty \frac{1}{r!}
        \sum_{\substack{P_1,\dots,P_r\in\cP^\mix\\ P_i\sim P_j\\
        \sum_{u=1}^r |\mu_u^{-1}(1)|=\ell_1,\,
        \sum_{u=1}^r |\mu_u^{-1}(2)|=\ell_2}}
        \prod_{u=1}^r
        \frac{\sigma^{|\eta_u^{-1}(0)|}\lambda^{|\eta_u^{-1}(1)|}}{|P_u|!}
        \left(\prod_{j:\eta_u(j)=1} c_{a_j^{(u)}}\right)
        \tr(W_1(P_u))\tr(W_2(P_u)),
    \end{align}
    where we decomposed polymers as $P_u = (\eta_u,\pi_u,\mu_u,\vec I^{(u)},\vec a^{(u)})$. By \eqref{eq:2rep-moment}, with $m=\ell_1+\ell_2$, we can rewrite the left-hand side as
    \begin{align}
        \frac{\EE{\tr(H^{\ell_1})\tr(H^{\ell_2})}}{\ell_1!\ell_2!}
        =
        \frac{1}{m!}
        \sum_{\tau\in T_m(\ell_1,\ell_2)}
        \sigma^{|\eta^{-1}(0)|}\lambda^{|\eta^{-1}(1)|}
        \left(\prod_{j:\eta(j)=1} c_{a_j}\right)
        \tr(W_1(\tau))\tr(W_2(\tau)).
    \end{align}
    We now apply Lemma~\ref{lem:decomp2} similarly to the proof of Lemma~\ref{lem:polymer}. Let $C_1,\dots,C_r$ be the connected components of $G^\mix(\tau)$, let $m_u=|C_u|$, and let $\rho_u:[m_u]\to C_u$ be the unique increasing bijection. Then
    \begin{align}\label{eq:hh}
        \frac{\EE{\tr(H^{\ell_1})\tr(H^{\ell_2})}}{\ell_1!\ell_2!} &= \frac{1}{m!} \sum_{\tau\in T_m(\ell_1,\ell_2)} \prod_{u=1}^r \sigma^{|\{j\in C_u:\eta(j)=0\}|} \lambda^{|\{j\in C_u:\eta(j)=1\}|} \left(\prod_{j\in C_u:\eta(j)=1} c_{a_j}\right)\nonumber\\
        &\quad \times 
            \tr(\prod_{a\in[m_u]:\,\mu(\rho_u(a))=1} X_{\rho_u(a)}(\tau))
            \tr(\prod_{a\in[m_u]:\,\mu(\rho_u(a))=2} X_{\rho_u(a)}(\tau)).
    \end{align}
    As before, each connected component determines a polymer $P_u = (\eta_u,\pi_u,\mu_u,\vec I^{(u)},\vec a^{(u)}) \in \cP^\mix$ defined by restricting to the connected component. Explicitly, we set
    \begin{align}
        \eta_u(a)=\eta(\rho_u(a)),
        \qquad
        \mu_u(a)=\mu(\rho_u(a)).
    \end{align}
    If $\eta_u(a)=0$, we set $I_a^{(u)}=I_{\rho_u(a)}$, while if $\eta_u(a)=1$, we set $a_a^{(u)}=a_{\rho_u(a)}$. Finally,
    \begin{align}
        \pi_u
        =
        \left\{
            \{\rho_u^{-1}(b),\rho_u^{-1}(c)\}
            \,:\,
            \{b,c\}\in\pi,\ b,c\in C_u
        \right\}.
    \end{align}
    Since $C_u$ is connected in $G^\mix(\tau)$, we have $P_u\in T_{m_u}^\mix\subseteq \cP^\mix$. Moreover, if $u\neq v$, then $P_u\sim P_v$, and thus an edge would join $C_u$ and $C_v$.

    To rewrite the sum over $T_m(\ell_1,\ell_2)$, we show that this construction is invertible up to the ordering of the $r$-tuple. Let $m_1,\dots,m_r\geq 1$ satisfy $\sum_{u=1}^r m_u=m$, let $C_1,\dots,C_r\subseteq [m]$ be disjoint with
    \begin{align}
        \bigsqcup_{u=1}^r C_u=[m],
        \qquad
        |C_u|=m_u,
    \end{align}
    and let $(P_1,\dots,P_r)$ satisfy $P_u\in T_{m_u}^\mix$ and $P_u\sim P_v$ for $u\neq v$. Write $P_u=(\eta_u,\pi_u,\mu_u,\vec I^{(u)},\vec a^{(u)})$ and let $\rho_u:[m_u]\to C_u$ be the unique increasing bijection. We construct $\tau=(\eta,\pi,\mu,\vec I,\vec a)\in T_m$ by setting $\eta(\rho_u(a))=\eta_u(a)$ and $\mu(\rho_u(a))=\mu_u(a)$; if $\eta_u(a)=0$, we set $I_{\rho_u(a)}=I_a^{(u)}$, while if $\eta_u(a)=1$, we set $a_{\rho_u(a)}=a_a^{(u)}$. 
    Finally, we define
    \begin{align}\label{eq:choices}
        \pi
        =
        \bigsqcup_{u=1}^r
        \left\{
            \{\rho_u(a),\rho_u(b)\}
            \,:\,
            \{a,b\}\in \pi_u
        \right\}.
    \end{align}
    By construction, whenever $\{x,y\}\in\pi$, both indices come from the same polymer $P_u$, and hence $I_x=I_y$. Since
    \begin{align}
        |\mu^{-1}(1)|=\sum_{u=1}^r |(\mu_u)^{-1}(1)|,
        \qquad
        |\mu^{-1}(2)|=\sum_{u=1}^r |(\mu_u)^{-1}(2)|,
    \end{align}
    we have that $\tau\in T_m(\ell_1,\ell_2)$.

    We next check that the connected components of $G^\mix(\tau)$ are exactly $C_1,\dots,C_r$. First, each $C_u$ is connected. Indeed, since $P_u\in T_{m_u}^\mix$, its mixed graph is connected; every edge in $G^\mix(P_u)$ induces an edge in $G^\mix(\tau)$ after applying $\rho_u$, because:
    \begin{itemize}
        \item if $\mu_u(a)=\mu_u(b)$ and $\supp(X_a(P_u))\cap \supp(X_b(P_u))\neq\emptyset$, then
        \begin{align}
            \mu(\rho_u(a))=\mu(\rho_u(b)),
            \qquad
            \supp(X_{\rho_u(a)}(\tau))\cap \supp(X_{\rho_u(b)}(\tau))\neq\emptyset,
        \end{align}
        so $\rho_u(a)$ and $\rho_u(b)$ are adjacent by the first edge condition;
        \item if $\{a,b\}\in \pi_u$, $\eta_u(a)=\eta_u(b)=0$, and $\mu_u(a)\neq\mu_u(b)$, then
        \begin{align}
            \{\rho_u(a),\rho_u(b)\}\in\pi,
            \qquad
            \eta(\rho_u(a))=\eta(\rho_u(b))=0,
            \qquad
            \mu(\rho_u(a))\neq\mu(\rho_u(b)),
        \end{align}
        so $\rho_u(a)$ and $\rho_u(b)$ are adjacent by the second edge condition.
    \end{itemize}
    Hence $C_u$ is connected in $G^\mix(\tau)$.

    Conversely, there are no edges between distinct $C_u$ and $C_v$. By \eqref{eq:choices}, no pair in $\pi$ joins indices from different components, so no edge of the second type can connect $C_u$ and $C_v$. For the first type, fix $b\in\{1,2\}$. Since $P_u\sim P_v$, we have
    \begin{align}
        \supp_b(P_u)\cap \supp_b(P_v)=\emptyset.
    \end{align}
    Therefore no $x\in C_u$ and $y\in C_v$ with $\mu(x)=\mu(y)=b$ can satisfy
    \begin{align}
        \supp(X_x(\tau))\cap \supp(X_y(\tau))\neq\emptyset.
    \end{align}
    So no edge of the first type connects distinct components. Thus $C_1,\dots,C_r$ are exactly the connected components of $G^\mix(\tau)$.

    This bijection, up to the ordering of the $r$-tuple, rewrites \eqref{eq:hh} as
    \begin{align}
        \frac{\EE{\tr(H^{\ell_1})\tr(H^{\ell_2})}}{\ell_1!\ell_2!}
        &=
        \frac{1}{m!}
        \sum_{r=0}^\infty \frac{1}{r!}
        \sum_{\substack{m_1+\cdots+m_r=m\\ m_u\geq 1}}
        \sum_{\substack{C_1,\dots,C_r\subseteq [m]\\ \bigsqcup_u C_u=[m]\\ |C_u|=m_u}}
        \sum_{\substack{P_1,\dots,P_r\in\cP^\mix\\ P_u\sim P_v,\ u\neq v\\ |P_u|=m_u\\
        \sum_u |(\mu_u)^{-1}(1)|=\ell_1,\,
        \sum_u |(\mu_u)^{-1}(2)|=\ell_2}}\nonumber\\
        &\qquad \prod_{u=1}^r \sigma^{|\eta_u^{-1}(0)|}\lambda^{|\eta_u^{-1}(1)|} \left(\prod_{j:\eta_u(j)=1} c_{a_j^{(u)}}\right)
        \tr(W_1(P_u))\tr(W_2(P_u)).
    \end{align}
    Replacing the sum over $C_1,\dots,C_r$ by the counting factor $\binom{m}{m_1,\dots,m_r}$ gives
    \begin{align}
        \frac{\EE{\tr(H^{\ell_1})\tr(H^{\ell_2})}}{\ell_1!\ell_2!}
        &=
        \sum_{r=0}^\infty \frac{1}{r!}
        \sum_{\substack{m_1+\cdots+m_r=m\\ m_u\geq 1}}
        \sum_{\substack{P_1,\dots,P_r\in\cP^\mix\\ P_u\sim P_v,\ u\neq v\\ |P_u|=m_u\\
        \sum_u |(\mu_u)^{-1}(1)|=\ell_1,\,
        \sum_u |(\mu_u)^{-1}(2)|=\ell_2}}
        \prod_{u=1}^r
        \frac{\sigma^{|\eta_u^{-1}(0)|}\lambda^{|\eta_u^{-1}(1)|}}{m_u!}\nonumber\\
        &\quad \times
        \left(\prod_{j:\eta_u(j)=1} c_{a_j^{(u)}}\right)
        \tr(W_1(P_u))\tr(W_2(P_u)).
    \end{align}
    Combining the sums over $m_1+\cdots+m_r=m$ and over the polymers $P_1,\dots,P_r$ gives \eqref{eq:zzcoeff}, proving \eqref{eq:zz}.
\end{proof}

\subsection{Koteck\'y-Preiss}
We now apply Koteck\'y-Preiss (Theorem~\ref{thm:kp}) using the quantity
\begin{align}
    a(P) = \frac{|\supp(P)|}{4\alpha}
\end{align}
for some constant $\alpha$ satisfying $\alpha \geq \max\{q,L\}$. We will apply it separately for the disconnected and connected contributions. To ensure that we can show a KP condition $\leq a(P)$ (Theorem~\ref{thm:kp}) when adding both contributions together, we will use a stronger bound on the disconnected piece that keeps it $\leq \frac{1}{2}a(P)$.

\begin{lemma}[Disconnected condition]\label{lem:disccrit}
    For $t = \max\{|\beta|, |\gamma|\}$, let
    \begin{align}
        \Xi^\sep(t) = \sup_{s\in\{1,2\}, \, x \in [n]} \sum_{\substack{P \in \cP^\sep\\ x \in \supp_s(P)}} |z_P(\beta, \gamma)| e^{a(P)}
    \end{align}
    If
    \begin{align}\label{eq:discond}
        \frac{\alpha e^{q/4\alpha}}{2^{q+1}}t^2 + \alpha e^{L/4\alpha}\Gamma|\lambda|t \leq \frac{1}{4e^{1/4}}
    \end{align}
    then $\Xi^\sep(t) \leq \frac{1}{8\alpha}$.
\end{lemma}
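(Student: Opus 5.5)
The plan is to reduce the statement to the annealed computation of Lemma~\ref{lem:annealcritsat}, with $\alpha$ replaced by $2\alpha$ and an extra factor accounting for the two replicas. The key observation is that a polymer $P\in\cP^\sep$ has a connected separated graph $G^\sep(P)$, and edges of $G^\sep$ only join slots with the same replica label. Hence all slots of $P$ carry a single label $b\in\{1,2\}$. In particular, $P$ contains no mixed Wick pair, $\supp(P)=\{b\}\times\supp_b(P)$, and the other trace factor is $\tr(W_{3-b}(P))=\tr(\mathbb{1})=1$. So $P$ is in bijection with an annealed polymer $s\in\cP$, together with the label $b$.

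Under this bijection, $|\supp(P)|=|\supp(s)|$, so $e^{a(P)}=e^{|\supp(s)|/4\alpha}$. The weight satisfies $|z_P(\beta,\gamma)|\le |z_s(t)|$ with $t=\max\{|\beta|,|\gamma|\}$, since $|(-\beta)^{\ell_1}(-\gamma)^{\ell_2}|\le t^{\ell_1+\ell_2}$ and $\ell_{3-b}=0$. The combinatorial prefactor $1/m!$ and the factors $\sigma,\lambda,c_{a_j}$ coincide with those in $z_s$ from Lemma~\ref{lem:polymer}.

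Now fix $s\in\{1,2\}$ and $x\in[n]$. The condition $x\in\supp_s(P)$ forces the label $b=s$, so
\begin{align}
\sum_{\substack{P \in \cP^\sep\\ x \in \supp_s(P)}} |z_P(\beta, \gamma)| e^{a(P)} \le \sum_{\substack{u\in\cP\\ x\in\supp(u)}} |z_u(t)|\, e^{|\supp(u)|/4\alpha}.
\end{align}
The right-hand side is exactly the quantity $\Xi(t)$ of Lemma~\ref{lem:annealcrit}, with $\alpha$ replaced by $\alpha'=2\alpha$.

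I then rerun the counting of Lemma~\ref{lem:annealcritsat} with $\alpha'$. This uses Lemma~\ref{lem:count}(1), the bounds $\binom{n-1}{q-1}\sigma^2\le1$ and $qk+Lr\le\alpha(k+r)$, and the identity $\binom{2k+r}{r}(2k-1)!!/(2k+r)!=1/(2^kk!r!)$. It yields
\begin{align}
\Xi^\sep(t)\le \frac{1}{\alpha}\,T\!\left(\frac{\alpha e^{q/4\alpha}}{2^{q+1}}t^2+\alpha e^{L/4\alpha}\Gamma|\lambda|t\right).
\end{align}
Here the exponent $e^{(qk+Lr)/4\alpha}$ factorizes into $(e^{q/4\alpha})^k(e^{L/4\alpha})^r$, and the factors $\alpha^{k+r-1}$ from $(kq+rL)^{k+r-1}$ are kept with $\alpha$ rather than $2\alpha$, which gives the prefactor $1/\alpha$. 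Under \eqref{eq:discond} the argument of $T$ is at most $1/(4e^{1/4})$. Fact~\ref{fac:tfunc} then gives $T\le 1/4$, so $\Xi^\sep(t)\le 1/(4\alpha)$.

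This is still a factor of $2$ short of the claimed $1/(8\alpha)$, and recovering it is the main obstacle. The fix is to carry the factor $(kq+rL)^{k+r-1}\le \alpha^{k+r-1}(k+r)^{k+r-1}$ more carefully, so that the overall prefactor becomes $1/(2\alpha)$ rather than $1/\alpha$. If the bookkeeping of $\alpha$ versus $2\alpha$ does not yield this, I would instead strengthen the use of $T$ through the fixed-point relation $T(y)e^{-T(y)}=y$. Failing both, the fallback is to accept the constant $1/(4\alpha)$ and adjust the subsequent KP budget. The intended constant, however, appears to come from the prefactor $1/(2\alpha)$ multiplied by $T\le1/4$.
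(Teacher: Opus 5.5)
Your reduction follows the paper's route. A polymer in $\cP^{\sep}$ carries a single replica label, so the sum defining $\Xi^{\sep}(t)$ is the annealed sum of Lemma~\ref{lem:annealcritsat} with weight $e^{|\supp|/4\alpha}$. Your bound $\Xi^{\sep}(t)\le\frac{1}{\alpha}T(y)\le\frac{1}{4\alpha}$ is correct, where $y$ is the left-hand side of \eqref{eq:discond}.

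The missing factor of $2$ cannot be recovered by any bookkeeping, because the bound $\frac{1}{8\alpha}$ is false under \eqref{eq:discond}. Take $\lambda=0$, $\gamma=\overline\beta$ and $t=|\beta|$. Keep only the polymers with $m=2$, namely $\eta\equiv 0$, $\pi=\{\{1,2\}\}$, $\mu\equiv s$ and $I_1=I_2=K\ni x$. Since $\psi_K^2=2^{-q}$ and $|\supp(P)|=q$, these polymers alone contribute
\[
\binom{n-1}{q-1}\sigma^2\,\frac{e^{q/4\alpha}t^2}{2^{q+1}}=\prod_{j=1}^{q-1}\Bigl(1-\frac{j}{n}\Bigr)\cdot\frac{e^{q/4\alpha}t^2}{2^{q+1}}.
\]
Choose $t$ so that \eqref{eq:discond} holds with equality. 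This contribution is then $(1-O(1/n))\frac{1}{4e^{1/4}\alpha}\approx\frac{0.195}{\alpha}>\frac{1}{8\alpha}$ for large $n$, and every other polymer only adds a nonnegative term. So your first two fixes must fail. The $k=1$, $r=0$ term of your estimate is already asymptotically exact, and $T(\frac{1}{4e^{1/4}})=\frac14$ holds with equality, so the fixed-point relation has nothing left to give.

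The paper's one-paragraph proof makes exactly the slip you suspected. It says that using $T\le\frac14$ instead of $T\le\frac12$ improves the bound from $\frac{1}{4\alpha}$ to $\frac{1}{8\alpha}$. But the prefactor $\frac{1}{\alpha}$ comes from $(kq+rL)^{k+r-1}\le(\alpha(k+r))^{k+r-1}$, not from the exponential weight. The correct numbers are therefore $\frac{1}{2\alpha}$ and $\frac{1}{4\alpha}$, as you found. A prefactor $\frac{1}{2\alpha}$ is only available by running the count with $2\alpha$, which doubles the argument of $T$.

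Your fallback is the right instinct, but keeping the hypothesis and accepting $\frac{1}{4\alpha}$ does not work downstream. Corollary~\ref{cor:kpdisc} would then give only $\le a(P)$, leaving no room for $\cM$ in Lemma~\ref{lem:kpconn}. The clean repair is to strengthen the hypothesis to $y\le\frac{1}{8e^{1/8}}$. The relation $T(te^{-t})=t$ from the proof of Fact~\ref{fac:tfunc}, at $t=\frac18$, then gives $T(y)\le\frac18$, and your argument yields $\Xi^{\sep}(t)\le\frac{1}{8\alpha}$. This costs nothing in the application: in the proof of Theorem~\ref{thm:sa}, \eqref{eq:concond1} bounds $y$ by $\frac{1}{qe}\le\frac{1}{4e}\approx0.092<\frac{1}{8e^{1/8}}\approx0.110$. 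So Corollary~\ref{cor:kpdisc}, Lemma~\ref{lem:kpconn} and Theorem~\ref{thm:sa} survive unchanged.
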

\begin{proof}
    The proof follows Lemmas~\ref{lem:annealcrit} and~\ref{lem:annealcritsat} exactly since the individual replicas do not interact. There are only two differences: first, we replace the exponential weight $e^{|\supp(P)|/2\alpha}$ by $e^{|\supp(P)|/4\alpha}$; second, we use an upper bound of $1/4e^{1/4}$ instead of $1/2\sqrt e$. By Fact~\ref{fac:tfunc}, this second modification gives an upper bound of $\Xi^\sep(t) \leq 1/8\alpha$ instead of $\Xi^\sep(t) \leq 1/4\alpha$.
\end{proof}

\begin{corollary}[Disconnected Koteck\'y-Preiss]\label{cor:kpdisc}
If \eqref{eq:discond} holds, then for every $P \in \cP^\mix$,
\begin{align}
    \sum_{\substack{Q \in \cP^\sep\\ Q \nsim P}} |z_Q(\beta,\gamma)|e^{a(Q)} \leq \frac{1}{2} a(P).
\end{align}
\end{corollary}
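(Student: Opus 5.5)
The argument copies the proof of Lemma~\ref{lem:annealcrit}, now on the doubled site set $\{1,2\}\times[n]$, and feeds in Lemma~\ref{lem:disccrit}. Fix $P\in\cP^\mix$. Take $Q\in\cP^\sep$ with $Q\nsim P$. By the definition of compatibility, $\supp(Q)\cap\supp(P)\neq\emptyset$ as subsets of $\{1,2\}\times[n]$. So there is a pair $(s,x)\in\supp(P)$ with $x\in\supp_s(Q)$, and a union bound gives
\begin{align}
    1\{Q\nsim P\} \leq \sum_{(s,x)\in\supp(P)} 1\{x\in\supp_s(Q)\}.
\end{align}

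Multiply this by $|z_Q(\beta,\gamma)|e^{a(Q)}$ and sum over $Q\in\cP^\sep$. Then exchange the order of summation; this is allowed because all terms are nonnegative. The result is
\begin{align}
    \sum_{\substack{Q\in\cP^\sep\\ Q\nsim P}} |z_Q(\beta,\gamma)|e^{a(Q)} \leq \sum_{(s,x)\in\supp(P)} \sum_{\substack{Q\in\cP^\sep\\ x\in\supp_s(Q)}} |z_Q(\beta,\gamma)|e^{a(Q)} \leq |\supp(P)|\,\Xi^\sep(t).
\end{align}
Under \eqref{eq:discond}, Lemma~\ref{lem:disccrit} gives $\Xi^\sep(t)\leq 1/8\alpha$. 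Hence the sum is at most $|\supp(P)|/8\alpha = \tfrac12\cdot|\supp(P)|/4\alpha = \tfrac12 a(P)$, which is the claim.

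The one point that needs care is which support is used. Both $a(P)$ and the compatibility relation must refer to the replica-labelled support $\supp(P)\subseteq\{1,2\}\times[n]$, with $|\supp(P)|=|\supp_1(P)|+|\supp_2(P)|$. Only then does each labelled site $(s,x)$ contribute exactly one term $\Xi^\sep$, and the count comes out as $|\supp(P)|$ and not twice that. Nothing else in the argument is specific to $P$ being a mixed polymer, since the bound only uses $\supp(P)$.
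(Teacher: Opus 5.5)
Your argument is the paper's proof: the same union bound over replica-labelled sites $(s,x)\in\supp(P)\subseteq\{1,2\}\times[n]$ gives $|\supp(P)|\,\Xi^\sep(t)$, and then the bound $\Xi^\sep(t)\le 1/8\alpha$ from Lemma~\ref{lem:disccrit} is applied. Your remark about using the labelled support, so that the count is $|\supp(P)|$ and not twice that, matches the paper's bookkeeping.

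There is one problem, which you inherit from the paper rather than introduce. Lemma~\ref{lem:disccrit} does not deliver $1/8\alpha$ under \eqref{eq:discond}. Rerunning Lemmas~\ref{lem:annealcrit} and~\ref{lem:annealcritsat} with weight $e^{|\supp|/4\alpha}$ gives $\Xi^\sep(t)\le \frac{1}{\alpha}T(y)$, where $y$ is the left side of \eqref{eq:discond}. The bound $T(y)\le\frac14$ then yields only $\frac{1}{4\alpha}$.

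This is not slack in the estimate. A single-pair separated polymer $(K,K)$ in one replica has $|z_Q|e^{a(Q)}=\sigma^2|\beta|^2e^{q/4\alpha}/2^{q+1}$. Take equality in \eqref{eq:discond} with $\lambda=0$ and $|\beta|=|\gamma|=t$. Then, for a fixed $P$ of bounded support, the single-pair polymers meeting $\supp(P)$ already sum to $(1-o(1))\,|\supp(P)|/(4e^{1/4}\alpha)$. This exceeds $\tfrac12 a(P)=|\supp(P)|/(8\alpha)$, since $4e^{1/4}<8$.

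So under \eqref{eq:discond} alone, both your chain and the paper's only give $\le a(P)$. That is too weak for Lemma~\ref{lem:kpconn}, which needs the separated part to be strictly below $a(P)$. The fix is to strengthen the hypothesis so that $T(y)\le\frac18$, for example $y\le e^{-1/8}/8$, using $T(te^{-t})=t$. This stronger condition holds wherever the corollary is actually used: \eqref{eq:concond1} with $q\ge 4$ gives $y\le \frac{1}{qe}\le\frac{1}{4e}<e^{-1/8}/8$. Hence Theorem~\ref{thm:sa} is unaffected.
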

\begin{proof}
    If $Q \nsim P$, then $\supp(Q) \cap \supp(P) \neq \emptyset$, so
    \begin{align}
        \sum_{\substack{Q \in \cP^\sep\\ Q \nsim P}} |z_Q(\beta,\gamma)|e^{a(Q)} \leq \sum_{(s,x) \in \supp(P)} \sum_{\substack{Q \in \cP^\sep\\ x \in \supp_s(Q)}} |z_Q(\beta,\gamma)| e^{a(Q)} \leq |\supp(P)| \Xi^\sep(t) \leq \frac{1}{2} a(P)
    \end{align}
    by Lemma~\ref{lem:disccrit}.
\end{proof}

The nontrivial terms are on the remaining polymers. For $|\beta|, |\gamma| \leq R$, define
\begin{align}
    \cM(\beta,\gamma) = \sum_{P \in \cP^\mix\setminus \cP^\sep} |z_P(\beta, \gamma)| e^{a(P)} = \cM^\dist(\beta,\gamma) + \cM^\rep(\beta,\gamma).
\end{align}
A polymer $P\in \cP^\mix\setminus \cP^\sep$ belongs to $\cM^\dist$ if it contains at least one mixed Wick pair whose support appears nowhere else in the polymer, neither as another Gaussian support nor as one of the perturbation supports $A_{a_j}$. We can write this formally as follows. For a polymer $P=(\eta,\pi,\mu,\vec I,\vec a)$, let
\begin{align}
    \Pi^\mix(P) = \{\{u,v\}\in \pi : \mu(u)\neq \mu(v)\}
\end{align}
denote the set of mixed Wick pairs, and for $e=\{u,v\}\in \Pi^\mix(P)$ define its support
\begin{align}
    K_e = I_u = I_v \in \binom{[n]}{q}.
\end{align}
We then define $\cP^\dist$ as
\begin{align}\label{eq:pdist}
    \Big\{
        P \in \cP^\mix \setminus \cP^\sep
        \,:\,
        \exists\, e\in \Pi^\mix(P) \text{ s.t. }
        K_e \neq K_{e'} \ \forall\, e'\in \pi\setminus\{e\},
        \;
        K_e \neq A_{a_j}\ \forall\, j \text{ with }\eta(j)=1
    \Big\},
\end{align}
and $\cP^\rep$ contains the remaining polymers in $\cP^\mix \setminus \cP^\sep$.

\begin{lemma}[Distinct set bound]\label{lem:dist}
    For $t = \max\{|\beta|, |\gamma|\}$ and $\alpha \geq \max\{q,L\}$, if
    \begin{align}\label{eq:concond1}
        \frac{e^{1/2}\alpha q t^2}{2^q} + e^{1/4}\Gamma \alpha q |\lambda| t \leq \frac{1}{e}
    \end{align}
    then
    \begin{align}
        \cM^\dist(\beta, \gamma) \leq \frac{(q-1)!}{\alpha^2q} n^{1-q/2}.
    \end{align}
\end{lemma}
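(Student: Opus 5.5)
For every polymer $P\in\cP^\dist$ I would fix one distinguished mixed pair $e=\{u,v\}$ whose support $K_e$ is used nowhere else. Overcounting by summing over the choice of $e$ gives an upper bound. By cyclicity of the trace I would move $\psi_{K_e}$ to the front in each replica, so that $\tr(W_b(P))=\pm\tr(\psi_{K_e}O_b)$. Here $O_b$ is the ordered product of the other operators in replica $b$, and the sign comes from reordering; only absolute values matter. Because Majorana strings are orthogonal, $\tr(\psi_{K_e}O_b)$ can be nonzero only when the total Majorana content of $O_b$ is exactly $K_e$, with every site outside $K_e$ appearing an even number of times. This constraint lets me avoid paying $\binom{n}{q}$ for the choice of $K_e$. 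Instead I would enumerate the remaining structure and let it determine $K_e$.

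Next I would delete the vertices $u,v$ from $G^\mix(P)$. Call the remaining Gaussian vertices $G_b$ and perturbation vertices $R_b$ in replica $b$, after the other pairs are collapsed. Consider the connected components of the replica-$b$ intersection graph restricted to $G_b\cup R_b$, together with any further mixed pairs. Originally each component touched $K_e$, since $W_b$ was connected through $u$ or $v$. Components of the same replica have disjoint supports and commute, so $\tr(\psi_{K_e}O_b)$ factorizes over them. Each component must therefore contribute an even nonempty subset of $K_e$, which gives at most $q/2$ components per replica. Mixed pairs other than $e$ may join components across replicas; I would merge those into joint components, which only reduces the count. For each component I would count its configurations with Lemma~\ref{lem:count}(1), rooted at a single free vertex $x\in[n]$. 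This costs a factor $n$ per component, combined with $\binom{n-1}{q-1}^{k_j}(k_jq+r_jL)^{k_j+r_j-1}$. Once the components are fixed, $K_e$ is determined as the set of sites of odd total multiplicity. The vertices $u,v$ contribute $\sigma^2\cdot 2^{-q}$ and no counting factor.

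Summing everything, the $\binom{2k+r}{r}(2k-1)!!/(2k+r)!$ combinatorics and the weights $e^{a(P)}$ should reproduce the series of Fact~\ref{fac:tfunc}, as in Lemma~\ref{lem:annealcritsat}. Now, however, there are two replicas, $2^{-q}$ per pair, and an extra factor $q$ from the positions of $K_e$ relative to each component's root. This is the source of the $\alpha q$ and $e^{1/2},e^{1/4}$ in \eqref{eq:concond1}. Each component then contributes $n\cdot\alpha^{-1}T(y)$ with $y\le 1/e$, so $T(y)\le 1$. The roots give a factor $n^{c}$ with $c\le q/2$, which together with $\sigma^2=(q-1)!/n^{q-1}$ yields $n^{q/2}\sigma^2=(q-1)!\,n^{1-q/2}$. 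The prefactor $1/(\alpha^2q)$ should come from the tree normalizations, which contribute $1/\alpha$ per replica, and from the $1/q$ left after absorbing the position factors.

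I expect the main obstacle to be rearranging the sum rigorously. The issue is ensuring that the $1/m!$ of $z_P$, the slot placements, and the number of components all factor so that a product of at most $q/2$ independent $T$-series appears without extra combinatorial blow-up. I would first try a generating-function argument over the set partitions of $K_e$ into even blocks, using $2^q$ as a crude bound on the number of such partitions that gets absorbed into the $2^{-q}$ weight. If that is too lossy, I would instead bound the general component count $c$ using $n^{c}\le n^{q/2}$ and sum a geometric series in $1/n$.
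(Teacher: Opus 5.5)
Your skeleton matches the paper's. You isolate a distinguished mixed pair and use cyclicity plus orthogonality of Majorana strings, so that $K_e=\supp(O_b)$ is fixed by the rest of the polymer and no $\binom{n}{q}$ is paid. You then root each remaining component at one of $n$ sites via Lemma~\ref{lem:count}(1) and sum into $T$.

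\textbf{The gap is in which components you use.} You take the components of $G^\mix(P)\setminus\{u,v\}$, i.e.\ support intersection within a replica plus the remaining mixed Wick edges. For these, the even-block argument only shows that at most $q/2$ components have a nonempty block in replica $1$, and at most $q/2$ in replica $2$. A component can supply its share of $K_e$ in a single replica, so the total can be $q$, and merging along mixed pairs does not fix this.

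Concretely, take $q=4$ and $K_e=\{1,2,3,4\}$. Let replica $1$ contain $\psi_{K_e},\psi_{\{1,2\}},\psi_{\{3,4\}}$ and replica $2$ contain $\psi_{K_e},\psi_{\{1,3\}},\psi_{\{2,4\}}$, the latter two in each replica being perturbation insertions. This is a distinct-set polymer with both traces nonzero, yet $G^\mix\setminus\{u,v\}$ has four singleton components. Charging a free root $n$ to each gives $n^{q}\sigma^2=\Theta(n)$ rather than $n^{q/2}\sigma^2$, so your count yields no decay.

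Your intermediate per-replica claims also fail. A piece may attach to $u,v$ only through another mixed pair; then it need not meet $K_e$ in its own replica and can have an empty block. An example is the replica-$1$ endpoints of two mixed pairs with a common support disjoint from $K_e$.

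For $\lambda=0$ your count happens to survive. A component's block is then the symmetric difference of its mixed-pair supports, hence identical in both replicas. The unpaired, replica-specific $A$-insertions are what break it.

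\textbf{How the paper avoids this.} It collapses every Wick pair, including the other mixed pairs, to its set. It then takes the components $R_1,\dots,R_c$ of the intersection graph on $K_2,\dots,K_k,A_{a_1},\dots,A_{a_r}$, with adjacency given by support intersection \emph{irrespective of replica}. Distinct components then have disjoint supports in $[n]$, and each meets $K_1$ by connectivity. Any $x\in K_1\cap\supp(R_j)$ must be produced by $R_j$ in \emph{both} replicas. Hence every component has a nonempty even block in each replica, and $c\le q/2$ in total. Alternatively, you could keep your decomposition but root components whose block lies only in the second replica at a site of the already-determined $K_e$, paying $q$ instead of $n$.

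\textbf{The rearrangement you flagged.} The paper does not factor into a product of $T$-series. It labels components by a surjection $\phi$, with at most $c^{k+r-1}\le(q/2)^{k+r-1}$ choices. It bounds $\prod_u(g_uq+h_uL)^{g_u+h_u-1}\le[\alpha(k+r)]^{k+r-1-c}$ and sums over $c\le q/2$ to get a single $T$. The factor $q$ in \eqref{eq:concond1} is $(q/2)^{k+r-1}$ times the $2^{k+r-1}$ left over from the replica labels. The prefactor $1/(\alpha^2 q)$ comes from $\alpha^{k+r-2}q^{k+r-1}$, not from per-replica tree normalizations.

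Finally, your fallback bound of $2^q$ on the number of even-block partitions of $K_e$ is false for $q\ge 8$: there are $379$ such partitions when $q=8$.
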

\begin{proof}
    We write out $\cM^\dist$ using Lemma~\ref{lem:polymer2} as
    \begin{align}
        &\cM^\dist(\beta, \gamma) = \sum_{P\in\cP^\dist}\abs{z_P(\beta,\gamma)}e^{a(P)}\\
        &\leq \sum_{P\in\cP^\dist} e^{a(P)} \frac{\sigma^{|\eta^{-1}(0)|} \abs{\lambda}^{|\eta^{-1}(1)|}}{\lr{|\eta^{-1}(0)|+|\eta^{-1}(1)|}!} \abs{\beta}^{|\mu^{-1}(1)|} \abs{\gamma}^{|\mu^{-1}(2)|} \lr{\prod_{j:\eta(j)=1} c_{a_j}} \abs{\tr(W_1(P)) \tr(W_2(P))},
    \end{align}
    where the polymer is decomposed as $P=(\eta, \pi, \mu, \vec I, \vec a)$. Suppose $e \in \Pi^\mix(P)$ is a Wick pair that certifies that $P \in \cP^\dist$ per \eqref{eq:pdist}. Similarly to before, set notation
    \begin{align}
        \abs{\eta^{-1}(0)} = 2k, \quad \abs{\eta^{-1}(1)} = r.
    \end{align}
    For each edge $\{u, v\} \in \pi$, recall that $I_u = I_v$; label these sets by $K_1, \dots, K_k \in \binom{[n]}{q}$ such that the edge $e$ identified above corresponds to $K_1$. Note that by cyclicity of trace, we can rewrite
    \begin{align}
        \tr(W_1(P)) = \tr(\psi_{K_1}O_1), \quad \tr(W_2(P)) = \tr(\psi_{K_1}O_2)
    \end{align}
    where $O_1, O_2$ depend only on $(\eta, \pi, \mu, e, K_2, \dots, K_k, \vec a)$ and not $K_1$.
    
    Let $G(K_1; K_2, \dots, K_k, \vec a)$ denote the support-intersection graph on $K_1, \dots, K_k, A_{a_1}, \dots, A_{a_r}$. Since this graph is connected in the polymer expansion, we have the bound for $t = \max\{|\beta|, |\gamma|\}$ that
    \begin{align}\label{eq:mbound1}
        \cM^\dist(\beta, \gamma) &\leq \sum_{k \geq 1} \sum_{r \geq 0} e^{(2qk+Lr)/4\alpha} \frac{(\sigma t)^{2k} (|\lambda| t)^r}{(2k+r)!} \sum_{\substack{(\eta, \pi, \mu, e)\\ \abs{\eta^{-1}(0)} = 2k, \abs{\eta^{-1}(1)}=r\\ e\in\Pi^\mix(\eta,\pi,\mu)}} \sum_{a_1,\dots,a_r \in [M]} \lr{\prod_{j=1}^r \abs{c_{a_j}}}\nonumber\\
        &\quad \sum_{K_2,\dots,K_k\in\binom{[n]}{q}} \sum_{K_1\in\binom{[n]}{q}} 1\{G(K_1;K_2,\dots,K_k,\vec a) \text{ connected}\} \abs{\tr(\psi_{K_1}O_1) \tr(\psi_{K_1}O_2)},
    \end{align}
    where we used
    \begin{align}
        a(P) = \frac{|\supp(P)|}{4\alpha} = \frac{|\supp_1(P)| + |\supp_2(P)|}{4\alpha} \leq \frac{2qk+Lr}{4\alpha}.
    \end{align}
    Let $G(\emptyset; K_2, \dots, K_k, \vec a)$ be the intersection graph on the remaining objects; denote its connected components by $R_1, \dots, R_c$. We claim that for every term with
    \begin{align}
        1\{G(K_1;K_2,\dots,K_k,\vec a) \text{ connected}\} \abs{\tr(\psi_{K_1}O_1) \tr(\psi_{K_1}O_2)}
    \end{align}
    the number of connected components satisfies $c \leq q/2$.

    This claim can be shown starting from two observations: first, $\tr(\psi_{K_1} O_b)$ is only nonzero when $O_b$ is supported only on $K_1$; second, the support of each $R_j$ must intersect $K_1$, since $G(K_1; K_2, \dots, K_k, \vec a)$ was connected. Fix $b \in \{1,2\}$. We decompose $O_b$ over its supports in the different connected components $R_j$; these supports are necessarily disjoint, otherwise the $R_j$s wouldn't be distinct connected components. Accordingly, for some scalars $\omega_{j,b}$, let
    \begin{align}
        O_b = \prod_{j=1}^c \omega_{j,b} \psi_{B_{j,b}}
    \end{align}
    where $B_{j,b} \cap B_{j',b} = \emptyset$ for all $j \neq j'$. Hence, we have that
    \begin{align}
        K_1 = \supp(O_b) = \bigsqcup_{j=1}^c B_{j,b}
    \end{align}
    by the first observation. Moreover, since the support of a product of Majorana strings is given by the symmetric difference of each string's support, and $|K_i|, |A_{a_j}|$ are both even, the support of $B_{j,b}$ also has even cardinality; hence, $|B_{j,b}| \geq 2$. Mutual disjointness thus gives
    \begin{align}
        q = |K_1| = \sum_{j=1}^c |B_{j,b}| \geq 2c.
    \end{align}
    We use this fact to loosen the indicator on the connectedness of $G$ into the connectedness of $R_j$. To notate the sum over $R_j$, we introduce a map $\phi$ that labels the connected components as $R_j = \phi^{-1}(j)$ for $j \in [c]$. Hence, the last sum in \eqref{eq:mbound1} is bounded by
    \begin{align}
        \sum_{K_1\in\binom{[n]}{q}} 1\{G(K_1;K_2,\dots,K_k,\vec a) \text{ connected}\} \abs{\tr(\psi_{K_1}O_1) \tr(\psi_{K_1}O_2)} \leq 2^{-qk} \sum_{c=1}^{q/2} \sum_\phi \prod_{j=1}^c 1\{R_j(\phi) \text{ connected}\}
    \end{align}
    due to the bound
    \begin{align}
        \abs{\tr(\psi_{K_1}O_1)\tr(\psi_{K_1}O_2)} \leq 2^{-q} \norm{O_1} \cdot \norm{O_2} \leq 2^{-qk},
    \end{align}
    applying the Majorana normalization convention and the fact that each $O_1$ is a string of $(k-1)q$ fermions. We now bound
    \begin{align}
        \cM^\dist(\beta, \gamma) &\leq \sum_{k \geq 1} \sum_{r \geq 0} e^{(2qk+Lr)/4\alpha} \frac{(\sigma t)^{2k} (|\lambda| t)^r}{(2k+r)!} \sum_{\substack{(\eta, \pi, \mu, e)\\ \abs{\eta^{-1}(0)} = 2k, \abs{\eta^{-1}(1)}=r\\ e\in\Pi^\mix(\eta,\pi,\mu)}} \sum_{a_1,\dots,a_r \in [M]} \lr{\prod_{j=1}^r \abs{c_{a_j}}}\nonumber\\
        &\quad \sum_{K_2,\dots,K_k\in\binom{[n]}{q}} 2^{-qk} \sum_{c=1}^{q/2} \sum_\phi \prod_{j=1}^c 1\{R_j(\phi) \text{ connected}\}.
    \end{align}
    For a given choice of $\phi$, let $g_u$ denote the number of $K$-sets in $R_u$ and let $h_u$ denote the number of $A$-sets in $R_u$. Since
    \begin{align}
        \sum_{u=1}^c g_u = k-1, \quad \sum_{u=1}^c h_u = r
    \end{align}
    we have by Lemma~\ref{lem:count}(1), summing over the $n$ choices for the root $x$ and weighting by $\Gamma$, that
    \begin{align}
        &\sum_{a_1,\dots,a_r} \lr{\prod_{j=1}^r |c_{a_j}|} \sum_{K_2,\dots,K_k} \prod_{u=1}^c 1\{R_u(\phi)\text{ connected}\} \nonumber\\
        &\leq n^c \Gamma^r \binom{n-1}{q-1}^{k-1} \prod_{u=1}^c (g_u q + h_u L)^{g_u + h_u - 1} \leq n^c \Gamma^r \binom{n-1}{q-1}^{k-1} [\alpha(k+r)]^{k+r-1-c}.
    \end{align}
    Since the number of surjections $\phi$ is at most $c^{k+r-1}$, this gives
    \begin{align}
        \cM^\dist(\beta, \gamma) &\leq \sum_{k \geq 1} \sum_{r \geq 0} e^{(2qk+Lr)/4\alpha} \frac{(\sigma t)^{2k} (|\lambda| t)^r}{(2k+r)!} \sum_{\substack{(\eta, \pi, \mu, e)\\ \abs{\eta^{-1}(0)} = 2k, \abs{\eta^{-1}(1)}=r\\ e\in\Pi^\mix(\eta,\pi,\mu)}} \nonumber\\
        &\quad 2^{-qk} \Gamma^r \binom{n-1}{q-1}^{k-1} [\alpha(k+r)]^{k+r-1} \sum_{c=1}^{q/2} c^{k+r-1} \frac{n}{\alpha(k+r)}^c.
    \end{align}
    Using upper bound
    \begin{align}
        \sum_{c=1}^{q/2} c^{k+r-1} \frac{n}{\alpha(k+r)}^c \leq \lr{\frac{q}{2}}^{k+r-1} \frac{n^{q/2}}{\alpha(k+r)}
    \end{align}
    and counting the $\binom{2k+r}{r}$ choices for $\eta$, $(2k-1)!!$ choices for $\pi$, and $2^{2k+r-1}$ choices for $\mu$, and $k$ choices for $e$, we have
    \begin{align}
        \cM^\dist(\beta, \gamma) &\leq \sum_{k \geq 1} \sum_{r \geq 0} e^{(2qk+Lr)/4\alpha} \frac{(\sigma t)^{2k} (|\lambda| t)^r}{(2k+r)!} \nonumber\\
        &\quad \times \binom{2k+r}{r} (2k-1)!! 2^{2k+r-1} k 2^{-qk} n^{q/2} \lr{\frac{q}{2}}^{k+r-1} \Gamma^r \binom{n-1}{q-1}^{k-1} [\alpha(k+r)]^{k+r-2}.
    \end{align}
    Using bounds
    \begin{align}
        \binom{2k+r}{r}(2k-1)!! \frac{2^{2k+r-1}}{(2k+r)!} = \frac{2^{k+r-1}}{k!r!}, \quad \binom{n-1}{q-1}\sigma^2 \leq 1, \quad (k+r)^{k+r-2}k \leq (k+r)^{k+r-1}
    \end{align}
    we have
    \begin{align}
        \cM^\dist(\beta, \gamma) &\leq \frac{\sigma^2 n^{q/2}}{\alpha^2q} \sum_{k \geq 1} \sum_{r \geq 0} e^{(2qk+Lr)/4\alpha} \frac{(k+r)^{k+r-1}}{k!r!} \lr{2^{-q} \alpha q t^2}^k \lr{\Gamma \alpha q |\lambda| t}^r.
    \end{align}
    We then use
    \begin{align}
        \sum_{\substack{k,r \geq 0\\ k+r \geq 1}} \frac{(k+r)^{k+r-1}}{k!r!} y_1^k y_2^r \leq T(y_1 + y_2)
    \end{align}
    and $\sigma^2 n^{q/2} = (q-1)!/n^{q/2-1}$ to obtain
    \begin{align}
        \cM^\dist(\beta, \gamma) &\leq \frac{(q-1)!}{\alpha^2q} n^{1-q/2} T\lr{e^{q/2\alpha}2^{-q} \alpha q t^2 + e^{L/4\alpha}\Gamma \alpha q |\lambda| t}.
    \end{align}
    Finally, Fact~\ref{fac:tfunc} gives
    \begin{align}
        \cM^\dist(\beta, \gamma) \leq \frac{(q-1)!}{\alpha^2q} n^{1-q/2}
    \end{align}
    for all
    \begin{align}
        \frac{e^{1/2}\alpha q t^2}{2^q} + e^{1/4}\Gamma \alpha q |\lambda| t \leq \frac{1}{e}.
    \end{align}
\end{proof}

\begin{lemma}[Repeated set bound]\label{lem:rep}
    For $t = \max\{|\beta|, |\gamma|\}$ and $\alpha \geq \max\{q,L\}$, if
    \begin{align}\label{eq:concond2}
        \frac{\alpha e^{1/2}}{2^{q-1}} t^2  + 2 \alpha e^{1/4} \Gamma |\lambda| t \leq \frac{1}{e},
    \end{align}
    then
    \begin{align}
        \cM^\rep(\beta, \gamma) \leq \frac{(q-1)!}{\alpha^2} n^{2-q}.
    \end{align}
\end{lemma}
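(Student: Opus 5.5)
The bound is proved by the same polymer counting as Lemma~\ref{lem:dist}, but using the repeated-set constraint in place of the trace-orthogonality argument. Fix $P=(\eta,\pi,\mu,\vec I,\vec a)\in\cP^\rep$ with $|\eta^{-1}(0)|=2k$ and $|\eta^{-1}(1)|=r$. Label the Wick-pair supports $K_1,\dots,K_k$, and choose a mixed pair $e\in\Pi^\mix(P)$, which exists since $P\notin\cP^\sep$; let $K_1=K_e$. Because $P\notin\cP^\dist$, the support $K_1$ must coincide either with some other pair support $K_{e'}$, $e'\in\{2,\dots,k\}$, or with some perturbation support $A_{a_f}$ with $|A_{a_f}|=q$. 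I would split into these two cases and sum over the witness: at most $k-1$ choices of $e'$ in the first case, and at most $r$ choices of $f$ in the second.

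For the coefficients, use $|\tr W_1\tr W_2|\le 2^{-qk}$, $\prod_j|c_{a_j}|$, and $e^{a(P)}\le e^{(2qk+Lr)/4\alpha}$ as before. The combinatorial factors are $\binom{2k+r}{r}$ for $\eta$, $(2k-1)!!$ for $\pi$, at most $2^{2k+r}$ for $\mu$, and $k$ for $e$. For the set counting, sum $K_1$ over all $\binom nq$ choices, which is where the $n^q\sigma^2=O(n)$ factor comes from. Then apply Lemma~\ref{lem:count}(3) in the first case and Lemma~\ref{lem:count}(4) in the second, with $K_1$ fixed. The $\vec a$ sums are handled as in Lemma~\ref{lem:count}(1): weight each attachment by $\Gamma$ through the $|S_{p(w)}|$ factor. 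In case (4) the fixed $B_f$ must equal $K_1$, so its choice contributes at most $|c_{a_f}|\le\Gamma$ summed over matching $a_f$.

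The repeated constraint removes one free $\binom{n-1}{q-1}$ factor relative to a generic connected count. Case (3) gives $\binom nq\, q\binom{n-1}{q-1}^{k-2}(\alpha(k+r))^{k+r-3}$ against $\sigma^{2k}$, which yields $\binom nq\sigma^4=O(n^{2-q})$ times a convergent series. Case (4) gives $\binom nq q\binom{n-1}{q-1}^{k-1}$ against $\sigma^{2k}$ and $\lambda^r$, leaving $n^q\sigma^2$. This is compensated by the $\lambda$ factor and the pinned support: because $K_1=A_{a_f}$ is a fixed observable support, $K_1$ is not free but is determined by $a_f$. So the sum over $K_1$ should be absorbed into the sum over $a_f$ (cost $\Gamma$ per site, with $n$ root choices for $x\in K_1$). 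That gives $n\sigma^2\cdot\binom{n-1}{q-1}^{k-1}\sigma^{2k-2}$, i.e.\ $O(n^{2-q})$ as well.

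After these simplifications I use $\binom{2k+r}{r}(2k-1)!!\,2^{2k+r}/(2k+r)!=2^{k+r}/(k!r!)$ and $\binom{n-1}{q-1}\sigma^2\le1$. The extra factors $k$ (from $e$) and $k-1$ or $r$ (from the witness) are absorbed by $(k+r)^{k+r-3}\cdot k(k+r)\le(k+r)^{k+r-1}$. The sum then reduces to
\[
\frac{(q-1)!}{\alpha^2}n^{2-q}\,T\!\left(\frac{\alpha e^{1/2}}{2^{q-1}}t^2+2\alpha e^{1/4}\Gamma|\lambda|t\right),
\]
where the factor $2$ comes from the $2^{k+r}$ above. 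Fact~\ref{fac:tfunc} with \eqref{eq:concond2} then bounds $T\le1$.

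The main obstacle is the bookkeeping in case (4). There one has to check that the $n^q$ choices of $K_1$ really collapse to $n\Gamma$ via the observable weights rather than $n^q$; otherwise one is left with an $O(n)$ contribution. A secondary point is tracking the prefactor exactly, in particular the $\binom nq\sigma^2\cdot q\sigma^2$ versus $(q-1)!/\alpha^2$ constant, and confirming that the powers of $\alpha$ produced by the tree counting leave exactly $\alpha^{-2}$.
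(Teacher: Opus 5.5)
Your proposal is correct and is essentially the paper's proof. It uses the same split of $\cP^\rep$ according to whether the mixed-pair support $K_1$ is repeated by another Wick pair (Lemma~\ref{lem:count}(3), with $K_1$ summed over $\binom{n}{q}$) or by a perturbation support (Lemma~\ref{lem:count}(4)), the same coefficient and combinatorial bounds, and the same reduction to $T$ via Fact~\ref{fac:tfunc}. Your explicit treatment of case (4) pins $K_1=A_{a_f}$, so the root sum over $a_f$ costs $O(n\Gamma)$ rather than $\binom{n}{q}$; in fact $\sum_{a:|A_a|=q}|c_a|\le n\Gamma/q$, which exactly cancels the root factor $q$ from Lemma~\ref{lem:count}(4). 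The paper does the same thing implicitly through the loosening $\binom{n-1}{q-1}^{k-1}\le\binom{n}{q}\binom{n-1}{q-1}^{k-2}$.
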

\begin{proof}
    Let $K_1$ denote a mixed Wick pair support, i.e., $K_1 = I_u = I_v$ for $\{u, v\} \in \pi$ and $\mu(u) \neq \mu(v)$. As in the proof of Lemma~\ref{lem:dist}, we set $|\eta^{-1}(0)|=2k$ and $|\eta^{-1}(1)|=r$. In $\cM^\rep$, there exists another set $B \in \{K_2, \dots, K_k, A_{a_1}, \dots, A_{a_r}\}$ such that $K_1 = B$. We consider two cases: $B \in \{K_2, \dots, K_k\}$ is from the SYK terms, and $B \in \{A_{a_1}, \dots, A_{a_r}\}$ is from the perturbation terms. In both cases, it will suffice to use the bound
    \begin{align}\label{eq:wbound}
        \abs{\tr(W_1(P)) \tr(W_2(P))} \leq 2^{-qk},
    \end{align}
    which holds since $\norm{\psi_I} = 2^{-q/2}$ and $\norm{\psi_{A_{a_j}}} \leq 1$.
    
    In the first case, there are at most $k-1$ choices $e \in \{2, \dots, k\}$ to set $K_1 = K_e$. Lemma~\ref{lem:count}(3) with $K_1=K_e$ and fixed $e$ gives at most
    \begin{align}
        q \binom{n-1}{q-1}^{k-2} (qk+Lr)^{k+r-3}
    \end{align}
    choices for the remaining $K$ sets and $A_{a_j}$. As in Lemma~\ref{lem:dist}, there are $\binom{2k+r}{r}$ ways to choose $\eta$, $(2k-1)!!$ ways to choose $\pi$, and $2^{2k+r-1}$ ways to choose $\mu$ (since, as before, the pair $\{u,v\} \in \pi$ corresponding to $K_1$ must have $\mu(u) \neq \mu(v)$). Hence, the contribution of the first case to $\cM^\rep$ is at most
    \begin{align}
        &\sum_{k \geq 2} \sum_{r \geq 0} (k-1) \binom{n}{q} \cdot q \binom{n-1}{q-1}^{k-2} \Gamma^r (qk+Lr)^{k+r-3}\nonumber\\
        &\qquad\qquad\cdot \binom{2k+r}{r} (2k-1)!! 2^{2k+r-1} \cdot \frac{(\sigma t)^{2k}(|\lambda|t)^r}{(2k+r)!} 2^{-qk} e^{(2qk+Lr)/4\alpha}.
    \end{align}
    We further loosen this upper bound with $(k-1)(qk+Lr)^{k+r-3} \leq (qk+Lr)^{k+r-2}$.

    In the second case, there are at most $r$ choices for $f$ such that $A_{a_f} = K_1$. Lemma~\ref{lem:count}(4) bounds the number of remaining $K_2, \dots, K_k$ and $A_{a_j}$ with $j \neq f$ as
    \begin{align}
        q \binom{n-1}{q-1}^{k-1} (qk+Lr)^{k+r-3}.
    \end{align}
    We use \eqref{eq:wbound} once again to get a contribution of $\cM^\rep$ of at most
    \begin{align}
        \sum_{k \geq 1} \sum_{r \geq 1} r \cdot q \binom{n-1}{q-1}^{k-1} \Gamma^r (qk+Lr)^{k+r-3} \cdot \binom{2k+r}{r} (2k-1)!! 2^{2k+r-1} \cdot \frac{(\sigma t)^{2k}(|\lambda|t)^r}{(2k+r)!} 2^{-qk} e^{(2qk+Lr)/4\alpha}.
    \end{align}
    We further loosen this upper bound with $r(qk+Lr)^{k+r-3} \leq (qk+Lr)^{k+r-2}$ and $\binom{n-1}{q-1}^{k-1} \leq \binom{n}{q} \binom{n-1}{q-1}^{k-2}$.
    
    Combining the two cases (further loosening the upper bounds by extending the sums over $k$ and $r$), we find
    \begin{align}
        \cM^\rep &\leq 2\sum_{k \geq 1} \sum_{r \geq 0} q\binom{n}{q}\binom{n-1}{q-1}^{k-2} \Gamma^r (qk+Lr)^{k+r-2} \cdot \binom{2k+r}{r} (2k-1)!! 2^{2k+r-1} \nonumber\\
        &\quad \cdot \frac{(\sigma t)^{2k}(|\lambda|t)^r}{(2k+r)!} 2^{-qk} e^{(2qk+Lr)/4\alpha}.
    \end{align}
    Using for $k \geq 1$ that
    \begin{align}
        \binom{2k+r}{r}(2k-1)!! \frac{2^{2k+r-1}}{(2k+r)!} = \frac{2^{k+r-1}}{k!r!}, \quad q\binom{n}{q}\binom{n-1}{q-1}^{k-2}\sigma^{2k} \leq (q-1)!n^{2-q},
    \end{align}
    we obtain
    \begin{align}
        \cM^\rep \leq (q-1)!n^{2-q} \sum_{k \geq 1} \sum_{r \geq 0} \frac{(qk+Lr)^{k+r-2}}{k!r!} 2^{r-(q-1)k} \Gamma^r (|\lambda|t)^r t^{2k} e^{(2qk+Lr)/4\alpha}.
    \end{align}
    Using the assumed condition $qk+Lr \leq \alpha(k+r)$ and loosening $(k+r)^{k+r-2} \leq (k+r)^{k+r-1}$ gives by Fact~\ref{fac:tfunc}
    \begin{align}
        \cM^\rep &\leq \frac{(q-1)!}{\alpha^2}n^{2-q} \sum_{\substack{k,r \geq 0\\ k+r \geq 1}} \frac{(k+r)^{k+r-1}}{k!r!} \lr{\frac{\alpha e^{q/2\alpha}}{2^{q-1}}t^2}^k \lr{2\alpha e^{L/4\alpha}\Gamma |\lambda| t}^r\\ &= \frac{(q-1)!}{\alpha^2}n^{2-q} T\lr{\frac{\alpha e^{1/2}}{2^{q-1}}t^2 + 2\alpha e^{1/4}\Gamma |\lambda| t}.
    \end{align}
    We conclude by using $0 \leq T(x) \leq 1$ for $x \leq 1/e$.
\end{proof}

\begin{lemma}[Connected Koteck\'y-Preiss]\label{lem:kpconn}
    If \eqref{eq:discond}, \eqref{eq:concond1} and \eqref{eq:concond2} hold, then the KP condition
    \begin{align}
        \sum_{\substack{Q \in \cP^\mix\\ Q \nsim P}} |z_Q(\beta, \gamma)| e^{a(Q)} \leq a(P)
    \end{align}
    holds for every $P \in \cP^\mix$ for sufficiently large $n$.
\end{lemma}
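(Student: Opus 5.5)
Split the sum over incompatible polymers according to $\cP^\mix = \cP^\sep \sqcup (\cP^\mix\setminus\cP^\sep)$. Every separated polymer is a mixed polymer with no mixed Wick pair, since a $G^\sep$-connected $\tau$ has all pairs inside one replica and is then $G^\mix$-connected. Conversely, a $G^\mix$-connected polymer with no mixed pair is $G^\sep$-connected, since removing the mixed edges removes nothing. Hence
\begin{align}
    \sum_{\substack{Q \in \cP^\mix\\ Q \nsim P}} |z_Q| e^{a(Q)} = \sum_{\substack{Q \in \cP^\sep\\ Q \nsim P}} |z_Q| e^{a(Q)} + \sum_{\substack{Q \in \cP^\mix\setminus\cP^\sep\\ Q \nsim P}} |z_Q| e^{a(Q)}.
\end{align}
By Corollary~\ref{cor:kpdisc}, which uses \eqref{eq:discond}, the first term is at most $\frac12 a(P)$ for every $P\in\cP^\mix$. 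It remains to show that the second term is also at most $\frac12 a(P)$.

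For the second term, I would drop the incompatibility constraint entirely. Then
\begin{align}
    \sum_{\substack{Q \in \cP^\mix\setminus\cP^\sep\\ Q \nsim P}} |z_Q| e^{a(Q)} \le \cM(\beta,\gamma) = \cM^\dist(\beta,\gamma)+\cM^\rep(\beta,\gamma).
\end{align}
By Lemmas~\ref{lem:dist} and~\ref{lem:rep}, which use \eqref{eq:concond1} and \eqref{eq:concond2}, this is at most
\begin{align}
    \frac{(q-1)!}{\alpha^2 q} n^{1-q/2} + \frac{(q-1)!}{\alpha^2} n^{2-q}.
\end{align}
Since $q\ge4$, this bound is $O(n^{-1})$ and tends to zero.

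Every polymer $P$ has nonempty support, because each $X_j$ is supported on a set of size at least $2$ (we may discard $A_a=\emptyset$ terms, or note that they contribute only scalars). Therefore $a(P) = |\supp(P)|/4\alpha \ge 1/(4\alpha)$, which gives the uniform lower bound $\frac12 a(P)\ge 1/(8\alpha)$. For $n$ large enough that $\cM \le 1/(8\alpha)$, the second term is at most $\frac12 a(P)$. Adding the two halves gives the KP condition.

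The only step requiring care is the identification $\cP^\sep\subseteq\cP^\mix$, which makes the split exhaustive and avoids double-counting. The other point to check is the uniform positivity of $a(P)$, which lets the crude global bound on $\cM$, not localized at $\supp(P)$, be absorbed. The main obstacle, bounding $\cM^\dist$ and $\cM^\rep$, has already been handled by the preceding lemmas, so I expect no real difficulty here. If empty-support perturbation terms were allowed, I would handle them by absorbing them into the normalization of $\wh Z$, since $\psi_\emptyset$ is a scalar.
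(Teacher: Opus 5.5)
Your proposal is correct and follows the paper's proof: you split off $\cP^\sep$ via Corollary~\ref{cor:kpdisc}, and you bound the remaining mixed polymers by the global quantity $\cM = \cM^\dist + \cM^\rep = O(n^{1-q/2})$ from Lemmas~\ref{lem:dist} and~\ref{lem:rep}. The only cosmetic difference is in the last step: the paper union-bounds over the sites of $\supp(P)$ to get $|\supp(P)|\,\cM = 4\alpha\,\cM\,a(P)$, which also covers an empty support automatically, whereas you drop the incompatibility constraint and use $a(P)\geq 1/(4\alpha)$; both routes need exactly $\cM \leq 1/(8\alpha)$.
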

\begin{proof}
    By Corollary~\ref{cor:kpdisc},
    \begin{align}
        \sum_{\substack{Q \in \cP^\sep\\ Q \nsim P}} |z_Q(\beta,\gamma)|e^{a(Q)} \leq \frac{1}{2} a(P)
    \end{align}
    for every $P \in \cP^\mix$. By Lemmas~\ref{lem:dist} and~\ref{lem:rep},
    \begin{align}\label{eq:cmineq}
        \cM(\beta, \gamma) = O\lr{n^{1-q/2} + n^{2-q}} = O\lr{n^{1-q/2}}
    \end{align}
    for $q \geq 4$.
    For the remaining polymers, we have
    \begin{align}
        \sum_{\substack{Q \in \cP^\mix \setminus \cP^\sep \\ Q \nsim P}} |z_Q(\beta, \gamma)| e^{a(Q)} &\leq \sum_{(s,x) \in \supp(P)} \sum_{\substack{Q \in \cP^\mix \setminus \cP^\sep\\ x \in \supp_s(Q)}} |z_Q(\beta,\gamma)| e^{a(Q)} \\
        &\leq |\supp(P)| \cM(\beta, \gamma) \\
        &= 4\alpha \cM(\beta, \gamma) a(P) \\
        &\leq a(P) \cdot O\lr{n^{1-q/2}},
    \end{align}
    where second-to-last line used the definition $a(P) = |\supp(P)|/4\alpha$, and the last inequality applied \eqref{eq:cmineq}.
    Combining this with the separated contribution gives the claimed KP condition, i.e.,
    \begin{align}
        \sum_{\substack{Q \in \cP^\mix\\ Q \nsim P}} |z_Q(\beta, \gamma)| e^{a(Q)} \leq \lr{\frac{1}{2} + O\lr{n^{1-q/2}}}a(P) \leq a(P).
    \end{align}
\end{proof}

\subsection{Proof of Theorem~\ref{thm:sa}}

Set $\gamma = \overline{\beta}$ and assume $|\beta|$ satisfies \eqref{eq:anncond}, \eqref{eq:discond}, \eqref{eq:concond1} and \eqref{eq:concond2}. By Lemma~\ref{lem:kpconn}, the KP condition holds on $\cP^\mix$ with $a(P) = |\supp(P)|/4\alpha$. Since $\cP^\sep \subset \cP^\mix$, we apply Corollary~\ref{cor:kp} with $\Lambda = \cP^\mix$ and $\Lambda_0 = \cP^\sep$ to obtain
\begin{align}
    \left|\log \frac{\E \left|\wh Z(\beta)\right|^2}{\left|\E\,\wh Z(\beta)\right|^2}\right| \leq \sum_{P \in \cP^\mix \setminus \cP^\sep} |z_P(\beta, \overline\beta)| e^{a(P)} = \cM(\beta, \overline\beta) = O\lr{n^{1-q/2}}.
\end{align}
Note that the denominator is nonzero by Theorem~\ref{thm:anneal} because we assume \eqref{eq:anncond}. Since $\left|\log \frac{\E \left|\wh Z(\beta)\right|^2}{\left|\E\,\wh Z(\beta)\right|^2}\right| < 1$ for sufficiently large $n$, the inequality $|x-1| \leq e^{|\log x|}-1 \leq 2|\log x|$ gives
\begin{align}
    \left|\frac{\E \left|\wh Z(\beta)\right|^2}{\left|\E\,\wh Z(\beta)\right|^2} - 1\right| \leq 2\left|\log \frac{\E \left|\wh Z(\beta)\right|^2}{\left|\E\,\wh Z(\beta)\right|^2}\right| = O\lr{n^{1-q/2}},
\end{align}
which produces
\begin{align}
    \frac{\E \left|\wh Z(\beta) - \E \wh Z(\beta)\right|^2}{\left|\E \wh Z(\beta)\right|^2} = O\lr{n^{1-q/2}}.
\end{align}
It remains to obtain a simpler sufficient condition on $|\beta|$ from \eqref{eq:anncond}, \eqref{eq:discond}, \eqref{eq:concond1} and \eqref{eq:concond2}. Set $\alpha = \max\{q,L\}$ and $t = |\beta|$. For $q \geq 4$, \eqref{eq:concond1} implies the other three conditions. Indeed,
\begin{align}
    \frac{\alpha e^{q/2\alpha}}{2^{q+1}}t^2 + \alpha e^{L/2\alpha}\Gamma|\lambda|t
    &\leq \frac{e^{1/4}}{q}\lr{\frac{e^{1/2}\alpha q}{2^q}t^2 + e^{1/4}\Gamma\alpha q |\lambda|t}
    \leq \frac{1}{2\sqrt e},\\
    \frac{\alpha e^{q/4\alpha}}{2^{q+1}}t^2 + \alpha e^{L/4\alpha}\Gamma|\lambda|t
    &\leq \frac{1}{q}\lr{\frac{e^{1/2}\alpha q}{2^q}t^2 + e^{1/4}\Gamma\alpha q |\lambda|t}
    \leq \frac{1}{4e^{1/4}},\\
    \frac{\alpha e^{1/2}}{2^{q-1}}t^2 + 2\alpha e^{1/4}\Gamma |\lambda|t
    &= \frac{2}{q}\lr{\frac{e^{1/2}\alpha q}{2^q}t^2 + e^{1/4}\Gamma\alpha q |\lambda|t}
    \leq \frac{1}{e}.
\end{align}
Hence it suffices to enforce \eqref{eq:concond1}. If $|\lambda|\Gamma \leq 2^{-q/2}(q\alpha)^{-1/2}$ and
\begin{align}
    t \leq \frac{C}{\sqrt{q\alpha}}, \qquad
    C = 2^{q/2-1}e^{-1/4}\lr{\sqrt{1+\frac{4}{e}}-1},
\end{align}
then
\begin{align}
    \frac{e^{1/2}\alpha q}{2^q}t^2 + e^{1/4}\Gamma\alpha q |\lambda|t
    \leq \frac{e^{1/2}}{2^q}C^2 + \frac{e^{1/4}}{2^{q/2}}C = \frac{1}{e}.
\end{align}
This proves that
\begin{align}
    \sup_{|\beta| \leq C/\sqrt{q\max\{q,L\}}} \frac{\E \left|\wh Z(\beta) - \E \wh Z(\beta)\right|^2}{\left|\E \wh Z(\beta)\right|^2} = O\lr{n^{1-q/2}}.
\end{align}

\bibliography{References}

\newcommand{\etalchar}[1]{$^{#1}$}
\begin{thebibliography}{MEAG{\etalchar{+}}20}

\bibitem[ACKK25]{anschuetz2025strongly}
Eric~R Anschuetz, Chi-Fang Chen, Bobak~T Kiani, and Robbie King.
\newblock Strongly interacting fermions are nontrivial yet nonglassy.
\newblock {\em Physical Review Letters}, 135(3):030602, 2025.

\bibitem[AJ19]{auffinger2019thouless}
Antonio Auffinger and Aukosh Jagannath.
\newblock {Thouless--Anderson--Palmer} equations for generic $p$-spin glasses.
\newblock {\em Ann. Probab.}, 47(4):2230--2256, July 2019.

\bibitem[ALR87]{aizenman1987some}
Michael Aizenman, Joel~L Lebowitz, and David Ruelle.
\newblock Some rigorous results on the sherrington-kirkpatrick spin glass
  model.
\newblock {\em Communications in mathematical physics}, 112(1):3--20, 1987.

\bibitem[AMS24]{almheiri2024universal}
Ahmed Almheiri, Alexey Milekhin, and Brian Swingle.
\newblock Universal constraints on energy flow and syk thermalization.
\newblock {\em Journal of High Energy Physics}, 2024(8):1--43, 2024.

\bibitem[Ans25]{anschuetz2025efficient}
Eric~R Anschuetz.
\newblock Efficient learning implies quantum glassiness.
\newblock {\em arXiv preprint arXiv:2505.00087}, 2025.

\bibitem[ART06]{achlioptas2006solution}
Dimitris Achlioptas and Federico Ricci-Tersenghi.
\newblock On the solution-space geometry of random constraint satisfaction
  problems.
\newblock In {\em Proceedings of the thirty-eighth annual ACM symposium on
  Theory of computing}, pages 130--139, 2006.

\bibitem[Bar14]{barvinok2014computing}
Alexander Barvinok.
\newblock Computing the partition function for cliques in a graph.
\newblock {\em arXiv preprint arXiv:1405.1974}, 2014.

\bibitem[Bar16a]{barvinok2016approximating}
Alexander Barvinok.
\newblock Approximating permanents and hafnians.
\newblock {\em arXiv preprint arXiv:1601.07518}, 2016.

\bibitem[Bar16b]{barvinok2016combinatorics}
Alexander Barvinok.
\newblock {\em Combinatorics and complexity of partition functions}, volume~30.
\newblock Springer, 2016.

\bibitem[Bar16c]{barvinok2016computing}
Alexander Barvinok.
\newblock Computing the permanent of (some) complex matrices.
\newblock {\em Foundations of Computational Mathematics}, 16(2):329--342, 2016.

\bibitem[Bar18]{barvinok2018approximating}
Alexander Barvinok.
\newblock Approximating real-rooted and stable polynomials, with combinatorial
  applications.
\newblock {\em arXiv preprint arXiv:1806.07404}, 2018.

\bibitem[BC00]{biroli2000quantum}
Giulio Biroli and Leticia~F Cugliandolo.
\newblock Quantum tap equations.
\newblock {\em arXiv preprint cond-mat/0011028}, 2000.

\bibitem[BCGW21]{bravyi2021complexity}
Sergey Bravyi, Anirban Chowdhury, David Gosset, and Pawel Wocjan.
\newblock On the complexity of quantum partition functions.
\newblock {\em arXiv preprint arXiv:2110.15466}, 2021.

\bibitem[BCL24]{bergamaschi2024quantum}
Thiago Bergamaschi, Chi-Fang Chen, and Yunchao Liu.
\newblock Quantum computational advantage with constant-temperature gibbs
  sampling.
\newblock In {\em 2024 IEEE 65th Annual Symposium on Foundations of Computer
  Science (FOCS)}, pages 1063--1085. IEEE, 2024.

\bibitem[BDOT08]{bravyi2006complexity}
Sergey Bravyi, David~P. Divincenzo, Roberto Oliveira, and Barbara~M. Terhal.
\newblock The complexity of stoquastic local hamiltonian problems.
\newblock {\em Quantum Info. Comput.}, 8(5):361–385, May 2008.

\bibitem[BDPR21]{bencs2021zero}
Ferenc Bencs, Ewan Davies, Viresh Patel, and Guus Regts.
\newblock On zero-free regions for the anti-ferromagnetic potts model on
  bounded-degree graphs.
\newblock {\em Annales de l’Institut Henri Poincar{\'e} D}, 8(3):459--489,
  2021.

\bibitem[BEAH{\etalchar{+}}22]{bandeira2022franz}
Afonso~S Bandeira, Ahmed El~Alaoui, Samuel Hopkins, Tselil Schramm, Alexander~S
  Wein, and Ilias Zadik.
\newblock The franz-parisi criterion and computational trade-offs in high
  dimensional statistics.
\newblock {\em Advances in Neural Information Processing Systems},
  35:33831--33844, 2022.

\bibitem[BFK{\etalchar{+}}13]{bapst2013quantum}
Victor Bapst, Laura Foini, Florent Krzakala, Guilhem Semerjian, and Francesco
  Zamponi.
\newblock The quantum adiabatic algorithm applied to random optimization
  problems: The quantum spin glass perspective.
\newblock {\em Physics Reports}, 523(3):127--205, 2013.

\bibitem[BFK24]{bunin2024fisher}
Guy Bunin, Laura Foini, and Jorge Kurchan.
\newblock Fisher zeroes and the fluctuations of the spectral form factor of
  chaotic systems.
\newblock {\em SciPost Physics}, 17(4):114, 2024.

\bibitem[BGG{\v{S}}18]{bezakova2018inapproximability}
Ivona Bez{\'a}kov{\'a}, Andreas Galanis, Leslie~Ann Goldberg, and Daniel
  {\v{S}}tefankovi{\v{c}}.
\newblock Inapproximability of the independent set polynomial in the complex
  plane.
\newblock In {\em Proceedings of the 50th annual ACM SIGACT symposium on theory
  of computing}, pages 1234--1240, 2018.

\bibitem[BGMZ22]{basso2022performance}
Joao Basso, David Gamarnik, Song Mei, and Leo Zhou.
\newblock Performance and limitations of the qaoa at constant levels on large
  sparse hypergraphs and spin glass models.
\newblock In {\em 2022 IEEE 63rd Annual Symposium on Foundations of Computer
  Science (FOCS)}, pages 335--343. IEEE, 2022.

\bibitem[BHK{\etalchar{+}}19]{barak2019nearly}
Boaz Barak, Samuel Hopkins, Jonathan Kelner, Pravesh~K Kothari, Ankur Moitra,
  and Aaron Potechin.
\newblock A nearly tight sum-of-squares lower bound for the planted clique
  problem.
\newblock {\em SIAM Journal on Computing}, 48(2):687--735, 2019.

\bibitem[BHL{\etalchar{+}}25]{bencs2025zeros}
Ferenc Bencs, Brice Huang, Daniel~Z Lee, Kuikui Liu, and Guus Regts.
\newblock On zeros and algorithms for disordered systems: mean-field spin
  glasses.
\newblock {\em arXiv preprint arXiv:2507.15616}, 2025.

\bibitem[BK19]{brandao2019finite}
Fernando~GSL Brandao and Michael~J Kastoryano.
\newblock Finite correlation length implies efficient preparation of quantum
  thermal states.
\newblock {\em Communications in Mathematical Physics}, 365(1):1--16, 2019.

\bibitem[BLMT24]{bakshi2024high}
Ainesh Bakshi, Allen Liu, Ankur Moitra, and Ewin Tang.
\newblock High-temperature gibbs states are unentangled and efficiently
  preparable.
\newblock In {\em 2024 IEEE 65th Annual Symposium on Foundations of Computer
  Science (FOCS)}, pages 1027--1036. IEEE, 2024.

\bibitem[Bol14]{bolthausen2014iterative}
Erwin Bolthausen.
\newblock An iterative construction of solutions of the tap equations for the
  sherrington--kirkpatrick model.
\newblock {\em Communications in Mathematical Physics}, 325(1):333--366, 2014.

\bibitem[BS20]{baldwin2020quenched}
CL~Baldwin and B~Swingle.
\newblock Quenched vs annealed: Glassiness from sk to syk.
\newblock {\em Physical Review X}, 10(3):031026, 2020.

\bibitem[BS26]{bettaque2026magic}
Val{\'e}rie Bettaque and Brian Swingle.
\newblock Magic and wormholes in the sachdev-ye-kitaev model.
\newblock {\em arXiv preprint arXiv:2602.12339}, 2026.

\bibitem[CBDL25]{chen2025quantum}
Zherui Chen, Joao Basso, Zhiyan Ding, and Lin Lin.
\newblock Quantum replica exchange.
\newblock {\em arXiv preprint arXiv:2510.07291}, 2025.

\bibitem[Cha10]{chatterjee2010spin}
Sourav Chatterjee.
\newblock Spin glasses and stein’s method.
\newblock {\em Probability theory and related fields}, 148(3):567--600, 2010.

\bibitem[Che13]{chen2013aizenman}
Wei-Kuo Chen.
\newblock The {Aizenman-Sims-Starr} scheme and parisi formula for mixed
  $p$-spin spherical models.
\newblock {\em Electron. J. Probab.}, 18(none), January 2013.

\bibitem[CKBG25]{chen2025efficient}
Chi-Fang Chen, Michael Kastoryano, Fernando~GSL Brand{\~a}o, and Andr{\'a}s
  Gily{\'e}n.
\newblock Efficient quantum thermal simulation.
\newblock {\em Nature}, 646(8085):561--566, 2025.

\bibitem[CO10]{coja2010better}
Amin Coja-Oghlan.
\newblock A better algorithm for random k-sat.
\newblock {\em SIAM Journal on Computing}, 39(7):2823--2864, 2010.

\bibitem[Cra07]{crawford2007thermodynamics}
Nicholas Crawford.
\newblock Thermodynamics and universality for mean field quantum spin glasses.
\newblock {\em Communications in mathematical physics}, 274(3):821--839, 2007.

\bibitem[CT21]{chen2021convergence}
Wei-Kuo Chen and Si~Tang.
\newblock On convergence of the cavity and bolthausen’s tap iterations to the
  local magnetization.
\newblock {\em Communications in Mathematical Physics}, 386(2):1209--1242,
  2021.

\bibitem[Del73]{delsarte1973algebraic}
Philippe Delsarte.
\newblock An algebraic approach to the association schemes of coding theory.
\newblock {\em Philips Res. Rep. Suppl.}, 10:vi+--97, 1973.

\bibitem[DLL25]{ding2025efficient}
Zhiyan Ding, Bowen Li, and Lin Lin.
\newblock Efficient quantum gibbs samplers with kubo--martin--schwinger
  detailed balance condition.
\newblock {\em Communications in Mathematical Physics}, 406(3):67, 2025.

\bibitem[Dob96]{dobrushin1996estimates}
Roland~L Dobrushin.
\newblock Estimates of semi-invariants for the ising model at low temperatures.
\newblock {\em Translations of the American Mathematical Society-Series 2},
  177:59--82, 1996.

\bibitem[DS85]{dobrushin1985completely}
Roland~L Dobrushin and Senya~B Shlosman.
\newblock Completely analytical gibbs fields.
\newblock In {\em Statistical Physics and Dynamical Systems: Rigorous Results},
  pages 371--403. Springer, 1985.

\bibitem[DZPL25]{ding2025end}
Zhiyan Ding, Yongtao Zhan, John Preskill, and Lin Lin.
\newblock End-to-end efficient quantum thermal and ground state preparation
  made simple.
\newblock {\em arXiv preprint arXiv:2508.05703}, 2025.

\bibitem[EAMS22]{el2022sampling}
Ahmed El~Alaoui, Andrea Montanari, and Mark Sellke.
\newblock Sampling from the sherrington-kirkpatrick gibbs measure via
  algorithmic stochastic localization.
\newblock In {\em 2022 IEEE 63rd Annual Symposium on Foundations of Computer
  Science (FOCS)}, pages 323--334. IEEE, 2022.

\bibitem[EAMS25]{el2025sampling}
Ahmed El~Alaoui, Andrea Montanari, and Mark Sellke.
\newblock Sampling from mean-field gibbs measures via diffusion processes.
\newblock {\em Probability and Mathematical Physics}, 6(3):961--1022, 2025.

\bibitem[EM18]{eldar2018approximating}
Lior Eldar and Saeed Mehraban.
\newblock Approximating the permanent of a random matrix with vanishing mean.
\newblock In {\em 2018 IEEE 59th Annual Symposium on Foundations of Computer
  Science (FOCS)}, pages 23--34. IEEE, 2018.

\bibitem[Fis65]{fisher1965nature}
Michael~E. Fisher.
\newblock The nature of critical points.
\newblock In {\em Lectures in Theoretical Physics}, volume~7C. University of
  Colorado Press, Boulder, 1965.

\bibitem[FP07]{fernandez2007cluster}
Roberto Fern{\'a}ndez and Aldo Procacci.
\newblock Cluster expansion for abstract polymer models. new bounds from an old
  approach.
\newblock {\em Communications in Mathematical Physics}, 274(1):123--140, 2007.

\bibitem[FTW19]{feng2019spectrum}
Renjie Feng, Gang Tian, and Dongyi Wei.
\newblock Spectrum of syk model.
\newblock {\em Peking Mathematical Journal}, 2(1):41--70, 2019.

\bibitem[Gam21]{gamarnik2021overlap}
David Gamarnik.
\newblock The overlap gap property: A topological barrier to optimizing over
  random structures.
\newblock {\em Proceedings of the National Academy of Sciences},
  118(41):e2108492118, 2021.

\bibitem[GCDK24]{gilyen2024quantum}
Andr{\'a}s Gily{\'e}n, Chi-Fang Chen, Joao~F Doriguello, and Michael~J
  Kastoryano.
\newblock Quantum generalizations of glauber and metropolis dynamics.
\newblock {\em arXiv preprint arXiv:2405.20322}, 2024.

\bibitem[GLL20]{guo2020zeros}
Heng Guo, Jingcheng Liu, and Pinyan Lu.
\newblock Zeros of ferromagnetic 2-spin systems.
\newblock In {\em Proceedings of the Fourteenth Annual ACM-SIAM Symposium on
  Discrete Algorithms}, pages 181--192. SIAM, 2020.

\bibitem[{Goo}25]{google2025observation}
{Google Quantum AI and Collaborators}.
\newblock Observation of constructive interference at the edge of quantum
  ergodicity.
\newblock {\em Nature}, 646(8086):825--830, Oct 2025.

\bibitem[GPS00]{georges2000mean}
Antoine Georges, Olivier Parcollet, and Subir Sachdev.
\newblock Mean field theory of a quantum heisenberg spin glass.
\newblock {\em Physical review letters}, 85(4):840, 2000.

\bibitem[GPS01]{georges2001quantum}
Antoine Georges, Olivier Parcollet, and Subir Sachdev.
\newblock Quantum fluctuations of a nearly critical heisenberg spin glass.
\newblock {\em Physical Review B}, 63(13):134406, 2001.

\bibitem[Gre69]{greenberg1969correlation}
William Greenberg.
\newblock Correlation functionals of infinite volume quantum spin systems.
\newblock {\em Communications in Mathematical Physics}, 11(4):314--320, 1969.

\bibitem[GT02]{guerra2002thermodynamic}
Francesco Guerra and Fabio~Lucio Toninelli.
\newblock The thermodynamic limit in mean field spin glass models.
\newblock {\em Communications in Mathematical Physics}, 230(1):71--79, 2002.

\bibitem[Gue03]{guerra2003broken}
Francesco Guerra.
\newblock Broken replica symmetry bounds in the mean field spin glass model.
\newblock {\em Communications in mathematical physics}, 233(1):1--12, 2003.

\bibitem[Has23]{hastings2023field}
Matthew~B Hastings.
\newblock Field theory and the sum-of-squares for quantum systems.
\newblock {\em arXiv preprint arXiv:2302.14006}, 2023.

\bibitem[HE23]{hangleiter2023computational}
Dominik Hangleiter and Jens Eisert.
\newblock Computational advantage of quantum random sampling.
\newblock {\em Reviews of Modern Physics}, 95(3):035001, 2023.

\bibitem[Het16]{hetterich2016analysing}
Samuel Hetterich.
\newblock Analysing survey propagation guided decimation on random formulas.
\newblock {\em arXiv preprint arXiv:1602.08519}, 2016.

\bibitem[HMS20]{harrow2020classical}
Aram~W Harrow, Saeed Mehraban, and Mehdi Soleimanifar.
\newblock Classical algorithms, correlation decay, and complex zeros of
  partition functions of quantum many-body systems.
\newblock In {\em Proceedings of the 52nd Annual ACM SIGACT Symposium on Theory
  of Computing}, pages 378--386, 2020.

\bibitem[HO22]{hastings2022optimizing}
Matthew~B Hastings and Ryan O'Donnell.
\newblock Optimizing strongly interacting fermionic hamiltonians.
\newblock In {\em Proceedings of the 54th annual ACM SIGACT symposium on theory
  of computing}, pages 776--789, 2022.

\bibitem[Hop18]{hopkins2018statistical}
Samuel Hopkins.
\newblock {\em Statistical inference and the sum of squares method}.
\newblock Cornell University, 2018.

\bibitem[HS17]{hopkins2017bayesian}
Samuel~B Hopkins and David Steurer.
\newblock Bayesian estimation from few samples: community detection and related
  problems.
\newblock {\em arXiv preprint arXiv:1710.00264}, 2017.

\bibitem[JCSH25]{jiang2025positive}
Jiaqing Jiang, Jielun Chen, Norbert Schuch, and Dominik Hangleiter.
\newblock Positive bias makes tensor-network contraction tractable.
\newblock In {\em Proceedings of the 57th Annual ACM Symposium on Theory of
  Computing}, pages 471--482, 2025.

\bibitem[JI24]{jiang2024quantum}
Jiaqing Jiang and Sandy Irani.
\newblock Quantum metropolis sampling via weak measurement.
\newblock {\em arXiv preprint arXiv:2406.16023}, 2024.

\bibitem[KGKB25]{king2025triply}
Robbie King, David Gosset, Robin Kothari, and Ryan Babbush.
\newblock Triply efficient shadow tomography.
\newblock In {\em Proceedings of the 2025 Annual ACM-SIAM Symposium on Discrete
  Algorithms (SODA)}, pages 914--946. SIAM, 2025.

\bibitem[Kit15a]{kitaev2015SYK}
Alexei Kitaev.
\newblock A simple model of quantum holography (part 1).
\newblock Talk at the Kavli Institute for Theoretical Physics program
  ``Entanglement in Strongly-Correlated Quantum Matter'', April 2015.
\newblock \url{https://online.kitp.ucsb.edu/online/entangled15/}.

\bibitem[Kit15b]{kitaev2015talks}
AY~Kitaev.
\newblock Talks at kitp, university of california, santa barbara.
\newblock {\em Entanglement in Strongly-Correlated Quantum Matter}, 2015.

\bibitem[KL21]{khramtsov2021spectral}
Mikhail Khramtsov and Elena Lanina.
\newblock Spectral form factor in the double-scaled syk model.
\newblock {\em Journal of High Energy Physics}, 2021(3):1--38, 2021.

\bibitem[KP86]{kotecky1986cluster}
Roman Koteck{\`y} and David Preiss.
\newblock Cluster expansion for abstract polymer models.
\newblock {\em Communications in Mathematical Physics}, 103(3):491--498, 1986.

\bibitem[KS94]{kirkpatrick1994critical}
Scott Kirkpatrick and Bart Selman.
\newblock Critical behavior in the satisfiability of random boolean
  expressions.
\newblock {\em Science}, 264(5163):1297--1301, 1994.

\bibitem[KS18]{kitaev2018soft}
Alexei Kitaev and S~Josephine Suh.
\newblock The soft mode in the {Sachdev}-{Ye}-{Kitaev} model and its gravity
  dual.
\newblock {\em J. High Energy Phys.}, 2018(5):1--68, 2018.

\bibitem[LCB18]{liu2018quantum}
Chunxiao Liu, Xiao Chen, and Leon Balents.
\newblock Quantum entanglement of the sachdev-ye-kitaev models.
\newblock {\em Physical Review B}, 97(24):245126, 2018.

\bibitem[LO69]{larkin1969quasiclassical}
Anatoly~I Larkin and Yu~N Ovchinnikov.
\newblock Quasiclassical method in the theory of superconductivity.
\newblock {\em Sov Phys JETP}, 28(6):1200--1205, 1969.

\bibitem[LRRS21]{leschke2021free}
Hajo Leschke, Sebastian Rothlauf, Rainer Ruder, and Wolfgang Spitzer.
\newblock The free energy of a quantum sherrington--kirkpatrick spin-glass
  model for weak disorder.
\newblock {\em Journal of Statistical Physics}, 182:1--41, 2021.

\bibitem[LSS19]{liu2019ising}
Jingcheng Liu, Alistair Sinclair, and Piyush Srivastava.
\newblock The ising partition function: Zeros and deterministic approximation:
  J. liu et al.
\newblock {\em Journal of Statistical Physics}, 174(2):287--315, 2019.

\bibitem[LSS25]{liu2025correlation}
Jingcheng Liu, Alistair Sinclair, and Piyush Srivastava.
\newblock Correlation decay and partition function zeros: Algorithms and phase
  transitions.
\newblock {\em SIAM Journal on Computing}, 54(4):FOCS19--200, 2025.

\bibitem[LY52]{lee1952statistical}
Tsung-Dao Lee and Chen-Ning Yang.
\newblock Statistical theory of equations of state and phase transitions. ii.
  lattice gas and ising model.
\newblock {\em Physical Review}, 87(3):410, 1952.

\bibitem[MEAG{\etalchar{+}}20]{mcardle2020quantum}
Sam McArdle, Suguru Endo, Al{\'a}n Aspuru-Guzik, Simon~C Benjamin, and Xiao
  Yuan.
\newblock Quantum computational chemistry.
\newblock {\em Reviews of Modern Physics}, 92(1):015003, 2020.

\bibitem[MH21]{mann2021efficient}
Ryan~L Mann and Tyler Helmuth.
\newblock Efficient algorithms for approximating quantum partition functions.
\newblock {\em Journal of Mathematical Physics}, 62(2), 2021.

\bibitem[MM21]{maldacena2021syk}
Juan Maldacena and Alexey Milekhin.
\newblock Syk wormhole formation in real time.
\newblock {\em Journal of High Energy Physics}, 2021(4):258, 2021.

\bibitem[MM24]{mann2024algorithmic}
Ryan~L Mann and Romy~M Minko.
\newblock Algorithmic cluster expansions for quantum problems.
\newblock {\em PRX Quantum}, 5(1):010305, 2024.

\bibitem[MMZ05]{mezard2005clustering}
Marc M{\'e}zard, Thierry Mora, and Riccardo Zecchina.
\newblock Clustering of solutions in the random satisfiability problem.
\newblock {\em Physical Review Letters}, 94(19):197205, 2005.

\bibitem[Mon25]{montanari2025optimization}
Andrea Montanari.
\newblock Optimization of the sherrington--kirkpatrick hamiltonian.
\newblock {\em SIAM Journal on Computing}, 54(4):FOCS19--1, 2025.

\bibitem[MPV87]{mezard1987spin}
Marc M{\'e}zard, Giorgio Parisi, and Miguel~Angel Virasoro.
\newblock {\em Spin glass theory and beyond: An Introduction to the Replica
  Method and Its Applications}, volume~9.
\newblock World Scientific Publishing Company, 1987.

\bibitem[MS16]{maldacena2016remarks}
Juan Maldacena and Douglas Stanford.
\newblock Remarks on the sachdev-ye-kitaev model.
\newblock {\em Physical Review D}, 94(10):106002, 2016.

\bibitem[MSS16]{maldacena2016bound}
Juan Maldacena, Stephen~H Shenker, and Douglas Stanford.
\newblock A bound on chaos.
\newblock {\em Journal of High Energy Physics}, 2016(8):106, 2016.

\bibitem[MZK{\etalchar{+}}99]{monasson1999determining}
R{\'e}mi Monasson, Riccardo Zecchina, Scott Kirkpatrick, Bart Selman, and
  Lidror Troyansky.
\newblock Determining computational complexity from characteristic ‘phase
  transitions’.
\newblock {\em Nature}, 400(6740):133--137, 1999.

\bibitem[Pan13]{panchenko2013sherrington}
Dmitry Panchenko.
\newblock {\em The sherrington-kirkpatrick model}.
\newblock Springer Science \& Business Media, 2013.

\bibitem[Pan14]{panchenko2014parisi}
Dmitry Panchenko.
\newblock The parisi formula for mixed $p$-spin models.
\newblock {\em Ann. Probab.}, 42(3):946--958, May 2014.

\bibitem[Par79]{parisi1979infinite}
Giorgio Parisi.
\newblock Infinite number of order parameters for spin-glasses.
\newblock {\em Physical Review Letters}, 43(23):1754, 1979.

\bibitem[Par82]{park1982cluster}
Yong~Moon Park.
\newblock The cluster expansion for classical and quantum lattice systems.
\newblock {\em Journal of Statistical Physics}, 27(3):553--576, 1982.

\bibitem[PG99]{parcollet1999non}
Olivier Parcollet and Antoine Georges.
\newblock Non-fermi-liquid regime of a doped mott insulator.
\newblock {\em Physical Review B}, 59(8):5341, 1999.

\bibitem[Ple82]{plefka1982convergence}
Timm Plefka.
\newblock Convergence condition of the tap equation for the infinite-ranged
  ising spin glass model.
\newblock {\em Journal of Physics A: Mathematical and general},
  15(6):1971--1978, 1982.

\bibitem[Ple02]{plefka2002modified}
T~Plefka.
\newblock Modified thouless-anderson-palmer equations for the
  sherrington-kirkpatrick spin glass: Numerical solutions.
\newblock {\em Physical Review B}, 65(22):224206, 2002.

\bibitem[PR17]{patel2017deterministic}
Viresh Patel and Guus Regts.
\newblock Deterministic polynomial-time approximation algorithms for partition
  functions and graph polynomials.
\newblock {\em SIAM Journal on Computing}, 46(6):1893--1919, 2017.

\bibitem[PR19]{peters2019conjecture}
Han Peters and Guus Regts.
\newblock On a conjecture of sokal concerning roots of the independence
  polynomial.
\newblock {\em Michigan Mathematical Journal}, 68(1):33--55, 2019.

\bibitem[PWD{\etalchar{+}}21]{pan2021yukawa}
Gaopei Pan, Wei Wang, Andrew Davis, Yuxuan Wang, and Zi~Yang Meng.
\newblock Yukawa-syk model and self-tuned quantum criticality.
\newblock {\em Physical Review Research}, 3(1):013250, 2021.

\bibitem[RCTJ25]{ramkumar2025high}
Akshar Ramkumar, Yiyi Cai, Yu~Tong, and Jiaqing Jiang.
\newblock High-temperature fermionic gibbs states are mixtures of gaussian
  states.
\newblock {\em arXiv preprint arXiv:2505.09730}, 2025.

\bibitem[Reg23]{regts2023absence}
Guus Regts.
\newblock Absence of zeros implies strong spatial mixing.
\newblock {\em Probability Theory and Related Fields}, 186(1):621--641, 2023.

\bibitem[RFA24]{rouze2024optimal}
Cambyse Rouz{\'e}, Daniel~Stilck Fran{\c{c}}a, and {\'A}lvaro~M Alhambra.
\newblock Optimal quantum algorithm for gibbs state preparation.
\newblock {\em arXiv preprint arXiv:2411.04885}, 2024.

\bibitem[RFA25]{rouze2025efficient}
Cambyse Rouz{\'e}, Daniel~Stilck Fran{\c{c}}a, and {\'A}lvaro~M Alhambra.
\newblock Efficient thermalization and universal quantum computing with quantum
  gibbs samplers.
\newblock In {\em Proceedings of the 57th Annual ACM Symposium on Theory of
  Computing}, pages 1488--1495, 2025.

\bibitem[RW24]{rajakumar2024gibbs}
Joel Rajakumar and James~D Watson.
\newblock Gibbs sampling gives quantum advantage at constant temperatures with
  $ o (1) $-local hamiltonians.
\newblock {\em arXiv preprint arXiv:2408.01516}, 2024.

\bibitem[RY17]{roberts2017chaos}
Daniel~A Roberts and Beni Yoshida.
\newblock Chaos and complexity by design.
\newblock {\em Journal of High Energy Physics}, 2017(4):1--64, 2017.

\bibitem[Sac23]{sachdev2023quantum}
Subir Sachdev.
\newblock {\em Quantum phases of matter}.
\newblock Cambridge University Press, 2023.

\bibitem[SBSSH16]{swingle2016measuring}
Brian Swingle, Gregory Bentsen, Monika Schleier-Smith, and Patrick Hayden.
\newblock Measuring the scrambling of quantum information.
\newblock {\em Physical Review A}, 94(4):040302, 2016.

\bibitem[SK75]{sherrington1975solvable}
David Sherrington and Scott Kirkpatrick.
\newblock Solvable model of a spin-glass.
\newblock {\em Physical review letters}, 35(26):1792, 1975.

\bibitem[SKS{\etalchar{+}}25]{schuster2025cooling}
Thomas Schuster, Bryce Kobrin, Vincent~P Su, Hugo Marrochio, and Norman~Y Yao.
\newblock Cooling the sachdev-ye-kitaev model using thermofield double states.
\newblock {\em arXiv preprint arXiv:2511.09620}, 2025.

\bibitem[SO12]{szabo2012modern}
Attila Szabo and Neil~S Ostlund.
\newblock {\em Modern quantum chemistry: introduction to advanced electronic
  structure theory}.
\newblock Courier Corporation, 2012.

\bibitem[SS21]{shao2021contraction}
Shuai Shao and Yuxin Sun.
\newblock Contraction: A unified perspective of correlation decay and
  zero-freeness of 2-spin systems.
\newblock {\em Journal of Statistical Physics}, 185(2):12, 2021.

\bibitem[SY93]{sachdev1993gapless}
Subir Sachdev and Jinwu Ye.
\newblock Gapless spin-fluid ground state in a random quantum heisenberg
  magnet.
\newblock {\em Physical Review Letters}, 70(21):3339--3342, 1993.

\bibitem[T{\etalchar{+}}15]{tropp2015introduction}
Joel~A Tropp et~al.
\newblock An introduction to matrix concentration inequalities.
\newblock {\em Foundations and Trends in Machine Learning}, 8(1-2):1--230,
  2015.

\bibitem[Tak11]{takahashi2011replica}
Kazutaka Takahashi.
\newblock Replica analysis of partition-function zeros in spin-glass models.
\newblock {\em Journal of Physics A: Mathematical and Theoretical},
  44(23):235001, 2011.

\bibitem[Tal03]{talagrand2003spin}
Michel Talagrand.
\newblock {\em Spin glasses: a challenge for mathematicians: cavity and mean
  field models}, volume~46.
\newblock Springer Science \& Business Media, 2003.

\bibitem[Tal06a]{talagrand2006free}
Michel Talagrand.
\newblock Free energy of the spherical mean field model.
\newblock {\em Probability theory and related fields}, 134(3):339--382, 2006.

\bibitem[Tal06b]{talagrand2006parisi}
Michel Talagrand.
\newblock The parisi formula.
\newblock {\em Annals of mathematics}, pages 221--263, 2006.

\bibitem[Tal10]{talagrand2010mean}
Michel Talagrand.
\newblock {\em Mean field models for spin glasses: Volume I: Basic examples},
  volume~54.
\newblock Springer Science \& Business Media, 2010.

\bibitem[TAP77]{thouless1977solution}
D.~J. Thouless, P.~W. Anderson, and R.~G. Palmer.
\newblock Solution of {`Solvable model of a spin glass'}.
\newblock {\em The Philosophical Magazine: A Journal of Theoretical
  Experimental and Applied Physics}, 35(3):593--601, 1977.

\bibitem[TO13]{takahashi2013zeros}
K~Takahashi and T~Obuchi.
\newblock Zeros of the partition function and dynamical singularities in
  spin-glass systems.
\newblock {\em Journal of Physics: Conference Series}, 473(1):012023, dec 2013.

\bibitem[TOV{\etalchar{+}}11]{temme2011quantum}
Kristan Temme, Tobias~J Osborne, Karl~G Vollbrecht, David Poulin, and Frank
  Verstraete.
\newblock Quantum metropolis sampling.
\newblock {\em Nature}, 471(7336):87--90, 2011.

\bibitem[TW05]{troyer2005computational}
Matthias Troyer and Uwe-Jens Wiese.
\newblock Computational complexity and fundamental limitations to fermionic
  quantum monte carlo simulations.
\newblock {\em Physical review letters}, 94(17):170201, 2005.

\bibitem[Vid03]{vidal2003efficient}
Guifr{\'e} Vidal.
\newblock Efficient classical simulation of slightly entangled quantum
  computations.
\newblock {\em Physical review letters}, 91(14):147902, 2003.

\bibitem[YYZ22]{yao2022polynomial}
Penghui Yao, Yitong Yin, and Xinyuan Zhang.
\newblock Polynomial-time approximation of zero-free partition functions.
\newblock {\em arXiv preprint arXiv:2201.12772}, 2022.

\bibitem[ZBC23]{zhang2023dissipative}
Daniel Zhang, Jan~Lukas Bosse, and Toby Cubitt.
\newblock Dissipative quantum gibbs sampling.
\newblock {\em arXiv preprint arXiv:2304.04526}, 2023.

\bibitem[Zha19]{zhang2019evaporation}
Pengfei Zhang.
\newblock Evaporation dynamics of the sachdev-ye-kitaev model.
\newblock {\em Physical Review B}, 100(24):245104, 2019.

\bibitem[ZK26]{zlokapa2026syk}
Alexander Zlokapa and Bobak~T Kiani.
\newblock Syk thermal expectations are classically easy at any temperature.
\newblock {\em arXiv preprint arXiv:2602.22619}, 2026.

\bibitem[ZKA25]{zlokapa2025average}
Alexander Zlokapa, Bobak~T Kiani, and Eric~R Anschuetz.
\newblock Average-case quantum complexity from glassiness.
\newblock {\em arXiv preprint arXiv:2510.08497}, 2025.

\end{thebibliography}
\bibliographystyle{alpha}
\newpage

\appendix

\section{Obstructions to classical algorithms for SYK}\label{sec:obstruct}

In this appendix we record three consequences of prior work that hold at any arbitrarily small constant $\beta > 0$. The first gives the purification complexity lower bound used in Remark~\ref{rem:obstruct}(1). The second shows that the Gibbs state cannot be written as a convex combination of superpositions of $\exp[o(n^{1/4})]$ orthogonal Gaussian states, as claimed in Remark~\ref{rem:obstruct}(2). For Remark~\ref{rem:obstruct}(3), we show that local observables fluctuate inverse polynomially with high probability over independent SYK instances. The first two results follow from small circuits and Gaussian states having poor energies compared to the typical energy of an eigenstate sampled at constant temperature. We formalize this using the following results from~\cite{anschuetz2025strongly} and using the notation introduced in Sec.~\ref{sec:prelim}, i.e., the Hamiltonian is
\begin{align}
    Z(\beta) = \Tr e^{-\beta H}, \quad H = \sum_{I \in \binom{[n]}{q}} J_I \psi_I, \quad J_I \sim \cN(0, \sigma^2), \quad \sigma^2 = \frac{(q-1)!}{n^{q-1}}, \quad \{\psi_i, \psi_j\} = \delta_{ij}, \quad \cJ = \sqrt{\frac{q}{2^q}}
\end{align}
and the commutation index is
\begin{align}
    \Delta = \sup_{\ket\phi} \binom{n}{q}^{-1} \sum_{I \in \binom{[n]}{q}} \bra{\phi}\psi_I\ket{\phi}^2.
\end{align}

\subsection{Negative Gibbs energy density at constant temperature}
\begin{theorem}[Commutation index and real self-averaging, Theorem I.2 of~\cite{anschuetz2025strongly}]\label{thm:ext-selfavg}
    For any constant $\beta \geq 0$,
    \begin{align}
        \left|\frac{1}{n}\E \log Z(\beta) - \frac{1}{n} \log \E \, Z(\beta)\right| = O(n^{-q/2}).
    \end{align}
\end{theorem}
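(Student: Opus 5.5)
The plan is the Gaussian-concentration (Lipschitz) argument of \cite{anschuetz2025strongly}, driven by the commutation index bound of Theorem~\ref{thm:comm}. For real $\beta \geq 0$, view $F(J) = \log Z(\beta)$ as a function of the Gaussian vector $J = (J_I)_{I}$. Writing $J_I = \sigma g_I$ with $g_I$ standard normal, the first step is to compute the gradient in $g$. By Duhamel's formula, $\partial_{J_I} \log Z = -\beta \Tr(\psi_I \rho_\beta)$, so
\begin{align}
    \norm{\nabla_g F}_2^2 = \sigma^2 \beta^2 \sum_{I} \Tr(\psi_I\rho_\beta)^2 \leq \sigma^2\beta^2 \binom{n}{q}\Delta.
\end{align}
For the supremum over pure states in \eqref{eq:commind} to control a mixed $\rho_\beta$, apply Cauchy--Schwarz to the eigen-decomposition of $\rho_\beta$; this shows the sup over pure states equals the sup over mixed states. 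With $\sigma^2\binom{n}{q} = O(n)$ and $\Delta = O(n^{-q/2})$, $F$ is $L$-Lipschitz with $L^2 = O(\beta^2 n^{1-q/2})$.

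The second step is to compare $\E \log Z$ with $\log \E Z$ using Gaussian concentration. For $\bar F = \E F$, the Gaussian Lipschitz bound gives $\E e^{F - \bar F} \leq e^{L^2/2}$. Hence $\log \E Z - \E\log Z = \log \E e^{F-\bar F} \leq L^2/2 = O(n^{1-q/2})$. Jensen gives the reverse inequality $\E \log Z \leq \log \E Z$ for free. Dividing by $n$ yields a discrepancy of $O(n^{-q/2})$, which is the claim.

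The main obstacle is the gradient bound, namely justifying $\Delta(\cS_q^n) = O(n^{-q/2})$. This is taken as given from Theorem~\ref{thm:comm}, where it comes from the Lov\'asz theta bound on the Johnson scheme. The remaining checks are routine: the Lipschitz constant must hold uniformly in $J$, which it does because the bound above is deterministic, and the $\log Z$ versus $\log \wh Z$ normalization is irrelevant because the two differ by a constant that cancels in the difference.
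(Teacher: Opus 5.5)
Your proposal is correct and follows the route the paper sketches in Sec.~\ref{sec:prior} and reuses in the proof of Lemma~\ref{lem:neg-energy}. There, the commutation index bounds the gradient of $\log Z$ in the standardized disorder by $\sigma^2\beta^2\binom{n}{q}\Delta = O(\beta^2 n^{1-q/2})$, and Jensen's inequality combined with the Gaussian Lipschitz bound $\E e^{F-\E F}\le e^{L^2/2}$ gives $0\le \log\E Z-\E\log Z = O(n^{1-q/2})$. The side points you flag are handled correctly: the mixed-versus-pure supremum, the uniformity of the Lipschitz constant in $J$, and the cancellation of the $2^{n/2}$ normalization.
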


\begin{theorem}[Annealed exponential lower bound, Lemma D.1 of~\cite{anschuetz2025strongly}]\label{thm:ext-annealed}
    Let
    \begin{align}
        \widetilde H = \frac{1}{\sqrt m}\sum_{i=1}^m g_i A_i,
        \qquad
        g_i \sim \cN(0,1),
    \end{align}
    where each $A_i$ is Hermitian and $A_i^2 = I$.  Define
    \begin{align}
        h_{\mathrm{comm}} = \frac12 \sup_i \sum_{j=1}^m \|[A_i,A_j]\|.
    \end{align}
    Then there is an absolute constant $c_1>0$ such that for every real $t$,
    \begin{align}
        \E \,\tr(e^{t\widetilde H})
        \ge
        \exp[
            \frac{t^2}{2}
            \left(
                1-\frac{c_1 t^2 h_{\mathrm{comm}}}{2m}
            \right)
        ].
    \end{align}
\end{theorem}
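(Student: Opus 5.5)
The plan is a differential inequality in $t$, obtained by Gaussian integration by parts, and then integration from $0$ to $t$. Set $F(t) = \E\,\tr(e^{t\widetilde H})$; then $F(0)=1$ and $F>0$. By symmetry of the Gaussian law ($g\mapsto -g$) it suffices to treat $t\ge 0$. Differentiating gives $F'(t) = \E\,\tr(\widetilde H e^{t\widetilde H}) = m^{-1/2}\sum_i \E\, g_i \tr(A_i e^{t\widetilde H})$. Gaussian integration by parts, together with Duhamel's formula $\partial_{g_i} e^{t\widetilde H} = \frac{t}{\sqrt m}\int_0^1 e^{st\widetilde H}A_i e^{(1-s)t\widetilde H}\,ds$, then yields
\begin{align}
F'(t) = \frac{t}{m}\sum_{i=1}^m \int_0^1 \E\,\tr\lr{A_i e^{st\widetilde H} A_i e^{(1-s)t\widetilde H}}\,ds .
\end{align}
If each $A_i$ commuted with $\widetilde H$, then $A_i^2=I$ would make every integrand equal to $\tr e^{t\widetilde H}$, and we would get $F' = tF$, i.e.\ $F = e^{t^2/2}$. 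So I will write $A_i e^{st\widetilde H}A_i = e^{st\widetilde H} + A_i[e^{st\widetilde H},A_i]$ and show the correction is $O(t^3 h_{\mathrm{comm}}/m)\,F(t)$.

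For the correction I would expand the commutator once more by Duhamel:
\begin{align}
[e^{st\widetilde H},A_i] = \int_0^1 e^{ust\widetilde H}\,[st\widetilde H,A_i]\,e^{(1-u)st\widetilde H}\,du, \qquad [\widetilde H, A_i] = m^{-1/2}\sum_j g_j [A_j,A_i].
\end{align}
This produces a factor $g_j$ multiplying a trace of a product of exponentials of $\widetilde H$ and bounded operators. A crude bound at this point would lose the $1/m$ and the commutator structure. I would therefore apply Gaussian integration by parts a second time in $g_j$. That replaces $g_j$ by $\frac{t}{\sqrt m}$ times the derivative of the exponentials, and each derivative inserts an $A_j$ inside a further Duhamel integral. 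Every resulting term then has the form
\begin{align}
\frac{t^3}{m^2}\,\E\,\tr\lr{e^{a_0 t\widetilde H}B_1 e^{a_1 t\widetilde H}B_2\cdots B_p e^{a_p t\widetilde H}},
\end{align}
with $a_\ell\ge 0$ and $\sum_\ell a_\ell = 1$. Here the $B$'s are $A_i$, $A_j$ and exactly one commutator $[A_j,A_i]$, and only a bounded number of configurations ($p\le 4$) occurs.

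The key estimate is the generalized Hölder inequality for Schatten norms with exponents $1/a_\ell$. For real $t$, each $e^{a_\ell t\widetilde H}$ is positive and $\|e^{a_\ell t\widetilde H}\|_{1/a_\ell} = (\Tr e^{t\widetilde H})^{a_\ell}$, so
\begin{align}
\abs{\tr\lr{e^{a_0 t\widetilde H}B_1\cdots B_p e^{a_p t\widetilde H}}} \le \prod_\ell \norm{B_\ell}\,\tr e^{t\widetilde H} \le \norm{[A_i,A_j]}\,\tr e^{t\widetilde H}.
\end{align}
This bound holds pointwise in the disorder, so taking expectations gives a bound by $\norm{[A_i,A_j]}\,F(t)$. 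Summing over $j$ produces at most $2h_{\mathrm{comm}}$, and summing over $i$ cancels one factor $1/m$. Altogether
\begin{align}
F'(t) \ge \lr{t - \frac{c_1 t^3 h_{\mathrm{comm}}}{m}}F(t)
\end{align}
for an absolute constant $c_1$ (after adjusting constants). Integrating $(\log F)'$ from $0$ to $t$ gives $\log F(t) \ge \frac{t^2}{2} - \frac{c_1 t^4 h_{\mathrm{comm}}}{4m}$, which is the claimed bound.

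I expect the main obstacle to be the bookkeeping in the second integration by parts. One has to check that every term it produces still contains exactly one commutator $[A_j,A_i]$; terms where the derivative hits the outer exponentials must either carry that commutator or cancel, and cannot leave a bare $g_j$. One also has to check that all nested Duhamel parameters combine into exponents $a_\ell\ge0$ summing to one, so that Hölder applies uniformly. If some term fails to carry the commutator, my first fallback is to bound it pointwise using Cauchy--Schwarz in $g$ together with $\E g_j^2 = 1$. That route would give a weaker, but still commutator-controlled, estimate.
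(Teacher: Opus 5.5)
The paper does not prove this statement. It imports it verbatim as Lemma D.1 of \cite{anschuetz2025strongly} and uses it only as a black box in Corollary~\ref{cor:annealed-syk}, so there is no in-paper route to compare against. Judged on its own, your argument is correct and complete. You have identified the step that makes it work: the Hölder bound controls every remainder pointwise by the \emph{same} random variable $\tr e^{t\widetilde H}$. Taking expectations therefore returns exactly $F(t)$, and no decoupling of $g$ from the trace is needed.

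The bookkeeping you flag as the main obstacle does not actually arise. After the first integration by parts and the commutator Duhamel step, the correction is
\[
F'(t)-tF(t)=\frac{t^2}{m^{3/2}}\sum_{i,j}\int_0^1 s\,ds\int_0^1 du\;\E\Big[g_j\,\tr\big(A_i\,e^{ust\widetilde H}[A_j,A_i]\,e^{(1-us)t\widetilde H}\big)\Big].
\]
Since $[A_j,A_i]$ is deterministic, $\partial_{g_j}$ acts only on the two exponentials. Both resulting terms therefore carry the commutator. Their exponent triples are $(vus,(1-v)us,1-us)$ and $(us,w(1-us),(1-w)(1-us))$, which are nonnegative and sum to one. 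No bare $g_j$ survives, so the Cauchy--Schwarz fallback is never needed.

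The Duhamel prefactors $us$ and $1-us$ add to one. Hence, pointwise and for $t\ge 0$, $|\partial_{g_j}\tr(\cdots)|\le \frac{t}{\sqrt m}\norm{[A_i,A_j]}\,\tr e^{t\widetilde H}$. This gives
\[
|F'(t)-tF(t)|\le \frac{t^3}{2m^2}\sum_{i,j}\norm{[A_i,A_j]}\,F(t)\le \frac{t^3 h_{\mathrm{comm}}}{m}\,F(t).
\]
Integrating $(\log F)'$ then yields the claim with the explicit constant $c_1=1$.

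Only routine justifications remain. Differentiating under $\E$ and applying Gaussian integration by parts to these trace functionals is legitimate because $\norm{e^{t\widetilde H}}\le \exp\bigl(|t|m^{-1/2}\sum_i|g_i|\bigr)$ has all moments. Also, $F>0$, so $\log F$ is well defined.
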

\begin{corollary}[SYK annealed exponential lower bound]\label{cor:annealed-syk}
    For all $0 \leq \beta \cJ \leq 1/\sqrt{c_1}$ and sufficiently large $n$,
    \begin{align}
        \log \E\, Z(\beta) \geq \frac{n}{2}\log 2 + \frac{\beta^2}{q2^{q+3}} n.
    \end{align}
\end{corollary}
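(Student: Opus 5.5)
The plan is to apply Theorem~\ref{thm:ext-annealed} directly to a rescaled SYK Hamiltonian, and then convert the normalized trace into the unnormalized partition function $Z(\beta)=2^{n/2}\tr e^{-\beta H}$.

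\textbf{Rescaling.} I set $m=\binom{n}{q}$, index the operators by $I\in\binom{[n]}{q}$, and put $A_I=2^{q/2}\psi_I$. Since $\psi_i^2=1/2$ and the phase $i^{q/2}$ makes $\psi_I$ Hermitian, we have $\psi_I^2=2^{-q}$, hence $A_I^2=I$. Writing $J_I=\sigma g_I$ with $g_I\sim\cN(0,1)$ gives
\begin{align}
H=\sigma 2^{-q/2}\sqrt m\,\widetilde H .
\end{align}
Thus $-\beta H=t\widetilde H$ with $t=-\beta\sigma 2^{-q/2}\sqrt m$. Only $t^2$ enters the bound, so the sign is irrelevant. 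We get
\begin{align}
t^2=\beta^2\sigma^2 m 2^{-q}=\beta^2\frac{(q-1)!\binom{n}{q}}{n^{q-1}2^q}=\frac{\beta^2 n}{q2^q}(1-o(1)).
\end{align}

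\textbf{Commutator count.} For $I,J\in\binom{[n]}{q}$, the strings $A_I$ and $A_J$ commute unless $|I\cap J|$ is odd, in which case they anticommute and $\|[A_I,A_J]\|=2$. Any $J$ with odd intersection meets $I$, so the number of such $J$ is at most $q\binom{n-1}{q-1}=q^2m/n$. Therefore
\begin{align}
h_{\mathrm{comm}}\le \tfrac12\cdot 2\cdot \frac{q^2m}{n}=\frac{q^2m}{n}.
\end{align}
This gives
\begin{align}
\frac{c_1t^2h_{\mathrm{comm}}}{2m}\le \frac{c_1 t^2q^2}{2n}\le \frac{c_1\beta^2 q}{2^{q+1}}=\frac{c_1(\beta\cJ)^2}{2}\le\frac12,
\end{align}
using $\cJ^2=q/2^q$ in the appendix normalization and the hypothesis $\beta\cJ\le 1/\sqrt{c_1}$.

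\textbf{Conclusion.} Theorem~\ref{thm:ext-annealed} then yields
\begin{align}
\log\E\,\tr e^{-\beta H}\ge \frac{t^2}{4}=\frac{\beta^2 n}{q2^{q+2}}(1-o(1))\ge\frac{\beta^2 n}{q2^{q+3}}
\end{align}
for sufficiently large $n$. Adding $\log 2^{n/2}=\frac n2\log 2$ gives the claim.

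The main point needing care is the commutator count, namely bounding the number of odd-overlap sets by $q\binom{n-1}{q-1}$. Everything else is bookkeeping with the constants, using $\binom{n}{q}(q-1)!/n^{q-1}=n/q\,(1-o(1))$.
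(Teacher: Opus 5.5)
Your proposal is correct and follows essentially the same route as the paper: rescale to $A_I = 2^{q/2}\psi_I$ with $A_I^2 = I$, bound $h_{\mathrm{comm}}$ by $q\binom{n-1}{q-1}$ via the odd-overlap count, use $\beta\cJ \le 1/\sqrt{c_1}$ (with $\cJ^2 = q/2^q$) to make the correction factor at least $1/2$, and finish with $\sigma^2\binom{n}{q} = \frac{n}{q}(1-o(1))$ and the $2^{n/2}$ normalization. Your bookkeeping is if anything slightly cleaner than the paper's, which writes the rescaling as $A_I = (2i)^{q/2}\psi_I$.
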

\begin{proof}
    We set for $m = \binom{n}{q}$ and $g_a \sim \cN(0, 1)$
    \begin{align}
        \tilde H = \frac{1}{\sqrt m} \sum_{a=1}^m g_a A_{I_a}, \quad A_I = (2i)^{q/2}\psi_I
    \end{align}
    so $H = \lambda \tilde H$ for $\lambda = \sigma 2^{-q/2} \sqrt m$. Since $\norm{[A_I, A_J]} \leq 2$ if $I \cap J \neq \emptyset$ and is zero otherwise,
    \begin{align}
        h_\mathrm{comm} = \frac{1}{2} \sup_I \sum_J \norm{[A_I, A_J]} \leq \sup_I \#\{J\,:\,I\cap J \neq \emptyset\} \leq q \binom{n-1}{q-1}.
    \end{align}
    Theorem~\ref{thm:ext-annealed} thus gives, using $\binom{n-1}{q-1}\sigma^2 \leq 1$,
    \begin{align}
        \E\tr e^{-\beta H} \geq \exp[\frac{\beta^2 \lambda^2}{2}\lr{1 - \frac{c_1 \beta^2 \lambda^2 q}{2m} \binom{n-1}{q-1}}] \geq \exp[\frac{\beta^2 \lambda^2}{2}\lr{1 - \frac{c_1 q}{2^{q+1}} \beta^2}].
    \end{align}
    Hence, for all $\beta \leq \sqrt{\frac{2^q}{c_1 q}} = 1/\cJ\sqrt{c_1}$, we have using $\sigma^2 \binom{n}{q} \geq \frac{n}{2q}$ for sufficiently large $n$ that $\lambda^2 \geq n/q2^{q+1}$ and hence
    \begin{align}
        \E\tr e^{-\beta H} \geq \exp[\frac{\beta^2 \lambda^2}{4}] \geq \exp[\frac{\beta^2 n}{q2^{q+3}}].
    \end{align}
\end{proof}

\begin{lemma}[Negative Gibbs energy density]\label{lem:neg-energy}
    For any constant $\beta > 0$, there exists a constant $\kappa > 0$ such that
    \begin{align}
        \Tr(H\rho_\beta) \leq -\kappa n
    \end{align}
    with probability $1-o(1)$.
\end{lemma}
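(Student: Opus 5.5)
The plan is to combine three ingredients: convexity of $\log Z$ in $\beta$, the self-averaging bound of Theorem~\ref{thm:ext-selfavg}, and Gaussian concentration of $\log Z(\beta)$ around its mean. The identity $\Tr(H\rho_\beta) = -\partial_\beta \log Z(\beta)$ holds, and $\log Z$ is convex in $\beta$. Hence for any $0<\beta'<\beta$,
\begin{align}
    \Tr(H\rho_\beta) \leq -\frac{\log Z(\beta) - \log Z(\beta')}{\beta-\beta'},
\end{align}
because the derivative at $\beta$ dominates the secant slope on $[\beta',\beta]$. It therefore suffices to show that, with probability $1-o(1)$, $\log Z(\beta)-\log Z(\beta') \geq c n$ for a constant $c>0$.

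To get this, I would pick $\beta' = \beta/2$, with both values real and nonnegative, and work with the annealed quantities. Theorem~\ref{thm:ext-selfavg} gives $\E\log Z(\beta) = \log \E Z(\beta) + O(n^{1-q/2})$, and the same holds at $\beta'$. For the annealed gap I would use monotonicity: $\log\E Z$ is convex in $\beta$ and even, since $J\to -J$ is a symmetry of the disorder. So $\log \E Z(\beta)-\log\E Z(\beta/2) \geq \log\E Z(\beta_0) - \log \E Z(0)$ for a suitable small $\beta_0 \leq \beta/2$, and Corollary~\ref{cor:annealed-syk} makes this last difference at least $\beta_0^2 n/(q2^{q+3})$. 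More directly, convexity with zero derivative at $0$ makes the derivative of $\log\E Z$ nondecreasing. Applying convexity of $\log\E Z$ on the interval $[0,\beta_0]$, with $\beta_0=\min\{\beta/2, 1/(\cJ\sqrt{c_1})\}$, shows the derivative at $\beta_0$ is at least $\beta_0 n/(q2^{q+3})$. Integrating over $[\beta/2,\beta]$ then gives an annealed gap of at least $(\beta/2)\cdot\beta_0 n/(q2^{q+3}) = \Omega(n)$.

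Next I transfer this to the random instance by concentration. The map $J\mapsto \log Z(\beta)$ for real $\beta$ is Lipschitz in the standard Gaussian coordinates, with constant $\sigma\beta\,\|(\Tr(\rho_\beta\psi_I))_I\|_2 \leq \sigma\beta\sqrt{\binom{n}{q}2^{-q}} = O(\sqrt n)$. The bound $|\Tr(\rho\psi_I)|\leq 2^{-q/2}$ alone suffices here. Gaussian concentration then gives $|\log Z(\beta)-\E\log Z(\beta)| \leq n^{3/4}$ with probability $1-e^{-\Omega(\sqrt n)}$, and likewise at $\beta/2$.

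Combining the three steps, with probability $1-o(1)$ I get $\log Z(\beta)-\log Z(\beta/2) \geq \Omega(n) - O(n^{3/4})$. The secant inequality then yields $\Tr(H\rho_\beta)\leq -\kappa n$ with $\kappa = \beta_0/(q2^{q+3})$ up to constants. I expect the main obstacle to be extracting a linear-in-$n$ annealed gap for arbitrary constant $\beta$, since Corollary~\ref{cor:annealed-syk} only applies for small $\beta\cJ$. The convexity-of-derivative argument above is meant to handle this: small-$\beta$ growth forces the slope to stay large at all larger $\beta$. I need to check that $\log \E Z$ is genuinely convex, which holds as the log of an expectation of log-convex functions by Hölder, and that its derivative at $0$ vanishes, which follows because $\tr H$ has mean zero.
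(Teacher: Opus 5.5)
Your argument is correct, but it handles large $\beta$ by a different route than the paper.

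The paper works at the single point $\beta_*=\min\{\beta,1/(\cJ\sqrt{c_1})\}$. It takes the secant of $\phi(\beta)=\log\Tr e^{-\beta H}$ from $0$, where $\phi(0)=\frac n2\log 2$ is deterministic. Corollary~\ref{cor:annealed-syk} and Theorem~\ref{thm:ext-selfavg} are therefore needed only inside the small-$\beta$ window, and concentration only at $\beta_*$. Larger $\beta$ are then handled instance by instance: $\phi''\geq 0$ makes $-\Tr(H\rho_\beta)$ nondecreasing in $\beta$.

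You instead transport the small-$\beta$ growth to the target temperature at the annealed level, using convexity and evenness of $\log\E Z$. This obliges you to verify that convexity, to invoke self-averaging at an arbitrary constant $\beta$, and to concentrate at two points. Invoking self-averaging there is legitimate, since Theorem~\ref{thm:ext-selfavg} is stated for all constant $\beta\geq 0$. Still, the paper's instance-wise monotonicity is the shorter path. Even within your scheme, a secant over $[0,\beta]$ would remove the second concentration step.

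Where your route is genuinely leaner is the concentration itself. The paper bounds the Lipschitz constant by $O(n^{1/2-q/4})$ via the commutation index (Theorem~\ref{thm:comm}). Your trivial bound $|\Tr(\rho_\beta\psi_I)|\leq 2^{-q/2}$ gives $O(\sqrt n)$, which already yields $o(n)$ fluctuations with probability $1-e^{-\Omega(\sqrt n)}$.

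The commutation index cannot be dispensed with entirely, however. Your crude constant only gives $\log\E Z-\E\log Z\leq \frac{\beta^2}{2}\sigma^2\binom{n}{q}2^{-q}$, which is comparable to (at leading order, equal to) the annealed gain itself. So the quenched lower bound still rests on Theorem~\ref{thm:ext-selfavg}, as your proof correctly assumes.
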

\begin{proof}
    By Corollary~\ref{cor:annealed-syk} and Theorem~\ref{thm:ext-selfavg}, we have for $0 \leq \beta \cJ \leq 1/\sqrt{c_1}$ that
    \begin{align}\label{eq:avg-logZ-lb}
        \E \log Z(\beta) \geq \frac{n}{2}\log 2 + \frac{\beta^2}{q2^{q+3}}n.
    \end{align}
    Differentiating with respect to the standard Gaussian disorder further gives
    \begin{align}
        \norm{\nabla \log Z(\beta)}_2^2 = \beta^2 \sigma^2 \sum_{I \in \binom{[n]}{q}} \Tr(\psi_I \rho_\beta)^2 = O(n^{1-q/2})
    \end{align}
    by Theorem~\ref{thm:comm}, so
    \begin{align}
        \log Z(\beta) \geq \frac{n}{2}\log 2 + \frac{\beta^2}{q2^{q+4}}n + o(n)
    \end{align}
    holds with probability $1-o(1)$ by Gaussian concentration. Finally, observe that
    \begin{align}
        \phi(\beta) = \log \Tr e^{-\beta H}
    \end{align}
    is convex since
    \begin{align}
        \phi'(\beta) = -\Tr(H\rho_\beta), \quad \phi''(\beta) = \Var_{\rho_\beta}(H) \geq 0
    \end{align}
    and thus with probability $1-o(1)$,
    \begin{align}
        -\Tr(H\rho_\beta) \geq \frac{\phi(\beta)-\phi(0)}{\beta} \geq \frac{\beta}{q2^{q+4}}n.
    \end{align}
    Since $-\Tr(H \rho_\beta)$ is nondecreasing, the lower bound $\frac{1/\cJ\sqrt{c_1}}{q2^{q+4}}n$ holds for any $\beta \cJ > 1/\sqrt{c_1}$.
\end{proof}

\subsection{Circuit complexity}

Remark~\ref{rem:obstruct}(1) follows from an argument similar to those presented in~\cite{anschuetz2025strongly}.

\begin{theorem}[Purification complexity of the SYK Gibbs state]\label{thm:purification-complexity}
    For any constant $\beta > 0$, there exists a constant $\epsilon > 0$ such that the minimum number of 2-local gates from a fixed finite universal gate set needed to prepare an $\epsilon$-approximation in trace distance of a purification of $\rho_\beta$, using arbitrary ancillas initialized in a product state, is $\Omega(n^{1+q/2}/\log n)$ with probability $1-o(1)$.
\end{theorem}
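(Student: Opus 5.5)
The plan is to reduce circuit complexity to energy: a state prepared by few gates can only reach energy density close to zero, while the Gibbs state has energy $\le -\kappa n$ by Lemma~\ref{lem:neg-energy}. The first step fixes $\beta>0$ and conditions on the event, of probability $1-o(1)$, that $\Tr(H\rho_\beta)\le -\kappa n$. On the same event we may assume $\norm{H} = O(n)$, via the matrix Chernoff bound \eqref{eq:syknorm}. Now let $\ket{\Phi}$ be within trace distance $\epsilon$ of a purification of $\rho_\beta$, and let $\sigma$ be its reduced state on the system. Since partial trace is contractive, $\norm{\sigma-\rho_\beta}_1\le 2\epsilon$, so
\begin{align}
    \Tr(H\sigma)\le -\kappa n + 2\epsilon\norm{H} \le -\tfrac{\kappa}{2} n
\end{align}
once $\epsilon$ is a small enough constant. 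It therefore suffices to show that, with probability $1-o(1)$ over the disorder, every state whose purification is prepared by $G=o(n^{1+q/2}/\log n)$ gates from the fixed gate set, with product ancillas, has energy at least $-\tfrac{\kappa}{2}n$.

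The second step is a union bound over circuits. The number of circuits with $G$ gates on $N$ qubits is at most $(cN^2)^G$. Ancillas untouched by any gate can be discarded, so $N\le n+2G$ and the count is $\exp[O(G\log n)]$. Fix one circuit and let $\sigma$ be its reduced state on the system. Then $\Tr(H\sigma)=\sum_I J_I\Tr(\psi_I\sigma)$ is a centered Gaussian with variance
\begin{align}
    \sigma^2\sum_I \Tr(\psi_I\sigma)^2 \le \sigma^2\binom{n}{q}\Delta = O(n\cdot n^{-q/2}),
\end{align}
by Theorem~\ref{thm:comm}. Since $\Delta$ is a supremum over pure states and $\Tr(\psi_I\cdot)^2$ is convex, the same bound holds for mixed states. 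The Gaussian tail then gives
\begin{align}
    \pr{\Tr(H\sigma)\le -\tfrac{\kappa}{2}n}\le \exp[-\Omega(n^{2}/n^{1-q/2})] = \exp[-\Omega(n^{1+q/2})].
\end{align}
Taking a union bound over $\exp[O(G\log n)]$ circuits yields failure probability $o(1)$ whenever $G\le c' n^{1+q/2}/\log n$ for a small constant $c'$. This gives the claimed lower bound.

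The main obstacle is making the union bound legitimate, because the event in Lemma~\ref{lem:neg-energy} and the per-circuit tail events are coupled through the same disorder. I handle this by intersecting events rather than conditioning: the bad event for the lower bound is contained in the union of three events. These are (i) the failure of Lemma~\ref{lem:neg-energy}, (ii) the failure of the norm bound, and (iii) the existence of a small circuit with energy $\le -\tfrac{\kappa}{2}n$. Events (i) and (ii) have probability $o(1)$ and $e^{-\Omega(n)}$. Event (iii) is bounded unconditionally by the union bound above, so no conditioning on the disorder is needed.

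A secondary point is that the circuit counting must account for the order of gates and the choice of qubits, but this only changes constants inside the $O(G\log n)$ exponent. The finite gate set ensures there is no continuous family of circuits, so no net argument is needed.
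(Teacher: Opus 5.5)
Your proposal is correct and follows the same route as the paper: the negative Gibbs energy density from Lemma~\ref{lem:neg-energy}, a commutation-index variance bound that gives an $\exp[-\Omega(n^{1+q/2})]$ tail for the energy of each fixed circuit output, a union bound over $\exp[O(G\log n)]$ circuits, and a transfer to trace distance via $\norm{H}=O(n)$. Your split of the failure event into three disorder events handles the disorder dependence more cleanly than the paper's write-up, which applies the Gaussian tail to the purification of $\rho_\beta$ itself and states the final trace-distance inequality loosely.
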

\begin{proof}
    Let $\ket{\Psi}$ be a purification of $\rho_\beta$. Since $\bra{\Psi}(-H \otimes I)\ket{\Psi}$ is a centered Gaussian variable with variance
    \begin{align}
        \Var(\bra{\Psi}(-H \otimes I)\ket{\Psi}) =  \sum_{I \in \binom{[n]}{q}} \sigma^2 \Tr(\psi_I\rho_\beta)^2 \leq \frac{n}{q} \Delta
    \end{align}
    by Theorem~\ref{thm:comm}, we also have for any constant $\delta$ that
    \begin{align}
        \pr{\bra{\Psi}(-H \otimes I)\ket{\Psi} \geq \delta n} \leq \exp[-\Omega(n^{1+q/2})].
    \end{align}
    For a fixed finite set of 2-local gates, the number of circuits with $G$ gates is at most $\exp[O(G \log(n+G))]$. By a union bound, the probability that a circuit with $G$ gates outputs a state $\ket{\Phi}$ with $\bra{\Phi}(-H \otimes I)\ket{\Phi} \geq \delta n$ is at most
    \begin{align}
        \exp[O(G \log(n+G)) - \Omega(n^{q/2+1})].
    \end{align}
    Hence, for some sufficiently small $c > 0$, $G \leq c n^{q/2+1}/\log n$ makes this probability $o(1)$. By Lemma~\ref{lem:neg-energy}, however, we know $\ket{\Psi}$ must satisfy for some constant $\kappa > 0$,
    \begin{align}
        \bra{\Psi}(-H \otimes I)\ket{\Psi} \geq \kappa n.
    \end{align}
    In particular, any state $\rho$ with $\Tr(H\rho) \leq -\delta n$ satisfies
    \begin{align}\label{eq:deltan}
        -\delta n \leq \Tr(H \rho) \leq \Tr(H \rho_\beta) + \norm{H}\norm{\rho-\rho_\beta}_1 \implies \norm{\rho-\rho_\beta}_1 \geq \frac{(\kappa-\delta) n}{\norm{H}}.
    \end{align}
    Since $\norm{H} = O(n)$ for the SYK model with probability $1-o(1)$ (see, e.g., \eqref{eq:syknorm} for a proof earlier in the text), we conclude that the trace distance can be made $\Omega(1)$ by choosing $\delta = \kappa/2$.
\end{proof}

\subsection{Gaussian states}

Remark~\ref{rem:obstruct}(2) follows from the following result of~\cite{hastings2023field} and a similar argument as in Theorem~\ref{thm:purification-complexity}.

\begin{theorem}[Few-Gaussian energy bound~\cite{hastings2023field}]\label{thm:ext-fewgauss}
    With probability $1-o(1)$ over the disorder, every state $\ket{\varphi}$ that is a superposition of at most $\exp[o(n^{1/4})]$ mutually orthogonal Gaussian states satisfies
    \begin{align}
        \abs{\bra{\varphi}H\ket{\varphi}} = o(1)
    \end{align}
    for the $q=4$ SYK Hamiltonian $H$.
\end{theorem}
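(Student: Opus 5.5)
Since this is an imported result, the plan is to reconstruct the argument of~\cite{hastings2023field} in three layers. First, bound the energy of a single Gaussian state. Second, bound the off-diagonal matrix elements between two orthogonal Gaussian states. Third, combine the $D=\exp[o(n^{1/4})]$ terms of a superposition.

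For a single fermionic Gaussian state with covariance matrix $M$ (real antisymmetric, $\norm{M}\le 1$), Wick's theorem gives $\bra{\phi}\psi_I\ket{\phi}=c_q\,\mathrm{Pf}(M_I)$. Hence for $q=4$ the energy is the contraction
\begin{align}
\bra{\phi}H\ket{\phi}=c_4\sum_{ijkl}J_{ijkl}\,(M_{ij}M_{kl}-M_{ik}M_{jl}+M_{il}M_{jk}).
\end{align}
I would flatten $J$ into an $n^2\times n^2$ matrix $\mathbf J$, so that this equals (up to constants) $\mathrm{vec}(M)^{T}\mathbf J\,\mathrm{vec}(M)$. Gaussian matrix concentration, as in Theorem 4.1.1 of~\cite{tropp2015introduction} and analogous to \eqref{eq:syknorm}, then controls all $M$ simultaneously, with no net needed.

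This crude bound, $\norm{\mathbf J}\,\norm{M}_F^2$, only gives $O(\sigma n\cdot n)=O(\sqrt n)$. That is not $o(1)$, so the refinement is the main obstacle. What I would try first is Hastings' sum-of-squares/field-theory refinement. It splits $\mathrm{vec}(M)$ into a low-rank part, on which $\mathbf J$ acts with norm $O(\sigma\sqrt{n}\,\mathrm{polylog})$ by a union bound over a net of dimension $O(nr)$, and a spread part, whose contraction with the antisymmetrized Gaussian tensor concentrates better because $\sum_I\mathrm{Pf}(M_I)^2\le\binom{n/2}{2}$, giving variance $O(\sigma^2 n^2)=O(1/n)$. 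I would accept the paper's normalization, which makes this feasible, and trust the threshold $n^{1/4}$ to come out of optimizing the rank cutoff.

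For superpositions $\ket\varphi=\sum_{a\le D}\alpha_a\ket{\phi_a}$ with orthogonal $\ket{\phi_a}$, I would expand $\bra\varphi H\ket\varphi=\sum_{a,b}\bar\alpha_a\alpha_b\bra{\phi_a}H\ket{\phi_b}$. The cross terms satisfy a generalized Wick formula: $\bra{\phi_a}\psi_I\ket{\phi_b}$ is the overlap $\braket{\phi_a}{\phi_b}$ times a Pfaffian in a complex "transition covariance". Because $\braket{\phi_a}{\phi_b}=0$ this formula degenerates, and I would instead use a limiting version that is again a bounded-norm degree-$q$ tensor in a transition matrix. Each cross term is then bounded by the same single-state estimate, giving $\abs{\bra\varphi H\ket\varphi}\le(\sum_a|\alpha_a|)^2\max_{a,b}\abs{\bra{\phi_a}H\ket{\phi_b}}\le D\cdot\max_{a,b}\abs{\bra{\phi_a}H\ket{\phi_b}}$. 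The argument closes provided the per-pair bound decays like $\exp[-\Omega(n^{1/4})]$. I would obtain that decay not from the mean bound but from Gaussian tails: a union bound over a net of pairs whose log-size is absorbed by a tail exponent of order $n^{1/4}$. Getting this stretched-exponential tail uniformly, rather than the polynomial bound, is where I expect the real difficulty to lie.
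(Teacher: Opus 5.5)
The paper gives no proof of this statement; it is imported from~\cite{hastings2023field}. So I can only assess your plan on its own terms, and it has two genuine problems.

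\textbf{First, the target is misread.} In this normalization ($\norm{H}=\Theta(n)$, cf.~\eqref{eq:syknorm}), a literal $o(1)$ bound already fails for $D=1$.
\begin{itemize}
\item Over pure Gaussian states with covariance $M$, $\bra{\phi}H\ket{\phi}$ is a centered Gaussian process of variance $\asymp 1/n$.
\item Its canonical distance is at least of order $n^{-1}\min_\pm\norm{M\mp M'}_F$, on a manifold of dimension $\asymp n^2$.
\item Sudakov minoration over an $\exp[\Theta(n^2)]$-size packing therefore puts the supremum at order $\sqrt n$.
\end{itemize}
This matches the $O(n^{-1/2})$ Gaussian approximation ratio of~\cite{hastings2022optimizing}, and the ratio is sharp. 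Hence your crude flattening bound cannot be improved, and the low-rank/spread refinement is chasing something false. The $O(1/n)$ variance holds for one fixed $M$; uniformity over the $\asymp n^2$-dimensional family costs exactly the factor you hoped to remove.

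The statement must be read as $o(n)$, i.e.\ vanishing energy density. That is all the subsequent corollary uses, since it only excludes energies $\leq -\delta n$. For that target your first layer is already complete, up to the $\sqrt{\log n}$ from matrix concentration.

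\textbf{Second, and fatally, the superposition step.} Bounding $\abs{\bra{\varphi}H\ket{\varphi}}\leq D\max_{a,b}\abs{\bra{\phi_a}H\ket{\phi_b}}$ loses $D$ linearly. The stretched-exponential per-pair decay you would need to compensate is false, not merely hard.
\begin{itemize}
\item The diagonal entries are single-Gaussian energies. They are of order $n^{-1/2}$ even for one fixed $\phi_a$, and of order $\sqrt n$ uniformly.
\item Off-diagonal entries between orthogonal Gaussian states are only polynomially small. Let $\phi_a$ be the Fock vacuum of the pairing $(1,2),(3,4),\dots$ and $J=\{1,3,5,7\}$. Then $\phi_b=2^{q/2}\psi_J\phi_a$ is Gaussian and orthogonal to $\phi_a$. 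But $\bra{\phi_a}H\ket{\phi_b}$ contains the term $2^{-q/2}J_J$, independent of the rest, so it is typically of order $\sigma\asymp n^{-3/2}$.
\end{itemize}
Even with the $o(n)$ target and the $O(\sqrt{n\log n})$ per-pair bound, your inequality only covers $D=o(\sqrt{n/\log n})$.

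What is missing is an amplification in which $D$ enters sublinearly. For example, orthogonality gives $\sum_a\abs{\alpha_a}^2=1$. Jensen on the spectral measure and Cauchy--Schwarz for the positive operator $H^{2k}$ then give
\begin{align*}
\abs{\bra{\varphi}H\ket{\varphi}}^{2k}\leq\bra{\varphi}H^{2k}\ket{\varphi}\leq\Big(\sum_a\abs{\alpha_a}\Big)^2\max_a\bra{\phi_a}H^{2k}\ket{\phi_a}\leq D\max_a\bra{\phi_a}H^{2k}\ket{\phi_a}.
\end{align*}
So the superposition costs only $D^{1/(2k)}=1+o(1)$ when $k\asymp n^{1/4}$ and $\log D=o(n^{1/4})$.

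The problem then becomes a bound $\bra{\phi}H^{2k}\ket{\phi}^{1/(2k)}=o(n)$, uniform over Gaussian $\phi$, for $k$ up to order $n^{1/4}$. That is a high-moment estimate in Gaussian states, which your first-moment Pfaffian analysis never touches. It, not a rank cutoff, is where a threshold of the form $\exp[o(n^{1/4})]$ naturally enters.
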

\begin{corollary}
    For any $\beta > 0$, there exists a constant $\epsilon > 0$ such that any $\epsilon$-approximation in trace distance of the Gibbs state $\rho_\beta$ cannot be written as
    \begin{align}
        \rho_\beta = \sum_a p_a \ketbra{\varphi_a}, \quad p_a > 0, \quad \sum_a p_a = 1
    \end{align}
    where each $\ket{\varphi_a}$ is a superposition of at most $\exp[o(n^{1/4})]$ mutually orthogonal Gaussian states.
\end{corollary}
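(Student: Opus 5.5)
The plan is the same energy-versus-state-approximation argument as in Theorem~\ref{thm:purification-complexity}, with Theorem~\ref{thm:ext-fewgauss} replacing the circuit-counting union bound. We work with $q=4$, where Theorem~\ref{thm:ext-fewgauss} applies. Fix $\beta>0$. Lemma~\ref{lem:neg-energy} gives a constant $\kappa>0$ with $\Tr(H\rho_\beta)\le -\kappa n$ with probability $1-o(1)$. The bound \eqref{eq:syknorm} gives $\norm{H}\le c_0 n$ for an explicit constant $c_0$ with probability $1-e^{-\Omega(n)}$. Theorem~\ref{thm:ext-fewgauss} holds with probability $1-o(1)$. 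Intersecting these three events, we still have probability $1-o(1)$.

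On this event, suppose $\rho=\sum_a p_a\ketbra{\varphi_a}$, where each $\ket{\varphi_a}$ is a superposition of at most $\exp[o(n^{1/4})]$ mutually orthogonal Gaussian states. The key point is that the few-Gaussian bound is \emph{uniform} over all such states, since it is a single high-probability event over the disorder. Therefore every term satisfies $\abs{\bra{\varphi_a}H\ket{\varphi_a}}=o(1)$. Because the $p_a$ form a probability distribution, linearity of the trace gives
\begin{align}
\abs{\Tr(H\rho)}\le\sum_a p_a\abs{\bra{\varphi_a}H\ket{\varphi_a}}=o(1).
\end{align}
In particular, $\Tr(H\rho)\ge -\kappa n/2$ for large $n$.

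Next, repeat the H\"older step \eqref{eq:deltan}:
\begin{align}
\kappa n/2 \le \Tr(H\rho)-\Tr(H\rho_\beta)\le \norm{H}\,\norm{\rho-\rho_\beta}_1 \le c_0 n\,\norm{\rho-\rho_\beta}_1 .
\end{align}
This gives $\norm{\rho-\rho_\beta}_1\ge \kappa/(2c_0)$. Choosing $\epsilon<\kappa/(4c_0)$, so that trace distance $\tfrac12\norm{\cdot}_1$ is at least $2\epsilon>\epsilon$, no such $\rho$ lies within $\epsilon$ of $\rho_\beta$. The statement as written says ``$\rho_\beta=\dots$'', but the intended reading is that any $\epsilon$-approximation cannot have this form, and that is what this argument proves.

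The main subtlety is the order of quantifiers. The constant $\kappa$ depends on $\beta$ only and not on $n$, which Lemma~\ref{lem:neg-energy} provides even for $\beta\cJ>1/\sqrt{c_1}$ via monotonicity of the energy. The $o(1)$ bound must also hold simultaneously for all decompositions, which is exactly what the uniform ``every state'' form of Theorem~\ref{thm:ext-fewgauss} gives. Beyond these two points, everything is routine.
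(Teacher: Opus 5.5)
Your proposal is correct and is essentially the paper's proof. The paper also combines Lemma~\ref{lem:neg-energy} with the uniform few-Gaussian bound of Theorem~\ref{thm:ext-fewgauss}, so that $\sum_a p_a \bra{\varphi_a}H\ket{\varphi_a}$ cannot reach $-\delta n$; it then union bounds over the bad events and finishes with the H\"older step \eqref{eq:deltan} using $\norm{H}=O(n)$, and you simply supply the details. One trivial slip: $\epsilon<\kappa/(4c_0)$ gives trace distance at least $\kappa/(4c_0)>\epsilon$, not at least $2\epsilon$, but trace distance $>\epsilon$ is all you need.
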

\begin{proof}
    Theorem~\ref{thm:ext-fewgauss} states that no constant $\delta > 0$ exists such that the states $\ket{\varphi_a}$ satisfy
    \begin{align}
        \sum_a p_a \bra{\varphi_a}H\ket{\varphi_a} \leq -\delta n,
    \end{align}
    contradicting Lemma~\ref{lem:neg-energy} for any constant $\beta > 0$. We union bound over the $o(1)$ bad events where the contradiction does not hold; the remaining proof follows exactly as \eqref{eq:deltan} in Theorem~\ref{thm:purification-complexity}.
\end{proof}

\subsection{Instance-to-instance fluctuations}
To show Remark~\ref{rem:obstruct}(3), we give a fairly simple probabilistic argument.

\begin{theorem}[Instance-to-instance inverse-polynomial fluctuations of local expectations]
    For any $\beta > 0$, let $\rho_\beta, \rho_\beta'$ be thermal states of two independent SYK instances. Then with probability $1-o(1)$, it holds that
    \begin{align}
        \max_{I \in \binom{[n]}{q}} \abs{\Tr(\psi_I \rho_\beta) - \Tr(\psi_I \rho_\beta')} = \Omega(n^{-(q-1)/2}).
    \end{align}
\end{theorem}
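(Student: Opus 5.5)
The plan is to reduce the maximum to a single well-chosen coordinate and use an anti-concentration argument built on the independence of the two instances. Write $f_I(J) = \Tr(\psi_I \rho_\beta(J))$. For two independent disorder vectors $J, J'$, the difference $f_I(J) - f_I(J')$ is a difference of two i.i.d.\ random variables. For any $c>0$,
\begin{align}
    \pr{|f_I(J)-f_I(J')| < c} \leq \sup_x \pr{|f_I(J) - x| < c} \leq \sup_x \pr{|f_I(J) - x| < c},
\end{align}
where the first inequality follows by conditioning on $J'$. It therefore suffices to show that $f_I(J)$ is not concentrated on any interval of length $2c$ with $c = \epsilon n^{-(q-1)/2}$. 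I would show this by isolating the dependence of $f_I$ on its own coupling $J_I$.

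The key step is to fix all couplings except $J_I$ and regard $g(x) = \Tr(\psi_I \rho_\beta)$ as a function of $x = J_I$. Then $g'(x) = -\beta\bigl(\langle \psi_I; \psi_I\rangle_{\mathrm{KMS}} - \Tr(\psi_I\rho_\beta)^2\bigr)$ is $-\beta$ times a Duhamel (Kubo--Mori) variance. This is the same Duhamel formula as in the proof of Theorem~\ref{thm:obs}. Hence $g$ is nonincreasing, and $|g'| \leq \beta\norm{\psi_I}^2 = \beta 2^{-q}$.

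For the lower bound, the Kubo--Mori variance dominates a quantity comparable to the ordinary variance $\Tr(\psi_I^2\rho)-\Tr(\psi_I\rho)^2 = 2^{-q} - \Tr(\psi_I\rho)^2$, up to factors depending on $\beta\norm{H}$. That dependence is too lossy when $\norm{H}=\Theta(n)$. Instead I would use the Bogoliubov inequality / Falk--Bruch bound: the Duhamel variance is at least $V\,\phi(\cdot)$, where $V$ is the ordinary variance and the argument of $\phi$ involves $\Tr(\rho[\psi_I,[H,\psi_I]])$. That double commutator involves only the $O(n^{q-1})$ terms overlapping $I$, each with coupling of order $n^{-(q-1)/2}$. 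Its typical size is therefore $O(1)$ with high probability, although this needs a concentration argument over couplings.

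Separately, $\Tr(\psi_I\rho_\beta)^2 \leq \binom{n}{q}\Delta$ on average over $I$, by Theorem~\ref{thm:comm}. So for most $I$ we have $\Tr(\psi_I\rho)^2 = o(1)$ and $V \geq 2^{-q-1}$. Combining the two bounds gives $|g'(x)| \geq \kappa > 0$ on a constant-probability set of $x$ values within $O(\sigma)$ of typical values, for many $I$.

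Given $|g'| \geq \kappa$ on an interval, $g$ maps a Gaussian $x\sim\cN(0,\sigma^2)$ to a variable whose density is at most $\frac{1}{\kappa\sqrt{2\pi}\sigma}$ on that range. Therefore
\begin{align}
    \pr{|g(J_I) - x| < c} \leq \frac{2c}{\kappa\sigma} + (\text{failure prob.}).
\end{align}
With $c = \epsilon\sigma$, this is at most a constant strictly below $1$. This gives $\Omega(\sigma) = \Omega(n^{-(q-1)/2})$ fluctuations for a single $I$ with constant probability.

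To upgrade this to probability $1-o(1)$ for the maximum, I would use many nearly independent trials. Restrict to a family of $N\to\infty$ pairwise disjoint sets $I_1,\dots,I_N$. Condition on all couplings except $\{J_{I_s}\}$ and reveal the $J_{I_s}$ one at a time. Each step fails with conditional probability at most $1-p$, so all $N$ steps fail with probability at most $(1-p)^N = o(1)$. A subtlety is that the derivative lower bound must hold uniformly as other $J_{I_s}$ change. Since each such change is of size $O(\sigma)$ and perturbs $\rho$ only slightly, a Lipschitz argument should handle it.

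I expect the main obstacle to be the uniform lower bound on the Duhamel variance, i.e.\ showing $|\partial_{J_I}\Tr(\psi_I\rho_\beta)| \geq \kappa$ at constant $\beta$. Crude bounds lose factors of $e^{-\beta\norm{H}}$. I would first try the Falk--Bruch route with the local double commutator, combined with the commutation-index control of $\Tr(\psi_I\rho)^2$.
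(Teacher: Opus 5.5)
\textbf{Gap: the upgrade to probability $1-o(1)$.} The single-coordinate part of your argument is plausible in outline. The step that carries the actual content of the statement is different: it asks for probability $1-o(1)$ with a \emph{fixed} constant in $\Omega(n^{-(q-1)/2})$, and that step does not go through as described.

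The cross-derivatives $\partial_{J_{I'}}\Tr(\psi_I\rho_\beta) = -\beta\,\mathrm{Cov}_{\mathrm{KM}}(\psi_I,\psi_{I'})$ are a priori bounded only by $\beta 2^{-q}$. That is the same bound you have for the diagonal derivative. So moving one other family coupling $J_{I_{s'}}$ by $O(\sigma)$ can move $\Tr(\psi_{I_s}\rho_\beta)$ by $\Theta(\sigma)$, which is exactly the scale $\epsilon\sigma$ at which you measure anti-concentration. With $N\to\infty$ such moves the shift can reach $O(N\sigma)$, so the proposed Lipschitz argument does not close.

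The sequential revealing is also not a valid filtration argument. The event for $I_s$ depends on the couplings $J_{I_{s'}}$, $s'>s$, which have not yet been revealed. Repairing this would require showing that Kubo--Mori cross-susceptibilities between disjoint $q$-sets are $o(1)$ uniformly along the relevant perturbations. That is a correlation-decay statement at arbitrary constant $\beta$, and your proposal does not supply it.

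Without it, the single-coordinate bound only gives the following: for each $\eta>0$ there is $\epsilon(\eta)>0$ with failure probability at most $\eta$. The failure probability cannot be driven to zero at fixed $\epsilon$. The fraction of $I$ with a large double commutator is controlled only by Markov. Raising the Markov threshold $C_0$ lowers $\kappa$ like $1/C_0$, because the Falk--Bruch factor decays like the inverse of its argument. The resulting bound is $O(1/C_0+\epsilon C_0)$, which is at best $O(\sqrt{\epsilon})$.

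\textbf{Smaller issue, and the paper's route.} Your reason for the double commutator being $O(1)$ is not right. Naive counting gives $O(n^{(q-1)/2})$. The terms are not random-signed, because $\rho_\beta$ is aligned with $J$ (this is why $\Tr(H\rho_\beta)=-\Theta(n)$), so concentration over couplings is the wrong tool. What does work is the following. Set $c_I := \Tr(\rho_\beta[\psi_I,[H,\psi_I]]) = -2^{2-q}\sum_{J:\,|I\cap J|\text{ odd}} J_J\Tr(\psi_J\rho_\beta)$, which satisfies $c_I\geq 0$. Summing over $I$ gives $\sum_I c_I = 2^{2-q}N_{\mathrm{odd}}\,(-\Tr(H\rho_\beta)) \leq 2^{2-q}N_{\mathrm{odd}}\norm{H}$ with $N_{\mathrm{odd}}=\Theta(n^{q-1})$. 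Hence $c_I=O(1)$ on average over $I$. (Also, the average over $I$ of $\Tr(\psi_I\rho_\beta)^2$ is at most $\Delta$, not $\binom{n}{q}\Delta$.)

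The paper avoids per-coordinate anti-concentration altogether. By Lemma~\ref{lem:neg-energy}, $\kappa n \leq -\Tr(H\rho_\beta) = -\sum_I J_I\Tr(\psi_I\rho'_\beta) - \sum_I J_I\bigl(\Tr(\psi_I\rho_\beta)-\Tr(\psi_I\rho'_\beta)\bigr)$. Conditionally on $J'$, the first sum is a centered Gaussian of variance $O(n)$, hence $O(n^{3/4})$. The second is at most $\norm{J}_1\max_I|\Tr(\psi_I\rho_\beta)-\Tr(\psi_I\rho'_\beta)|$ with $\norm{J}_1=O(n^{(q+1)/2})$. This gives a fixed constant with probability $1-o(1)$ in a few lines, and it is the natural way to repair your proof.
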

\begin{proof}
    Denote the length-$\binom{n}{q}$ disorder vectors of independent SYK Hamiltonians $H, H'$ by $J, J'$. Our starting point is Lemma~\ref{lem:neg-energy}, which gives a constant $\kappa > 0$ such that
    \begin{align}
        \kappa n &\leq -\Tr(H\rho_\beta) = -\sum_I J_I \Tr(\psi_I \rho_\beta) = -\sum_I J_I \Tr(\psi_I \rho'_\beta) - \sum_I J_I \lr{\Tr(\psi_I \rho_\beta)-\Tr(\psi_I \rho'_\beta)} \\
        &\leq \abs{\sum_I J_I \Tr(\psi_I \rho'_\beta)} + \norm{J}_1 \max_I \abs{\Tr(\psi_I \rho_\beta)-\Tr(\psi_I \rho'_\beta)}.
    \end{align}
    This gives the lower bound
    \begin{align}\label{eq:maxi}
        \max_I \abs{\Tr(\psi_I \rho_\beta)-\Tr(\psi_I \rho'_\beta)} \geq \frac{\kappa n - \abs{\sum_I J_I \Tr(\psi_I \rho'_\beta)}}{\norm{J}_1}.
    \end{align}
    We proceed to show bounds on $\norm{J}_1$ and $\abs{\sum_I J_I \Tr(\psi_I \rho'_\beta)}$, starting with the former.
    Since $J_I$ are independent Gaussians with variance $(q-1)!/n^{q-1}$, with probability $1-o(1)$ we have $\norm{J}_2 = O(\sqrt n)$ and hence
    \begin{align}\label{eq:j1}
        \norm{J}_1 \leq \sqrt{\binom{n}{q}} \norm{J}_2 = O(n^{(q+1)/2})
    \end{align}
    with probability $1-o(1)$. For the latter, we have by independence of $J, J'$ that conditioned on $J'$, the random variable
    \begin{align}
        \sum_I J_I \Tr(\psi_I \rho'_\beta) = \Tr(H \rho'_\beta)
    \end{align}
    is a centered Gaussian with variance
    \begin{align}
        \sigma^2 \sum_I \Tr(\psi_I \rho'_\beta)^2 \leq \sigma^2 \binom{n}{q} 2^{-q} \leq \frac{n}{q} 2^{-q} = O(n).
    \end{align}
    We conclude by applying the tail bound
    \begin{align}
        \pr{\abs{\Tr(H \rho'_\beta)} > n^{3/4} \,\big|\, J'} \leq \exp[-\Omega(n^{1/2})]
    \end{align}
    under the expectation over $J'$. Combining this with \eqref{eq:j1}, we find from \eqref{eq:maxi} that with probability $1-o(1)$,
    \begin{align}
        \max_I \abs{\Tr(\psi_I \rho_\beta)-\Tr(\psi_I \rho'_\beta)} \geq \frac{\kappa n - O(n^{3/4})}{O(n^{(q+1)/2})} \geq \Omega(n^{-(q-1)/2}).
    \end{align}
\end{proof}
\end{document}